\newcommand{\voc}{\tau}
\newcommand{\Tr}{{\bf t}}
\newcommand{\Fa}{{\bf f}}
\newcommand{\Un}{{\bf u}}
\newcommand{\rul}{\leftarrow}
\newcommand{\D}{D}
\newcommand{\defp}[1]{{\voc^d_{#1}}}
\newcommand{\openp}[1]{{\voc^o_{#1}}}
\newcommand{\la}{\rightarrow}
\newcommand{\inferencerule}[2]{\displaystyle\dfrac {#1}{#2}}
\newcommand{\defin}[1]{\left\{ \begin{array}{l}#1\end{array} \right\}}
\newcommand{\tf}[2]{{#1}^{#2}}
\newcommand{\ignore}[1]{}
\newcommand{\res}[2]{{{#1}\rvert_{#2}}}
\newcommand{\ps}{{\bf LPC(ID)}\xspace}
\newcommand{\gs}{{\bf LK}\xspace}
\newcommand{\minisatid}{M{\sc ini}S{\sc at}(ID)\xspace}
\begin{document}
\title{\ps: A Sequent Calculus Proof System for Propositional Logic Extended with Inductive Definitions}
\author{Ping Hou\inst{1,2} \and Johan Wittocx\inst{2} \and Marc Denecker\inst{2} \\
 \institute {Computer Science Department, Carnegie Mellon University \and
             Department of Computer Science, Katholieke Universiteit Leuven,
   Belgium}}
\maketitle
\begin{abstract}
  The logic FO(ID) uses ideas from the field of logic programming to
  extend first order logic with non-monotone inductive
  definitions. Such logic formally extends logic programming,
  abductive logic programming and datalog, and thus formalizes the
  view on these formalisms as logics of (generalized) inductive
  definitions.  The goal of this paper is to study a deductive
  inference method for PC(ID), which is the propositional fragment of
  FO(ID). We introduce a formal proof system based on the sequent
  calculus (Gentzen-style deductive system) for this logic.  As PC(ID)
  is an integration of classical propositional logic and propositional
  inductive definitions, our sequent calculus proof system integrates
  inference rules for propositional calculus and definitions. We
  present the soundness and completeness of this proof system with
  respect to a slightly restricted fragment of
  PC(ID).
  We also provide some complexity results for PC(ID). By developing
  the proof system for PC(ID), it helps us to enhance the
  understanding of proof-theoretic foundations of FO(ID), and
  therefore to investigate useful proof systems for FO(ID).
\end{abstract}

\section{Introduction}\label{sec:intro}

In this paper, we study deductive methods for the propositional
fragment of FO(ID) \cite{DeneckerT08}. To motivate
this study, we need to say a few words about the origin and the
motivation of FO(ID).

Perhaps the two most important knowledge representation paradigms of
the moment are on the one hand, classical logic-based approaches such
as description logics \cite{DL2002}, and on the other hand, rule-based
approaches based on logic programming and extensions such as Answer
Set Programming and Abductive Logic Programming
\cite{Baral:book2003,Kakas93a}. The latter disciplines are rooted
firmly in the discipline of Non-Monotonic Reasoning
\cite{McCarthy86}. FO(ID) integrates both paradigms in a tight,
conceptually clean manner. The key to integrate ``rules'' into
classical logic (FO) is the observation that natural language, or more
precisely, the informal language of mathematicians, has an informal
rule-based construct: the construct of {\em inductive/recursive definitions}
(IDs).

\newcommand{\valI}{I}
\begin{figure}\begin{definition}
The transitive closure $T_G$ of a directed graph $G$ is defined by induction:
\begin{itemize}
\item $(x,y) \in T_G$ if $(x,y) \in G$;
\item $(x,y) \in T_G$ if for some vertex $z$, $(x,z)\in T_G$ and
  $(z,y)\in T_G$.
\end{itemize}
\end{definition}
\caption{Definition of Transitive closure\label{fig-trans}}
\end{figure}
\begin{figure}\begin{definition}The satisfaction relation $\models$ between $\sigma$-interpretations $I$ and propositional formulas over $\sigma$ is defined by structural induction:
\begin{itemize}
\item [ - ] $\valI \models p$ if $p$ is an atom and $p\in \valI$,
\item [ - ]  $\valI \models \psi \land \phi$ if $\valI\models \psi$ and
  $\valI\models \phi$,
\item [ - ]  $\valI \models \psi \lor \phi$ if $\valI\models \psi$ or
  $\valI\models \phi$,
\item  [ - ] $\valI \models \neg \psi$ if $\valI \not\models \psi$.
\end{itemize}
\end{definition}
\caption{Definition of satisfaction\label{fig-sat}}
\end{figure}

In Figure~\ref{fig-trans} and Figure~\ref{fig-sat}, we displayed two
prototypical examples of the most common forms of inductive
definitions in mathematics: monotone ones, respectively definitions by
induction over a well-founded order. As seen in these figures, both
are frequently represented as a set of informal rules.  These two
forms of inductive definitions are generalized by the concept of {\em
  iterated} inductive definitions (IID) \cite{Feferman81}. Inductive
definitions define their concept by describing how to \emph{construct}
it through a process of iterated application of rules starting from
the empty set. Definitions by induction over a well-founded order are
frequently non-monotone, as illustrated by the non-monotone rule
``$\valI\models\neg\psi$ if $\valI\not\models\psi$'' which derives the
satisfaction of $\neg\psi$ given the non-satisfaction of $\psi$.

Of course,  a definition is not just a set of material
implications. Thus, a sensible scientific research question is to
design a uniform, rule-based formalism for representing these forms of
definitions. Such a study is not only useful as a formal logic study
of the concept of inductive definition but it contributes to
the understanding of rule-based systems and thus, to the study of the
(formal and informal) semantics of logic programming and the
integration of classical logic-based and rule-based approaches to
knowledge representation.

Iterated inductive definitions have been studied in mathematical logic
\cite{Feferman81} but the formalisms there are not rule-based and
require an extremely tedious encoding of rules and well-founded
orderings into one complex formula \cite{DeneckerT08}. In several
papers \cite{Denecker98c,Denecker2001:TOCL,DeneckerT08}, it was argued
that, although unintended by its inventors, the rule-based formalism
of logic programming under the well-founded semantics
\cite{VanGelder91} and its extension to rules with FO-bodies in
\cite{VanGelder93} correctly formalizes the above mentioned forms of
inductive definitions. Stated differently, if we express an informal
inductive definition of one of the above kinds into a set of formal
rules $$\forall \bar{x} (P(\bar{t}) \rul \phi)$$ then the informal
semantics of the original definition matches the well-founded
semantics of the formal rule set. E.g., in a well-founded model of the
following ``literal'' translation of the definition in
Figure~\ref{fig-trans}:
\[ \defin{\forall x, y \ (T_G(x,y)  \rul G(x,y))\\
  \forall x, y \ (T_G(x,z) \rul (\exists z \ T_G(x,y) \land T_G(y,
  z)))}\] $T_G$ is interpreted as the transitive closure of the graph
interpreting $G$. A similar claim holds for the literal translation of
the definition of $\models$ in Figure~\ref{fig-sat}. Thus, the rule
formalism under the well-founded semantics provides the desired
uniform syntax and semantics for representing the above mentioned
forms of inductive definition construct.

There are several good arguments to integrate the above inductive
definition construct (and hence, this generalized form of logic
programming under the well-founded semantics) into FO. (1) FO and
definitions are complementary KR languages: FO is a base language very
suitable for expressing propositions, assertions or constraints while
it is well-known that, in general, inductive definitions cannot be
expressed in FO \cite{Libkin}. (2) Definitions are important for
KR. In the case of {\em non-inductive} definitions, their use for
defining terminology was argued long time ago in Brachman and
Levesque's seminal paper~\cite{Brachman82} and was the motivation for
developing description logics \cite{DL2002}. As for {\em inductive}
definitions, they are quite likely as important to declarative
Knowledge Representation as recursive functions and procedures are to
programming.  Applications of inductive definitions abound in KR:
various instances of transitive closure, definitions of recursive
types and of concepts defined over recursive types, descriptions of
dynamic worlds through definitions of states in terms of past states
and effects of actions, etc.  In~\cite{Denecker/Ternovska:2007:AI}, a
formalization of situation calculus in terms of iterated inductive
definitions in FO(ID) yields an elegant and very general solution for
the ramification problem in the context of the situation calculus. (3)
Inductive definitions are also an interesting {\em Non-Monotonic
  Reasoning} language construct. A logic is non-monotonic if adding
new expressions to a theory may invalidate previous
inferences. Obviously, adding a new rule to an inductive definition
defines a different set and hence, this operation may invalidate
previous inferences\footnote{Observe that the concept of
  (non-)monotonicity is used here in two different ways. Adding a rule
  to a {\em monotone inductive definition} is a {\em non-monotonic
    reasoning} operation.}.  
One of the main non-monotonic reasoning principles is the Closed World
Assumption (CWA) \cite{Reiter78a}. The intuition underlying CWA is
that ``an atom is false unless it can be proven''.  This matches with
an inductive definition in which a defined atom $P(\bar{t})$ is false
unless it is explicitly derived by one of its rules $P(\bar{t}) \rul
\psi$ during the construction process.  Hence, inductive definitions can be viewed as a
very precise and well-understood form of Closed World
Assumption. Moreover, it is well-known that rule formalisms under CWA
can be used to represent many useful forms of {\em defaults}. The
correspondence between CWA and inductive definition construct implies that the methodologies to
represent defaults developed in, e.g., logic programming, can be used
in an inductive definition formalism as well.  Domain Closure
\cite{McCarthy77} is another important non-monotonic reasoning
principle that can be expressed through inductive definitions
\cite{DeneckerT08}.

All the above provides a strong motivation for adding inductive definitions to
FO. Thus, the resulting logic FO(ID) extends FO not only with an inductive definition
construct but also with an expressive and precise non-monotonic
reasoning principle.  Not surprisingly, the logic FO(ID) is strongly
tied to many other logics. It is an extension of FO with inductive definitions and a
conceptually clean integration of FO and LP. It integrates monotonic
and non-monotonic logics. The inductive definition construct of FO(ID) formally
generalizes Datalog \cite{Abiteboul95}: this is a natural match, given
that Datalog programs aim to \emph{define} queries and views. FO(ID)
is also strongly related to fixpoint logics.  Monotone definitions in
FO(ID) are a different -rule-based- syntactic sugar of the fixpoint
formulas of Least Fixpoint Logic (LFP)
\cite{MI/Park70,TCS/Park76}. Last but not least, FO(ID), being a
conceptually clean, well-founded integration of rules into classical
logic, might play a unifying role in the current attempts of extending
FO-based description logics with rules \cite{ESWC/VennekensD09}. It
thus appears that FO(ID) occupies quite a central position in the spectrum
of computational and knowledge representation logics.

\ignore{
Recently, the authors of \cite{Denecker98c,Denecker2001:TOCL} pointed
out that semantical studies in the area of logic programming might
contribute to a better understanding of such generalized forms of
induction. In particular, it was argued that the well-founded
semantics of logic programming \cite{VanGelder91} extends monotone
induction and formalizes induction over well-founded sets and iterated
induction.

\[ \defin{\forall x, y \ TransCl(x,y)  \rul Edge(x,y)\\
  \forall x, y \ TransCl(x,z) \rul (\exists z \ TransCl(x,y) \land
  TransCl(y, z))}\]

Inductive definitions are common in mathematical practice. For
instance, the non-monotone inductive definition of the satisfaction
relation $\models$ can be found in most textbooks on first order logic
(FO). This prevalence of inductive definitions indicates that these
offer a natural and well-understood way of representing knowledge. It
is well-known that, in general, inductive definitions cannot be
expressed in first order logic. For instance, the transitive closure
of a graph is one of the simplest concepts typically defined by
induction -- the relation is defined by two inductive rules: (a) if
$(x,y)$ is an edge of the graph, $(x,y)$ belongs to the transitive
closure and (b) if there exists a $z$ such that both $(x,z)$ and
$(z,y)$ belongs to the transitive closure, then $(x,y)$ belongs to the
transitive closure -- yet it cannot be defined in first order logic.

It turns out, however, that certain knowledge representation logics do
allow a natural and uniform formalization of the most common forms of
inductive definitions. Recently, the authors of
\cite{Denecker98c,Denecker2001:TOCL} pointed out that semantical
studies in the area of logic programming might contribute to a better
understanding of such generalized forms of induction. In particular,
it was argued that the well-founded semantics of logic programming
\cite{VanGelder91} extends monotone induction and formalizes induction
over well-founded sets and iterated induction. The language of FO(ID)
uses the well-founded semantics to extend classical first order logic
with a new ``inductive definition'' primitive. In the resulting
formalism, all kinds of definitions regularly found in mathematical
practice -- e.g., monotone inductive definitions, non-monotone
inductive definitions over a well-ordered set, and iterated inductive
definitions -- can be represented in a uniform way. Moreover, this
representation neatly coincides with the form such definitions would
take in a mathematical text. For instance, in FO(ID) the transitive
closure of a graph can be defined as:
\[ \defin{\forall x, y \ TransCl(x,y)  \rul Edge(x,y)\\
  \forall x, y \ TransCl(x,z) \rul (\exists z \ TransCl(x,y) \land
  TransCl(y, z))}\]


However, FO(ID) is able to handle more than only mathematical
concepts. Indeed, inductive definitions are also crucial in
declarative Knowledge Representation. The role of representing
definitions of terminology was argued in Brachman's and Levesque's
seminal paper~\cite{Brachman82}, which led to the development of
description logics. Also inductively defined concepts such as that of
transitive closure, occur in many real-world KR applications. In this
sense, FO(ID) can be classified amongst description logics, which it
extends by allowing $n$-ary predicates and non-monotone inductive
definitions.  Definitions, when represented by sets of rules, are also
very clearly a non-monotone concept and strongly related to other
non-monotone concepts such as the Closed World Assumption. Indeed, an
atom of a defined predicate that cannot be derived using a rule of its
definition, is false according to this definition, just as would be
given by application of CWA on this set of rules. Thus, this justifies
to view definitions as a form of CWA of mathematical precision. Not
surprisingly then, inductive definitions have applications in areas of
non-monotone logics.  For instance,
in~\cite{Denecker/Ternovska:KR2004}, it was shown that situation
calculus can be given a natural representation as an iterated
inductive definition. The resulting theory is able to correctly handle
tricky issues such as recursive ramifications, and is in fact, to the
best of our knowledge, the most general representation of this
calculus to date. It thus appears that FO(ID) has very strong links to
several KR-paradigms.

} 

Several attempts to build inference systems for FO(ID) are underway.
One line of research is the development of finite model generators~\cite{lash06/MarienWD06,lpar05/MarienMDB05,sat/MarienWDB08,lash08/MarienWD08}
. They have similar applications and speed as current
Answer Set Programming solvers \cite{lpnmr/GebserLNNST07,lpnmr/ASPcompetition09}. However, in
this paper we study a more traditional form of inference: deduction.
As for every formal logical system, the development of deductive
inference methods for FO(ID) is an important research topic.  There is
no hope of course to build a complete proof system of FO(ID). Indeed,
inductive definability leads to undecidability, not even
semi-decidability. As such, the task we set out for this paper is
restricted to the development of a sound proof system and a
decidable fragment of FO(ID).


The goal of this paper is to extend the propositional part of
Gentzen's sequent calculus to obtain a proof system for PC(ID), the
propositional fragment of FO(ID). We view our work as an initial
investigation to build proof systems for (fragments of) FO(ID). In
proof theory, Gentzen's sequent calculus \gs~\cite{Gentzen35,Szabo69}
is a widely known proof system for first order logic.  The sequent
calculus is well-suited  to a goal-directed approach for constructing
logical derivations.
The advantage of the method
is its theoretical elegance and the fact that it focuses the proof search,
with applicable proof rules constrained by logical connectives
appearing in the current goal sequent. Our work is inspired by the one
of Compton, who used sequent calculus (Gentzen-style deductive system)
methods in~\cite{Compton93,Compton94} to investigate sound and
complete deductive inference methods for existential least fixpoint
logic and stratified least fixpoint logic. Existential least fixpoint
logic, as described in~\cite{Compton93}, is a logic with a least
fixpoint operator but only existential quantification and stratified
least fixpoint logic, as shown in~\cite{Compton94}, is a logic with a
least fixpoint operator and characterizes the expressibility of
stratified logic programs. Indeed, these two logics without nested
least fixpoint expressions can be viewed as fragments of FO(ID).

The contributions of this paper can be summarized as follows:
\begin{enumerate}
 \item \label{1contribution} We introduce a sequent calculus for PC(ID).
 \item \label{2contribution} We prove that the deductive system is sound and complete for a slightly restricted fragment of
PC(ID).
 \item \label{4contribution} We provide some complexity results for PC(ID).
\end{enumerate}

By developing a proof system for PC(ID), we want set a first step to
enhance the understanding of proof-theoretic foundations of
FO(ID). One application of this work could be for the development of
tools to check the correctness of the outputs generated by PC(ID)
model generators such as \minisatid~\cite{sat/MarienWDB08}. Given a PC(ID) theory $T$ as input, such a solver outputs a model for $T$ or concludes that $T$ is unsatisfiable. In the former
case, an independent {\em model checker} can be used to check whether
the output is indeed a model of $T$. However, when the solver
concludes that $T$ is unsatisfiable, it is less obvious how to check
the correctness of this answer. One solution is to transform a trace
of the solvers computation into a proof of unsatisfiability in a
PC(ID) proof system. An independent {\em proof checker} can then be
used to check this formal proof. Model and proof checkers can be a
great help to detect bugs in model generators. An analogous checker
for the Boolean Satisfiability problem (SAT) solvers was described in
\cite{LinTaoZhang03}.

On the longer run, we view our work also as a first step towards the
development of proof systems and decidable fragments of FO(ID). A
potential use of this is in the field of description logics.
Deductive reasoning is the distinguished form of inference of
Description Logics. Given the efforts to extend Description Logics
with rules and the fact that FO(ID) offers a natural, clean
integration of a very useful form of rules in FO, it seems that
research on decidable fragments of FO(ID) could play a useful role
in that area.


The structure of this paper is as follows. We introduce PC(ID) in
Section~\ref{sec:prel}. We present a sequent calculus proof system for
PC(ID) in Section~\ref{sec:dedsys}. The main results of the soundness
and completeness of the proof system are investigated in
Section~\ref{sec:result}. We provide some complexity results for
PC(ID) in Section~\ref{sec:complexity}. We finish with conclusions,
related and future work.

\section{Preliminaries}\label{sec:prel}

In this section, we present PC(ID), which is the propositional fragment of FO(ID)~\cite{DeneckerT08}. Observe that PC(ID) is an extension of propositional calculus (PC) with propositional inductive definitions (IDs).

\subsection{Syntax}
A propositional vocabulary $\voc$ is a set of propositional atoms. A formula of propositional calculus over $\voc$, or briefly, a PC-formula over $\voc$, is inductively defined as:
\begin{itemize}
\item an atom in $\voc$ is a PC-formula over $\voc$;
\item if $F$ is a PC-formula over $\voc$, then so is $\neg F$;
\item if $F_1, F_2$ are PC-formulas over $\voc$, then so are $F_1 \land F_2$ and $ F_1 \lor F_2$.
\end{itemize}

We use the following standard abbreviations: $F_1 \supset F_2$ for $\neg F_1 \lor F_2$ and $F_1 \equiv F_2$ for $(F_1 \land F_2) \lor (\neg F_1 \land \neg F_2)$. A {\em literal} is an atom $P$ or its negation $\neg P$. An atom $P$ has a {\em negative (positive) occurrence} in formula $F$ if $P$ has an occurrence in the scope of an odd (even) number of occurrences of the negation symbol $\neg$ in $F$.

A \emph{definition} $\D$ over $\voc$ is a finite set of rules of the form:
\[P \rul \varphi,\] where $P\in\voc$ and $\varphi$ is a PC-formula
over $\voc$. Note that the symbol ``$\rul$'' is a new symbol, which
must be distinguished from (the inverse of) material implication
$\supset$. For a rule of the above form, the atom $P$ is called the
{\em head} of the rule while $\varphi$ is known as its {\em body}. An
atom appearing in the head of a rule of $\D$ is called a
\emph{defined} atom of $\D$, any other atom is called an \emph{open}
atom of $\D$. We denote the set of defined atoms by $\defp{D}$ and
that of all open ones by $\openp{D}$.  We call a definition $\D$
{\em positive} if its defined symbols have only positive occurrences in rule
bodies (i.e., occur in the scope of an even number of negation
symbols).

$\D$ is called {\em inductive} or {\em recursive} in predicate $P$ if
its dependency relation $\prec$ satisfies $P\prec P$. Here, the
dependency relation $\prec$ of $\D$ on $\voc$ is the transitive
closure of the set of all pairs $(Q,P)$ such that for some rule $P
\rul \varphi\in\D$, $Q$ occurs in $\varphi$. The intended {\em
  informal semantics} of a formal definition $\D$ is given by
understanding it as a -possibly inductive- definition of the defined
symbols in terms of the open symbols. This understanding is clear in
case of positive definitions and the corresponding formal semantics is
obvious. In the next sections, we consider how this view extends to
arbitrary non-positive definitions.

A PC(ID)-formula over $\voc$ is 
defined by the following induction:
\begin{itemize}
\item an atom in $\voc$ is a PC(ID)-formula over $\voc$;
\item a definition over $\voc$ is a PC(ID)-formula over $\voc$;
\item if $F$ is a PC(ID)-formula over $\voc$, then so is $\neg F$;
\item if $F_1, F_2$ are PC(ID)-formulas over $\voc$, then so are $F_1 \land F_2$ and $ F_1 \lor F_2$.
\end{itemize}

A PC(ID) theory over $\voc$ is a set of PC(ID)-formulas over $\voc$.

Any definition containing multiple rules with the same atom in the
head can be easily transformed into a definition with only one rule
per defined atom. We illustrate this by the following example.

\begin{example}\label{ex:mu}
The following definition
\[ \defin{P \rul O_1 \land Q\\
          P \rul P\\
	  Q \rul Q \land P\\
	  Q \rul O_2} \]
is equivalent to this one:
\[ \defin{ P \rul (O_1 \land Q) \lor P\\
           Q \rul (Q \land P) \lor O_2}.\]
\end{example}

As we mentioned in Section~\ref{sec:intro}, monotone definitions in
FO(ID) are a different -rule-based- syntactic sugar of the fixpoint
formulas of Least Fixpoint Logic (LFP). We now illustrate the relation
between a propositional inductive definition and a propositional
least fixpoint expression in fixpoint logics.

A {\em propositional least fixpoint expression} is of the form:
 $$[{LFP}_{P_1, \ldots, P_n}(\theta_1, \ldots, \theta_n)]\psi,$$
where for each $i \in [1, \ldots, n]$, $P_i$ is a propositional atom, $\theta_i$ is either a propositional formula or a propositional least fixpoint expression, $\psi$ is either a propositional formula or a propositional least fixpoint expression, and $P_i$ occurs only positively in $\theta_i$ and $\psi$.
Note that the subformulas $\psi, \theta_1, \ldots, \theta_n$ of a
least fixpoint expression
$[{LFP}_{P_1, \ldots, P_n}(\theta_1, \ldots, \theta_n)]\psi$ may contain least fixpoint expressions. Indeed, nesting of least fixpoint expressions is allowed in fixpoint logics. But nesting of definitions is not allowed in PC(ID). All subformulas $\psi, \theta_1, \ldots, \theta_n$ of an unnested least fixpoint expression contain only positive occurrences of each atom $P_i$. It is worth mentioning that the unnested least fixpoint expression $[{LFP}_{P_1, \ldots, P_n}(\theta_1, \ldots, \theta_n)]\psi$, where $\theta_1, \ldots, \theta_n, \psi$ may not contain least fixpoint expressions, corresponds exactly to the second order PC(ID)-formula
$$\exists P_1 \dots P_n \left(\defin{P_1 \rul \theta_1\\
    \vdots\\
    P_n\rul \theta_n} \land \psi\right).$$ However, such a
correspondence does not hold for nested least fixpoint expressions
since only PC-formulas are allowed as bodies of rules in definitions.

In summary, the differences between the definition construct and the
fixpoint definitions are:
\begin{itemize}
\item The fixpoint notation is formula-based and defines predicate
  variables with scope restricted to the fixpoint expression while a
  definition construct is rule-based and defines predicate
  constants. (These are ``syntactic sugar'' differences.)
\item Fixpoint expressions can be nested while definitions cannot. On
  the other hand, in fixpoint expressions, the defined variables can
  occur only positively in the defining formulas, while in
  definitions, the defined predicates can occur negatively in rule
  bodies.
\end{itemize}
The relation between definitions and LFP are investigated in
\cite{pHou09ASP}.

\subsection{Semantics} \label{sec:sem}

In this section, we formalize the informal semantics of the two most
common forms of inductive definition, monotone inductive definitions
(e.g., the definition of transitive closure, Figure~\ref{fig-trans})
and definitions over a well-founded order (e.g., the definition of the
satisfaction relation $\models$, Figure~\ref{fig-sat}), and their
generalization, the notion of an iterated inductive definition. These
informal types of definitions might  be roughly characterized as follows:
\begin{itemize}
\item The rules of a monotone inductive definition of a set add
  objects to the defined set given the {\em presence} of certain other
  objects in the set.
\item For an inductive definition over some (strict)
  well-founded order, a rule adds an object $x$ given the {\em
    presence} or {\em absence} of certain other {\em strictly smaller}
  objects  in the set.
\item Finally, an iterated inductive definition is associated with a
  well-founded semi-order  \footnote{A semi-order $\leq$ is a
    transitive reflexive binary relation. Two elements $x, y$ are
    $\leq$-equivalent if $x\leq y$ and $y\leq x$, and $x$ is strictly
    less than $y$ if $x \leq y$ and $y \not \leq x$ are not
    equivalent. A semi-order is well-founded if it has no infinite
    strictly descending chains $x_0 > x_1 > x_2 > \dots$.}  such that
  each rule adds an object $x$ given the {\em presence} of some other
  {\em less or equivalent} objects in the defined set and the {\em
    absence} of some {\em strictly less} objects.
\end{itemize} According to this characterizations, iterated inductive
definitions generalize the other types. Non-monotonicity of the two
latter types of definitions stem from rule conditions that refer to
the absence of objects in the defined set (as in the condition of
``$\valI\models\neg\psi$ if $\valI \not\models \psi$'').  Adding a new
element to the set might violate a condition what was previously
satisfied. For an extensive argument how the well-founded semantics
uniformally formalizes these three principles, we refer to
\cite{Denecker2001:TOCL,DeneckerT08}. Below, we just sketch the main
intuitions.

As we all know, the set defined by any of the aforementioned forms of
inductive definitions can be obtained constructively as the limit of
an increasing sequence of sets, by starting with the empty set and
iteratively applying unsatisfied rules until saturation.  A key
difference between  monotone
definitions and non-monotone inductive definitions is that in the
first, once the condition of a rule is satisfied in some intermediate
set, it holds in all later stages of the construction. This is not the
case for non-monotone inductive definitions. E.g., in the construction
of $\models$, the set of formulas $\psi$ for which the condition of
the rule ``$\valI \models \neg \psi$ {\em if } $\valI \not\models
\psi$'' holds, initially contains all formulas and gradually
decreases. As a consequence, the order of rule applications is
arbitrary for monotone inductive definitions but matters for
non-monotone definitions. There, it is critical to delay application
of an unsatisfied rule until it is certain that its condition will not
be falsified by later rule applications. This is taken care of by
applying the rules along the well-founded order provided with the
definition (e.g., the subformula order in the definition of
$\models$). In particular, application of a rule deriving some element
$x$ is delayed until no unsatisfied rule is left deriving a strictly
smaller object $y<x$.

It would be rather straightforward to formalize this idea for PC(ID)
if it was not that a PC(ID) definition $\D$ does not come with a
explicit order. Fortunately, there is a different way to make sure
that a rule can be safely applied, i.e., that later rule applications
during the inductive process will not falsify its condition. To do
this, we need to distinguish whether a defined atomic proposition has
been derived to be true, to be false or is still underived. E.g., once
$\valI \models \psi$ is derived to be true, we can safely apply the
rule for disjunctions and derive $\valI \models \psi\lor\phi$ to be
true, even $\valI \models\phi$ is still underived.  Likewise, we can
safely derive $\valI\not\models \psi\land\phi$ as soon as we found out
$\valI\not\models\psi$. Applying this criterion relies on the ability
to distinguish whether a defined atomic proposition (such as
``$\valI\models\psi$'') has been derived to be true, to be false or is
still underived, and whether a rule condition is certainly satisfied,
certainly dissatisfied or still unknown in such state.  This naturally
calls for a formalization of the induction process in a three-valued
setting where intermediate stages of the set in construction are
represented by three-valued sets instead of two-valued sets, and rules
are evaluated in these three-valued sets.

Below we present the formalization of the well-founded semantics
introduced in \cite{lpnmr/DeneckerV07}. Compared to the
original formalizations in \cite{VanGelder91,VanGelder93}, it is
geared directly at formalizing the inductive process as described
above, using concepts of three-valued logic. We start its presentation
by recalling some basic concepts of three-valued logic.

Consider the set of truth values $\{ \Tr, \Fa, \Un \}$. The {\em truth
  order} $\leq$ on this set is induced by $\Fa \leq \Un \leq \Tr$ and
the {\em precision order} $\leq_p$ is induced by $\Un \leq_p \Fa$ and
$\Un \leq_p \Tr$. Define $\Fa^{-1} = \Tr$, $\Un^{-1} = \Un$ and
$\Tr^{-1} = \Fa$.

Let $\voc$ be a propositional vocabulary. A three-valued
$\voc$-interpretation, also called a $\voc$-valuation, is a function
$I$ from $\voc$ to the set of truth values $\{ \Tr, \Fa, \Un \}$. An
interpretation is called two-valued if it maps no atoms to
$\Un$. Given two disjoint vocabularies $\voc$ and $\voc'$, a
$\voc$-interpretation $I$ and a $\voc'$-interpretation $I'$, the $\voc
\cup \voc'$-interpretation mapping each element $P$ of $\voc$ to
$I(P)$ and each $P \in \voc'$ to $I'(P)$ is denoted by $I+I'$. When
$\voc' \subseteq \voc$, we denote the restriction of a
$\voc$-interpretation $I$ to the symbols of $\voc'$ by
$\res{I}{\voc'}$. For a $\voc$-interpretation $I$, a truth value
$v$ 
and an atom $P \in \voc$, we denote by $I[P/v]$ the
$\voc$-interpretation that assigns $v$ to $P$ and corresponds to $I$
for all other atoms. We extend this notation to sets of atoms. Both
truth and precision order can be extended to an order on all
$\voc$-interpretations by $I \leq J$ if for each atom $P \in \voc$,
$I(P) \leq J(P)$ and $I \leq_p J$ if for each atom $P \in \voc$, $I(P)
\leq_p J(P)$.

A three-valued interpretation $I$ on $\voc$ can be extended to all PC-formulas over $\voc$ by induction on the subformula order:
\begin{itemize}
\item $\tf{P}{I} = I(P)$ if $P\in\voc$;
\item $\tf{(\varphi \land \psi)}{I} = min_\leq(\{\tf{\varphi}{I},\tf{\psi}{I}\})$;
\item $\tf{(\varphi \lor \psi)}{I} = max_\leq(\{\tf{\varphi}{I},\tf{\psi}{I}\})$;
\item $\tf{(\neg\varphi)}{I} = (\tf{\varphi}{I})^{-1}$.
\end{itemize}

The following proposition states a well-known monotonicity property
with respect to the precision order.

\begin{proposition} \label{prop:propertyofprecisionorder}
Let $\varphi$ be a PC-formula over $\voc$ and $I,J$ be three-valued $\voc$-interpretations such that $I \leq_p J$. Then $\varphi^{I} \leq_p \varphi^{J}$.
\end{proposition}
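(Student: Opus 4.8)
The plan is to proceed by structural induction on the PC-formula $\varphi$, following exactly the inductive definition of the three-valued valuation $\tf{\varphi}{I}$ given immediately before the proposition. The hypothesis $I \leq_p J$ says that for every atom $P \in \voc$, $I(P) \leq_p J(P)$; the goal is to lift this atom-wise precision ordering to arbitrary PC-formulas. Since the valuation is defined by induction on the subformula order through the four clauses (atom, $\land$, $\lor$, $\neg$), the natural strategy is to verify that each clause preserves $\leq_p$.

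First I would handle the base case: when $\varphi$ is an atom $P$, we have $\tf{P}{I} = I(P) \leq_p J(P) = \tf{P}{J}$ directly from the assumption $I \leq_p J$. For the inductive step I would assume as induction hypothesis that $\tf{\psi}{I} \leq_p \tf{\psi}{J}$ and $\tf{\chi}{I} \leq_p \tf{\chi}{J}$ for the immediate subformulas $\psi, \chi$ of $\varphi$, and then treat the three connective cases. For $\varphi = \neg\psi$, I would show that the inversion operation $v \mapsto v^{-1}$ is $\leq_p$-monotone: checking the three values, $\Un^{-1} = \Un$ is $\leq_p$-below both $\Fa^{-1} = \Tr$ and $\Tr^{-1} = \Fa$, so $a \leq_p b$ implies $a^{-1} \leq_p b^{-1}$, whence $\tf{(\neg\psi)}{I} = (\tf{\psi}{I})^{-1} \leq_p (\tf{\psi}{J})^{-1} = \tf{(\neg\psi)}{J}$. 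For $\varphi = \psi \land \chi$ and $\varphi = \psi \lor \chi$, I would show that $\min_\leq$ and $\max_\leq$ are $\leq_p$-monotone in each argument, so that combining the two induction hypotheses gives $\tf{\varphi}{I} \leq_p \tf{\varphi}{J}$.

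The crux of the argument, and the step I expect to require the most care, is the monotonicity of $\min_\leq$ and $\max_\leq$ with respect to the \emph{precision} order $\leq_p$, because these operations are defined via the \emph{truth} order $\leq$, and the two orders are genuinely different (indeed $\Un$ is the top of neither $\leq$ nor $\leq_p$, whereas it is the bottom of $\leq_p$ but the middle of $\leq$). The clean way to carry this out is to observe that $\leq_p$ has a greatest lower bound / reduction description: $a \leq_p b$ exactly when $a = \Un$ or $a = b$. Using this characterization I would argue by cases on whether each of $\tf{\psi}{I}, \tf{\chi}{I}$ equals $\Un$ or agrees with its $J$-counterpart; when both $I$-values are already two-valued and equal to the corresponding $J$-values, the $\min_\leq$ (resp. $\max_\leq$) values coincide, and when an $I$-value is $\Un$ the result is $\leq_p$-below because any change can only raise precision. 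Rather than a long case analysis, I would prefer to invoke the standard fact that $\min_\leq$ and $\max_\leq$ are exactly the $\leq_p$-consistent three-valued extensions of Boolean conjunction and disjunction (Kleene's strong three-valued connectives), for which $\leq_p$-monotonicity is their defining property; this reduces the connective cases to a one-line appeal once the inversion case is settled. With these monotonicity facts in hand, the induction closes immediately and establishes $\varphi^{I} \leq_p \varphi^{J}$ for all PC-formulas $\varphi$.
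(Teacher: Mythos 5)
Your proposal is correct and follows essentially the same route as the paper's proof: structural induction on $\varphi$, with the base case immediate from $I \leq_p J$ and the connective cases reduced to $\leq_p$-monotonicity of the operations $v \mapsto v^{-1}$, $\min_\leq$ and $\max_\leq$ (the paper organizes the connective cases as a case split on the value of $\varphi^{I}$ being $\Un$, $\Tr$ or $\Fa$, which amounts to the same computation via the observation that $a \leq_p b$ iff $a = \Un$ or $a = b$). No gaps.
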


\ignore{
\begin{proof}
We prove this property by induction on the structure of $\varphi$.
\begin{itemize}
 \item Case $\varphi = P$. We require to show that $P^I \leq_p P^J$, which is immediately the case by definition of $I
\leq_p J$.

\item  Case $\varphi = \neg \varphi_1$. By induction hypothesis, we have that $\varphi_{1}^{I} \leq_p \varphi_{1}^{J}$. In the case that either $\varphi_{1}^{I} = \Tr$ or $\varphi_{1}^{I} = \Fa$, because $\varphi_{1}^{I} \leq_p \varphi_{1}^{J}$, we can obtain that $\varphi_{1}^{J} = \varphi_{1}^{I}$. It follows immediately that ${(\varphi_{1}^{I})}^{-1} = {(\varphi_{1}^{J})}^{-1}$, and thus, it holds that ${(\neg \varphi_{1})}^{I} \leq_p {(\neg \varphi_{1})}^{J}$. When $\varphi_{1}^{I} = \Un$, it is obvious that ${(\neg \varphi_{1})}^{I} = \Un \leq_p {(\neg \varphi_{1})}^{J}$.

\item Case $\varphi = \varphi_1 \land \varphi_2$. By induction hypotheses, we have that $\varphi_{1}^{I} \leq_p \varphi_{1}^{J}$ and $\varphi_{2}^{I} \leq_p \varphi_{2}^{J}$. When ${(\varphi_{1} \land \varphi_{2})}^{I} = \Un$, it is obvious that ${(\varphi_{1} \land \varphi_{2})}^{I} = \Un \leq_p {(\varphi_{1} \land \varphi_{2})}^{J}$. In the case that ${(\varphi_{1} \land \varphi_{2})}^{I} = \Tr$, it is obvious that $\varphi_{1}^{I} = \varphi_{2}^{I} = \Tr$. Because $\varphi_{1}^{I} \leq_p \varphi_{1}^{J}$ and $\varphi_{2}^{I} \leq_p \varphi_{2}^{J}$, it is obtained that $\varphi_{1}^{J} = \varphi_{2}^{J} = \Tr$, and thus, ${(\varphi_{1} \land \varphi_{2})}^{J} = \Tr$. It follows immediately that ${(\varphi_{1} \land \varphi_{2})}^{I} = \Tr \leq_p \Tr = {(\varphi_{1} \land \varphi_{2})}^{J}$. In the case that ${(\varphi_{1} \land \varphi_{2})}^{I} = \Fa$, it holds that either $\varphi_{1}^{I} = \Fa$ or $\varphi_{2}^{I} = \Fa$. Suppose that $\varphi_{1}^{I} = \Fa$ (the other case is same). Because $\varphi_{1}^{I} \leq_p \varphi_{1}^{J}$, it is obtained that $\varphi_{1}^{J} = \Fa$, and thus, ${(\varphi_{1} \land \varphi_{2})}^{J} = \Fa$. Whence, we have that ${(\varphi_{1} \land \varphi_{2})}^{I}= \Fa \leq_p \Fa = {(\varphi_{1} \land \varphi_{2})}^{J}$.

\item Case $\varphi = \varphi_1 \lor \varphi_2$. This case is similar to the case $\varphi = \varphi_1 \land \varphi_2$ above.
\end{itemize}
\end{proof}
}

Another well-known proposition states a monotonicity property
with respect to the truth order.

\begin{proposition} \label{prop:propertyoftruthorder} Let $\varphi$ be
  a PC-formula over $\voc$ and $I, J$ be three-valued
  $\voc$-interpretations such that if $P^{I} < P^{J}$, then $P$ only
  occurs positively in $\varphi$ and if $P^{I} > P^{J}$ then $P$ only
  occurs negatively in $\varphi$. Then $\varphi^{I} \leq \varphi^{J}$.
\end{proposition}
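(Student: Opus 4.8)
The plan is to prove the statement by structural induction on $\varphi$, relying throughout on one elementary bookkeeping fact about polarities: every occurrence of an atom $P$ in an immediate subformula is an occurrence of $P$ in $\varphi$ itself, and its polarity (odd versus even number of enclosing negations) is preserved by $\land$ and $\lor$ but flipped by $\neg$. Concretely, $P$ occurs only positively (resp.\ only negatively) in $\varphi_1 \land \varphi_2$ or $\varphi_1 \lor \varphi_2$ precisely when it occurs only positively (resp.\ only negatively) in each of $\varphi_1, \varphi_2$, whereas $P$ occurs only positively in $\neg\varphi_1$ iff it occurs only negatively in $\varphi_1$, and conversely. To keep the bookkeeping readable I write $H(\varphi; I, J)$ for the hypothesis of the proposition applied to $\varphi$ and the ordered pair $(I,J)$, namely that $P^{I} < P^{J}$ forces $P$ to occur only positively in $\varphi$ and $P^{I} > P^{J}$ forces $P$ to occur only negatively in $\varphi$.

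For the base case $\varphi = P$ we have $\varphi^{I} = P^{I}$ and $\varphi^{J} = P^{J}$. The unique occurrence of $P$ in $P$ is positive (it lies in the scope of zero, hence an even number, of negations), so it is not the case that $P$ occurs only negatively in $\varphi$; by the contrapositive of $H(\varphi; I, J)$ we cannot have $P^{I} > P^{J}$, whence $P^{I} \leq P^{J}$, as required. For $\varphi = \varphi_1 \land \varphi_2$ (and symmetrically for $\lor$) I would first observe that $H(\varphi; I, J)$ entails both $H(\varphi_1; I, J)$ and $H(\varphi_2; I, J)$: since every occurrence of $P$ in $\varphi_i$ is an occurrence of the same polarity in $\varphi$, the polarity constraints are simply inherited by the subformulas. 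The induction hypothesis then gives $\varphi_1^{I} \leq \varphi_1^{J}$ and $\varphi_2^{I} \leq \varphi_2^{J}$, and since $\min_{\leq}$ (resp.\ $\max_{\leq}$) is monotone in each argument for the truth order, I conclude $\varphi^{I} = \min_{\leq}(\varphi_1^{I}, \varphi_2^{I}) \leq \min_{\leq}(\varphi_1^{J}, \varphi_2^{J}) = \varphi^{J}$.

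The negation case is the only delicate step, and it is where I expect the real content to sit. Here $\varphi = \neg\varphi_1$ and the goal is $(\varphi_1^{I})^{-1} \leq (\varphi_1^{J})^{-1}$, and two order reversals come into play. First, the inversion $(\cdot)^{-1}$ is antitone for $\leq$ (it swaps $\Tr$ and $\Fa$ and fixes $\Un$), so $(\varphi_1^{I})^{-1} \leq (\varphi_1^{J})^{-1}$ is equivalent to $\varphi_1^{J} \leq \varphi_1^{I}$. Second, passing through $\neg$ flips every polarity. The idea is therefore to apply the induction hypothesis to $\varphi_1$ with the two interpretations interchanged, that is, to verify $H(\varphi_1; J, I)$ and conclude $\varphi_1^{J} \leq \varphi_1^{I}$. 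Verifying $H(\varphi_1; J, I)$ is exactly the point at which the two reversals cancel: $P^{J} < P^{I}$ means $P^{I} > P^{J}$, which by $H(\varphi; I, J)$ makes $P$ occur only negatively in $\neg\varphi_1$, hence only positively in $\varphi_1$, which is what $H(\varphi_1; J, I)$ requires; the reverse inequality is symmetric. This yields $\varphi_1^{J} \leq \varphi_1^{I}$ and therefore $\varphi^{I} \leq \varphi^{J}$, closing the induction. The main obstacle, then, is not any hard computation but getting this double swap right — swapping $I$ and $J$ while simultaneously flipping the polarity conditions — so the cleanest presentation is to phrase the negation case explicitly as an application of the induction hypothesis to the reversed pair $(J, I)$.
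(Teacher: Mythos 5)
Your proof is correct and follows essentially the same route as the paper's own argument: a structural induction on $\varphi$ in which the $\land$/$\lor$ cases use monotonicity of $\min_{\leq}$ and $\max_{\leq}$ and the $\neg$ case applies the induction hypothesis with the roles of $I$ and $J$ (equivalently, the positive and negative polarity conditions) interchanged, then uses antitonicity of $(\cdot)^{-1}$. The explicit reformulation as $H(\varphi_1; J, I)$ in the negation case is a clean way of packaging exactly the double swap the paper's proof performs.
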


\ignore{
\begin{proposition} \label{prop:propertyoftruthorder}
Let $\varphi$ be a PC-formula over $\voc$, $S_1$ be a set of atoms which occur only positively in $\varphi$, and $S_2$ be a set of atoms which occur only negatively in $\varphi$. Let $I, J$ be three-valued $\voc$-interpretation such that $P^{I} \leq P^{J}$ for each $P \in S_1$, $P^{J} \leq P^{I}$ for each $P \in S_2$, and $P^{I} = P^{J}$ for every other atom $P$ occurring in $\varphi$. Then $\varphi^{I} \leq \varphi^{J}$.
\end{proposition}

\begin{proof}
We prove it by induction on the structure of $\varphi$.
\begin{itemize}
\item Case $\varphi = P$. $S_1$ can be either empty or a singleton set $\{ P \}$. In each case, it is trivial to show
that $P^{I} \leq P^{J}$, as required.
\item Case $\varphi = \neg \varphi_1$. Then for each $P \in S_1$, $P$ occurs only negatively in $\varphi_1$ and for each $P \in S_2$, $P$ occurs only negatively in $\varphi_1$. Thus, by induction hypothesis, we have that $\varphi_{1}^{J} \leq \varphi_{1}^{I}$. It follows immediately that ${(\neg \varphi_1)}^{I} = {(\varphi_{1}^{I})}^{-1} \leq {(\varphi_{1}^{J})}^{-1} = {(\neg \varphi_1)}^{J}$, as required.
\item Case $\varphi = \varphi_1 \land \varphi_2$. Let $S'_{1}$ be the set of all atoms $P$ such
that $P$ occurs in $\varphi_1$ and $P \in S_1$ and let $S'_{2}$ be the set of all atoms $P$
such that $P$ occurs in $\varphi_1$ and $P \in S_2$. It is obvious that for each $P \in S'_{1}$,
$P$ occurs only positively in $\varphi_1$ and for each $P \in S'_{2}$, $P$ occurs only negatively
in $\varphi_1$. Note that $S_1 \cap S_2 = \emptyset$. Hence, it can be easily shown that for each $P$ such that $P$ occurs in $\varphi_1$ and
$P \not \in S'_{1} \cup S'_{2}$, it holds that $P \not \in S_1 \cup S_2$. Thus, we have that
$P^{I} \leq P^{J}$ for each $P \in S'_{1}$, $P^{J} \leq P^{I}$ for each $P \in S'_{2}$ and
$P^{I} = P^{J}$ for every other atom $P$ occurring in $\varphi_1$. Hence, by the hypothesis induction,
it is obtained that $\varphi_{1}^{I} \leq \varphi_{1}^{J}$. Similarly,
we have that $\varphi_{2}^{I} \leq \varphi_{2}^{J}$. It follows directly
that ${(\varphi_1 \land \varphi_2)}^{I} = {min}_{\leq}(\varphi_{1}^{I}, \varphi_{2}^{I}) \leq {min}_{\leq}(\varphi_{1}^{J}, \varphi_{2}^{J}) = {(\varphi_1 \land \varphi_2)}^{J}$,
as required.
\item Case $\varphi = \varphi_1 \lor \varphi_2$. This case is similar to the previous case $\varphi = \varphi_1 \land \varphi_2$.
\end{itemize}
\end{proof}
}

The above properties about the precision and truth order will be applied frequently in the proofs in Section~\ref{sec:result}. For brevity,
we will not mention them explicitly in the remainder of the paper.

We now define the semantics of definitions. Let $\D$ be a definition
over $\voc$ and $I_O$ a two-valued $\openp{\D}$-interpretation, i.e.,
an interpretation of all open symbols of $\D$. Consider a
sequence of three-valued $\voc$-interpretations $(I^{n})_{n \geq 0}$
extending $I_O$ such that $I^{0}(P) = \Un$ for every $P \in
\defp{\D}$, and for every natural number $n$, $I^{n+1}$ relates to
$I^{n}$ in one of the following ways:
\begin{enumerate}
\item \label{itwell:1} $I^{n+1} = I^{n}[P/\Tr]$ where $P$ is a defined atom such that $P^{I^{n}}=\Un$ and for some rule $P \rul \varphi \in \D,
   \varphi^{I^{n}} = \Tr$.
\item \label{itwell:2} $I^{n+1} = I^{n}[U/\Fa]$, where $U$ is a non-empty set of defined atoms, such that for each $P \in U$, $I^{n}(P) = \Un$ and for each
   rule $P \rul \varphi \in \D$, $\tf{\varphi}{I^{n+1}} = \Fa$.
\end{enumerate}

The first derivation rule~\ref{itwell:1} derives true atoms and is a
straightforward formalization of the principle explained in the
beginning of this section. The second derivation rule~\ref{itwell:2} is less obvious
and serves to derive falsity of defined atoms. Let us first consider a
more obvious special case that is subsumed by rule~\ref{itwell:2}:

\begin{enumerate}\setcounter{enumi}{2}
\item \label{itwell:3} $I^{n+1} = I^{n}[P/\Fa]$ where $P$ is a defined
  atom such that $I^{n}(P) = \Un$ and for each rule $P\rul\varphi \in
  \D$, $\tf{\varphi}{I^{n}} = \Fa$.
\end{enumerate}
This rule expresses that if the body of each rule that could derive
$P$ is certainly false at stage $n$, then $P$ can be asserted to be
false at stage ${n+1}$.  This is a special case of the
rule~\ref{itwell:2}. Indeed, taking $U=\{P\}$, we have for each
$P\rul\varphi \in \D$ that $\Fa= \tf{\varphi}{I^{n}} \leq_p
\tf{\varphi}{I^{n}[U/\Fa]} = \tf{\varphi}{I^{n+1}} = \Fa$.

The stronger derivation rule~\ref{itwell:2} expresses that the atoms
in a set $U$ consisting of underived defined atoms can be turned to
false if the assumption that they are all false suffices to dissatisfy
the condition of each rule that could produce an element of $U$. A set
$U$ as used in this rule corresponds exactly to an {\em unfounded set}
as defined in ~\cite{VanGelder91}.  The rationale behind this
derivation rule and the link with informal induction is that when $U$
is an unfounded set at stage $n$ then none of its atoms can be derived
anymore at later stages of the construction process (using derivation
rule~\ref{itwell:1}).  To see this, assume towards contradiction that
at some later stage $>n$, one or more elements of $U$ could be derived
to be true, and let $P$ be the first atom that could be derived, say
at stage $m>n$. At stage $m$, it holds for each $Q\in U$ that
$I^m(Q)=\Un$ and for some rule $P\rul\varphi\in\D$,
$\varphi^{I^m}=\Tr$. But $I^n[U/\Fa] \leq_p I^m[U/\Fa] \geq_p I^m$ and
hence, $\Fa = \tf{\varphi}{I^n[U/\Fa]}\leq_p \tf{\varphi}{I^m[U/\Fa]}
\geq_p \tf{\varphi}{I^m}= \Tr$ and this yields a contradiction. Thus,
the derivation rule~\ref{itwell:2} correctly concludes that the atoms
in $U$ are no longer derivable through rule application. This
derivation rule is needed to derive, e.g., falsity of all atoms not in
the least fixpoint of a monotone  definition, which is something that
cannot be derived in general by the rule~\ref{itwell:3}.

We call a sequence as defined above a {\em well-founded induction}. A
well-founded induction is {\em terminal} if it cannot be extended
anymore. It can be shown that each terminal well-founded induction is
a sequence of increasing precision and its limit is the {\em
  well-founded partial interpretation} of $D$ extending
$I_O$~\cite{lpnmr/DeneckerV07}. We denote the well-founded partial
interpretation of $D$ extending $I_O$ by $I_{O}^{\D}$.

We define that $D^I = \Tr$ if $I = {(\res{I}{\openp{\D}})}^{\D}$ and $I$
is two-valued. Otherwise, we define $\D^I = \Fa$. Adding this as a new
base case to the definition of the truth function of formulas, we can
extend the truth function inductively to all PC(ID)-formulas.

We are now ready to define the semantics of PC(ID). For an arbitrary
PC(ID)-formula $\varphi$, we say that an interpretation $I$ satisfies
$\varphi$, or $I$ is a model of $\varphi$, if $I$ is two-valued and
$\varphi^I = \Tr$. As usual, this is denoted by $I \models
\varphi$. $I$ satisfies (is a model of) a PC(ID) theory $T$ if $I$
satisfies every $\varphi \in T$.

A definition lays a functional relation between the interpretation of
the defined symbols and those of the open symbols. In particular, two
models of a definition differ on the open symbols. A model of a
monotone definition is the $\leq$-least interpretation satisfying the
rules of the definition (interpreted as material implications) given a
fixed interpretation of the open symbols, as desired.  Also, the
semantics of PC(ID) is two-valued and extends the standard semantics
of propositional logic. A three-valued interpretation $I$ is never a
model of a definition, not even if it is a well-founded partial
interpretation of the definition.

\begin{example} \label{ex:definition}
Consider the following definition:
\[D = \defin{P \rul Q\\
             Q \rul P}. \]
Then $\openp{D} = \emptyset$ and $\defp{D} = \{ P, Q \}$. There are no open symbols and there is only one model of $\D$, namely the interpretation mapping both $P$ and $Q$ to $\Fa$.
\end{example}

\subsection{Where  the informal semantics breaks} \label{sec:InformalSemBreaks}

The informal semantics of a PC(ID) rule set as an inductive definition
breaks in some cases. Examples are non-monotone rule sets with
recursion over negation such as
\[ \defin{P \rul \neg P}\]
or
\[ \defin{P \rul \neg Q\\Q \rul \neg P}\]
Their (unique) well-founded partial interpretation is not two-valued,
and hence, these definitions have no model and are inconsistent in PC(ID).

The restriction to two-valued well-founded partial models was imposed
to enforce the view that a well-designed definition $\D$ ought to
define the truth of all its defined atoms, i.e., the inductive process
should be able to derive truth or falsity of all defined atoms. This
motivates the following concept.

\begin{definition}[Totality,\cite{DeneckerT08}] \label{def:totality}
  Let $I_O$ be a two-valued interpretation of $\openp{\D}$. A
  definition $\D$ is {\em total} in $I$ if $I_O^\D$ is two-valued. The
  definition $D$ is {\em total in the context of a theory} $T$ if $\D$
  is total in $\res{M}{\openp{\D}}$, for each model $M$ of $T$.  A
  definition $\D$ is \emph{total} if it is total in every two-valued
  interpretation $I_O$ of its open atoms.
\end{definition}

A simple and very general syntactic criterion that guarantees that a
definition is total can be phrased in terms of the dependency relation
$\prec$ of $\D$.  A definition $\D$ is {\em stratified} if for each
rule $P\rul\varphi$, for each symbol $Q$ with a negative occurrence in
$\varphi$, $P\not\prec Q$. This means that the definition of $Q$ does
not depend on $P$.

\begin{proposition}[\cite{VanGelder91}] If $\D$ is stratified then
  $\D$ is total.
\end{proposition}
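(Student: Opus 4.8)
The plan is to exploit stratification to decompose $\D$ into a sequence of positive sub-definitions that can be evaluated one stratum at a time, using that positive (monotone) definitions are total. First I would extract from the syntactic condition a level mapping. For each defined atom $P$ let $\ell(P)$ be the maximal number of negative dependency edges occurring along any dependency path ending in $P$, where a negative edge from $Q$ to $P$ records a negative occurrence of $Q$ in some body of a rule with head $P$. The key observation is that under stratification no cycle of $\prec$ can contain a negative edge: a negative edge from $Q$ into $P$ gives $Q \prec P$, and a returning path around the cycle would give $P \prec Q$, contradicting $P \not\prec Q$. Hence every dependency path carries only finitely many negative edges and $\ell$ is well defined. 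By construction $\ell(Q) \le \ell(P)$ whenever $Q$ occurs in a body of a rule for $P$, with strict inequality when the occurrence is negative. Grouping defined atoms by $\ell$-value partitions $\defp{\D}$ into strata $S_1,\dots,S_k$ such that in the body of any rule with head in $S_i$ every defined atom occurring positively lies in some $S_j$ with $j\le i$, every defined atom occurring negatively lies in some $S_j$ with $j<i$, and in particular no atom of a higher stratum occurs at all.

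Next I would record the auxiliary fact that a positive definition is total. Given a two-valued interpretation of its open atoms, its immediate-consequence operator is $\leq$-monotone (by Proposition~\ref{prop:propertyoftruthorder}), so it has a least fixpoint $L$ reached by iterating from the empty set. A well-founded induction that uses derivation rule~(\ref{itwell:1}) to set every atom of $L$ to $\Tr$, and then applies the unfounded-set rule~(\ref{itwell:2}) with $U$ the set of remaining defined atoms, drives all atoms outside $L$ to $\Fa$: for $P\notin L=T_\D(L)$ every body is already false once $L$ is true and everything else false. The resulting interpretation is two-valued, and since every terminal well-founded induction converges to the same limit, it equals the well-founded partial interpretation; thus the positive definition is total.

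The core argument then proceeds by induction on the strata. Assuming all atoms in strata below $S_i$ have already received two-valued values, the rules with heads in $S_i$ see only these fixed lower values (in either polarity), open atoms, and atoms of $S_i$ itself, in which they occur only positively; relative to the fixed lower part they therefore constitute a positive definition in $S_i$, which is total by the previous paragraph, so a well-founded induction confined to the heads of $S_i$ assigns $\Tr$ or $\Fa$ to every atom of $S_i$. I would then assemble these stage-wise inductions into a single well-founded induction of $\D$ that processes $S_1$, then $S_2$, and so on: because higher-stratum atoms never occur in lower bodies, leaving them $\Un$ during earlier stages does not affect the evaluation of any relevant body, so each step is a legitimate application of rule~(\ref{itwell:1}) or~(\ref{itwell:2}) for $\D$. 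After the stage for $S_k$ the interpretation is two-valued, hence terminal, and since every terminal well-founded induction converges to $I_O^\D$, we conclude that $I_O^\D$ is two-valued for every two-valued $I_O$, i.e. $\D$ is total.

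I expect the main obstacle to be the bookkeeping that makes this stage-wise decomposition rigorous: one must verify that a step legal for the positive sub-definition of stratum $i$ stays legal for the full $\D$, in particular that an application of the unfounded-set rule~(\ref{itwell:2}), whose falsity condition quantifies over all rule bodies, is unaffected by the still-undetermined higher strata. This rests entirely on the ``no higher-stratum atom occurs in a lower body'' property obtained from the level mapping, together with the precision-monotonicity of body evaluation (Proposition~\ref{prop:propertyofprecisionorder}).
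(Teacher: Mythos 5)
The paper offers no proof of this proposition at all---it is imported from \cite{VanGelder91} as a black box---so there is no in-paper argument to compare yours against; judged on its own, your proof is correct and is essentially the classical argument that stratified rule sets have two-valued well-founded models. The two load-bearing steps both check out. First, stratification forbids negative edges on dependency cycles (a negative occurrence of $Q$ in a rule for $P$ gives $Q\prec P$, and a return path would give $P\prec Q$), so any path traverses each negative edge at most once, your level mapping $\ell$ is finite and well defined, and the resulting strata have the ``positive within, negative strictly below, nothing from above'' property you need. Second, your totality lemma for positive definitions is sound: you exhibit one explicit terminal well-founded induction (derive the atoms of the Kleene least fixpoint $L$ by rule~\ref{itwell:1}, then eliminate $\defp{\D}\setminus L$ in a single application of rule~\ref{itwell:2}) and invoke the paper's statement that every terminal induction has the same limit. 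The assembly step is where such proofs usually go wrong, and you correctly identify and discharge the one genuine obligation: an application of rule~\ref{itwell:2} inside stratum $S_i$ quantifies only over bodies of rules for atoms of $U\subseteq S_i$, and those bodies contain no higher-stratum atoms, so the step remains legal for the full $\D$. Two presentational points, neither a gap: when turning $L$ true you should say that the atoms are derived in Kleene-stage order, using Proposition~\ref{prop:propertyoftruthorder} to pass from ``remaining atoms false'' to ``remaining atoms unknown'' in the body evaluation; and the finiteness of $\ell$ deserves the one-line remark that a path using a negative edge twice would close a cycle through it.
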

Observe that a stratified definition formally satisfies the (informal)
condition that was stated for iterated inductive definitions early in
this section. The well-founded semi-order underlying an iterated
inductive definition is nothing else than the reflexive closure
$\preceq$ of $\prec$. Atoms $Q$ with a positive occurrence in the body
of a rule deriving $P$ satisfy $Q\preceq P$; those with a negative
occurrence satisfy $Q\preceq P$ and $P\not\preceq Q$. Hence, such
rules effectively derive $P$ given the presence of less or equivalent
atoms and the absence of strictly less atoms in the defined
valuation. The well-founded model of such definitions is two-valued
and corresponds exactly to the structure obtained by the construction
described in Section~\ref{sec:sem} for (informal) inductive
definitions. Thus, the well-founded semantics correctly
formalizes the informal semantics of inductive definitions, and
correctly constructs the (informally) defined relations without
knowing the underlying (semi-)order of the definition.

Although the class of stratified definitions is large and comprises
almost all ``practical'' PC(ID) definitions that we  encountered
in applications, there are intuitively sensible definitions which are
total but not stratified.

\begin{example} A software system consists of two servers $S1$ and
  $S2$ that provide identical services.  One server acts as master and
  the other as slave, and these roles are assigned on the basis of
  clear (but irrelevant) criterion that can be expressed in the form
  of a set of defining rules for the predicate $Master(s)$.  Clients
  can request services $x$. The master makes a selection among these
  requests on the basis of a clear (but irrelevant) criterion
  expressed in a definition of $Criterion(x)$.  The slave fulfills all
  requests that are not accepted by the master. Here is the core of a
  (predicate) definition:
  \[ \defin{ Criterion(x) \rul \dots \\
    Master(s) \rul \dots \\
    Slave(s) \rul \neg Master(s)\\
    Accepts(x, m) \rul Request(x) \land Master(m) \land Criterion(x)\\
    Accepts(x, s) \rul Request(x) \land Slave(s) \land \exists m
    (Master(m) \land \neg Accepts(x,m)) }\] The (propositionalisation
  of the) definition is not stratified since the last rule creates a
  negative dependency between $Accepts(x,S1)$ and
  $Accepts(x,S2)$. Yet, since no server can be both master and slave,
  this recursion is broken ``locally'' in each model. This is a total,
  albeit unstratified definition of the predicate $Accepts$ that
  correctly implements the informal specification.
\end{example}

The proof system for PC(ID), as presented below, is sound and complete
with respect to all PC(ID) theories containing only total definitions,
and hence to any fragment of PC(ID) that enforces totality of the
allowed definitions.

\ignore{
\begin{example} \label{ex:nontotaldefinition}
Consider the definition as follows:
\[ D = \defin{P \rul O\\
  Q \rul P \land \neg Q}. \] This definition is total in the
interpretation mapping $O$ to $\Fa$ but not total in the
interpretation mapping $O$ to $\Tr$. Thus, it is not a total
definition.
\end{example}
}

\section{\ps: A Proof system for PC(ID)}\label{sec:dedsys}
In this section we formulate a proof system, \ps, for the logic PC(ID) in the sequent calculus style originally developed by Gentzen in 1935~\cite{Gentzen35}. Our system can be seen essentially as a propositional part of classical sequent calculus adaptation of inference rules for definitions. We give the proof rules of \ps, which are the rules of Gentzen's original sequent calculus for propositional logic, augmented with rules for introducing defined atoms on the left and right of sequents, a rule for inferring the non-totality of definitions and a rule for introducing definitions on the right of sequents.


First, we introduce some basic definitions and notations. Let capital Greek letters $\Gamma, \Delta, \ldots$ denote finite (possibly empty)
sets of PC(ID)-formulas.
$\Gamma, \Delta$ denotes $\Gamma \cup \Delta$. $\Gamma, \varphi$ denotes $\Gamma \cup \{ \varphi \}$. By $\bigwedge \Gamma$, respectively $\bigvee \Gamma$, we denote the conjunction, respectively disjunction of all formulas in $\Gamma$. By $\neg \Gamma$, we denote the set obtained by taking the negation of each formula in $\Gamma$. By $\Gamma \setminus \Delta$, we denote the set obtained by deleting from $\Gamma$ all occurrences of formulas that occur in $\Delta$. $\Gamma$ is said to be {\em consistent} if there is no formula $\varphi$ such that both $\varphi$ and $\neg \varphi$ can be derived from $\Gamma$.

A \emph{sequent} 
is an expression of the form $\Gamma \la \Delta$.
$\Gamma$ and $\Delta$
are respectively called the {\em antecedent} and {\em succedent} of the sequent and each formula in $\Gamma$ and $\Delta$ is called a {\em sequent
formula}.
In general, a formula $\varphi$ occurring as part of a sequent denotes the set $\{ \varphi \}$. We will denote sequents by $S, S_1, \ldots$.
A sequent $\Gamma \la \Delta$ is \emph{valid}, denoted by $\models \Gamma \la \Delta$, if every model of $\bigwedge \Gamma$ satisfies $\bigvee \Delta$.
A \emph{counter-model} for $\Gamma \la \Delta$ is an interpretation $I$ such that $I \models \bigwedge \Gamma$ but $I \not\models \bigvee \Delta$. The sequent $\Gamma \la $ is equivalent to $\Gamma \la \bot$ and $\la \Delta$ is equivalent to $\top \la \Delta$, where $\bot, \top$ are logical constants denoting {\em false} and {\em true}, respectively.

An {\em inference rule} is an expression of the form \[ \inferencerule{S_1; \ldots; S_n}{S} \quad n \geq 0 \] where $S_1, \ldots, S_n$ and $S$ are sequents. Each $S_i$ is called a \emph{premise} of the inference rule, $S$ is called the \emph{consequence}. Intuitively, an inference rule means that $S$ can be inferred, given that all $S_1, \ldots, S_n$ are already inferred.

The {\em initial sequents}, or {\em axioms} of \ps are all sequents of the form
\[\Gamma, A \la A, \Delta  \mbox { \ \ or \ \ } \bot \rightarrow \Delta  \mbox { \ \ or \ \ } \Gamma \rightarrow \top \]
where $A$ is any PC(ID)-formula, $\Gamma$ and $\Delta$ are arbitrary 
sets of PC(ID)-formulas.

The inference rules for \ps consist of {\em structural} rules, {\em logical} rules and {\em definition} rules. The structural and logical rules, which follow directly the propositional inference rules in Gentzen's original sequent calculus for first-order logic \gs, deal with the propositional part of PC(ID) and are given as follows, in which $A, B$ are any PC(ID)-formulas and $\Gamma, \Delta$ are arbitrary sets of PC(ID)-formulas.

\subsubsection{Structural rules}
\begin{itemize}
   \item Weakening rules
  \[ \text{left:} \ \inferencerule{\Gamma\rightarrow \Delta}{A, \Gamma \rightarrow \Delta}; \ \ \text{right:} \ \inferencerule{\Gamma \rightarrow
  \Delta}{\Gamma \rightarrow \Delta, A}.\]
   \item Contraction rules
  \[ \text{left:} \ \inferencerule{A, A, \Gamma \rightarrow \Delta}{A, \Gamma \rightarrow \Delta}; \ \ \text{right:} \ \inferencerule{\Gamma
  \rightarrow \Delta, A, A}{\Gamma \rightarrow \Delta, A}.\]
   \item Cut rule
  \[ \inferencerule{\Gamma \rightarrow \Delta, A; \ \ A, \Gamma \rightarrow \Delta}{\Gamma \rightarrow \Delta}.\]
\end{itemize}

\subsubsection{Logical rules}
\begin{itemize}
   \item $\neg$ rules
      \[ \text{left:} \inferencerule{\Gamma\rightarrow \Delta, A}{\neg A, \Gamma \rightarrow \Delta}; \ \ \text{right:} \inferencerule {A, \Gamma
      \rightarrow \Delta}{\Gamma \rightarrow \Delta, \neg A}.\]
   \item $\land$ rules
      \[  \text{left:} \inferencerule{A, B, \Gamma \rightarrow \Delta}{A \land B, \Gamma \rightarrow \Delta}; \ \ \text{right:} \inferencerule{ \Gamma
      \rightarrow \Delta, A ; \ \Gamma \rightarrow \Delta, B}{\Gamma \rightarrow \Delta, A \land B}.\]
   \item $\lor$ rules
      \[ \text{left:} \inferencerule{A, \Gamma\rightarrow \Delta; \ B, \Gamma \rightarrow \Delta }{A \lor B, \Gamma \rightarrow \Delta}; \ \
      \text{right:} \inferencerule{\Gamma \rightarrow \Delta, A, B}{\Gamma \rightarrow \Delta, A \lor B}.\]
\end{itemize}

\quad \quad \quad

Our deductive system \ps is then obtained from the propositional part of \gs by adding inference rules for definitions. The definition rules of \ps consist of the {\em right definition} rule, the {\em left definition} rule, the {\em non-total} definition rule and the {\em definition introduction} rule. Without loss of generality, in what follows we assume that there is only one rule with head $P$ in a definition $\D$ for every $P \in \defp{\D}$. We refer to this rule as {\em the rule for $P$ in $D$} and denote it by $P \rul \varphi_{P}$.

\subsubsection{Right definition rule for $P$.}
The \emph{right definition rule} introduces defined atoms in the succedents of sequents. It allows inferring the truth of a defined atom from a definition $\D$ and is therefore closely related to the derivation rule~\ref{itwell:1} for extending a well-founded induction. Let $\D$ be a definition and $P$ a defined atom of $D$. The right definition rule for $P$ is given as follows.
\[\inferencerule{\Gamma \rightarrow  \Delta, \varphi_{P} }{D, \Gamma \rightarrow \Delta, P} \]
where $\Gamma$ and $\Delta$ are arbitrary 
sets of PC(ID)-formulas.

We illustrate this inference rule with an example.
\begin{example}\label{ex:rightdefinitionrule}
Consider the definition
\[
   D = \defin{ P \leftarrow P \land \neg Q\\
               Q \leftarrow \neg P}.
\]
The instance of the right definition rule for $P$ is
\[ \inferencerule{\Gamma \la \Delta, P \land \neg Q}{D, \Gamma \rightarrow \Delta, P},\]
and the instance of the right definition rule for $Q$ is
\[\inferencerule{\Gamma \rightarrow \Delta, \neg P}{D, \Gamma \rightarrow \Delta, Q}. \]
\end{example}

\subsubsection{Left definition rule for $P_i \in U$.}
The \emph{left definition rule} introduces defined atoms in the antecedents of sequents. It allows inferring the falsity of a defined atom from a definition $\D$ and is therefore closely related to the second derivation rule~\ref{itwell:2} for extending a well-founded induction.

We first introduce some notations. Given a set $U$ of atoms, let $U^{\triangleright}$ be a set consisting of one new atom $P^{\triangleright}$ for every $P\in U$. The vocabulary $\voc$ augmented with these symbols is denoted by $\voc^{\triangleright}$. Given a PC-formula $\varphi$, $\varphi^{\triangleright}$ denotes the formula obtained by replacing all positive occurrences of an atom $P \in U$ in $\varphi$ by $P^{\triangleright}$.  We call $\varphi^{\triangleright}$ the {\em renaming} of $\varphi$ with respect to $U$. For a set of PC-formulas $F$, $F^{\triangleright}$ denotes $\{ \varphi^{\triangleright} \mid \varphi \in F \}$. For arbitrary PC-formula $\varphi$, by $\neg \varphi^{\triangleright}$, we mean $\neg (\varphi^{\triangleright})$.

Let $\D$ be a definition over $\voc$ and $U$ a non-empty set of atoms such that $U \subseteq \defp{D}$. Denote by $\neg U^{\triangleright}$ the set $\{ \neg P^{\triangleright}  | P \in U \}$. Let $\Gamma$ and $\Delta$ be
sets of PC(ID)-formulas over $\voc$. The left definition rule for every $P_i \in U$ is given as follows, where $U = \{P_1, \ldots, P_n \}$.

$$\inferencerule{\neg U^{\triangleright}, \Gamma \la \Delta, \neg \varphi^{\triangleright}_{P_{1}}; \ldots; \neg U^{\triangleright}, \Gamma \la \Delta, \neg \varphi^{\triangleright}_{P_n}}{P_i, D, \Gamma \la \Delta}.$$

\quad \quad

Actually, in the left definition rule, the set of atoms $U$ is a candidate unfounded set of $D$.

We illustrate this inference rule with an example.
\begin{example} \label{ex:leftdefinition}
Given a definition $D = \defin{
P \rul P \land \neg Q\\
Q \leftarrow Q
},$
\begin{itemize}
   \item $U = \{ P \}$, the instance of the left definition rule for $P \in U$ is
\[\inferencerule{\neg P^{\triangleright}, \Gamma \la \Delta, \neg (P^{\triangleright} \land \neg Q)}{P, D, \Gamma \la \Delta}\]
   \item $U = \{ Q \}$, the instance of the left definition rule for $Q \in U$ is
\[ \inferencerule{\neg Q^{\triangleright}, \Gamma \la \Delta, \neg Q^{\triangleright}}{Q, D, \Gamma \la \Delta} \]
   \item $U = \{ P, Q \}$, the instance of the left definition rule for $P \in U$ is
\[ \inferencerule{\neg P^{\triangleright}, \neg Q^{\triangleright}, \Gamma \la \Delta, \neg (P^{\triangleright} \land \neg Q); \ \ \neg P^{\triangleright}, \neg Q^{\triangleright}, \Gamma \la \Delta, \neg Q^{\triangleright}}{P, D, \Gamma \la \Delta} \]
   \item $U = \{ P, Q \}$, the instance of the left definition rule for $Q \in U$ is
\[ \inferencerule{\neg P^{\triangleright}, \neg Q^{\triangleright}, \Gamma \la \Delta, \neg (P^{\triangleright} \land \neg Q); \ \ \neg P^{\triangleright}, \neg Q^{\triangleright}, \Gamma \la \Delta, \neg Q^{\triangleright}}{Q, D, \Gamma \la \Delta}. \]
\end{itemize}
\end{example}

\subsubsection{Non-total definition rule for $D$.}
The {\em non-total definition rule} allows inferring the non-totality
of a definition $D$. We introduce some notations. Let $D$ be a
definition over $\voc$ and $V$ a non-empty set of atoms such that $V \subseteq
\defp{D}$. Denote by $\voc^{\diamond}$ the vocabulary $\voc \cup V^{\triangleright} \cup V^{\diamond}$,
where both $V^{\triangleright}$ and $V^{\diamond}$ are sets of new and different renamings $P^{\triangleright}
$ and $P^{\diamond}$ of all symbols $P$ of $V$. Denote by $\varphi^{\diamond}$ the
formula obtained by replacing each positive occurrence of each $P \in
V$ in $\varphi$ by $P^{\triangleright}$ and each negative occurrence of each $P \in V$ in $\varphi$ by $P^{\diamond}$.
Denote by $D^{\diamond}$ the definition $\{ P^{\triangleright} \rul \varphi_{P}^{\diamond} \mid P \in V$ and
$P \rul \varphi_{P} \in D \}$ over the new vocabulary $\voc^{\diamond}$. Let
$\Gamma$ and $\Delta$ be 
sets of PC(ID)-formulas over $\voc$.
Then the non-total definition rule for $D$ is given as follows.
$$\inferencerule{V^{\diamond}, D^{\diamond}, \Gamma \la \Delta, \bigwedge \neg V^{\triangleright}; \ \neg V^{\diamond}, D^{\diamond}, \Gamma \la \Delta, \bigwedge V^{\triangleright}}{D, \Gamma \la \Delta}$$

We illustrate this inference rule with an example.
\begin{example} \label{ex:nontotaldefinitionrule}
Given a definition $D = \defin{P \rul P \land \neg Q\\ Q \rul \neg Q \land R\\ R \rul \neg R}$, $V = \{ Q, R \}$ and $\Gamma$ and $\Delta$ empty sets. Then the instance of the non-total definition for $D$ is
\[\inferencerule{Q^{\diamond}, R^{\diamond}, D^{\diamond} \la \neg Q^{\triangleright} \land \neg R^{\triangleright}; \ \neg Q^{\diamond}, \neg R^{\diamond}, D^{\diamond} \la Q^{\triangleright} \land R^{\triangleright}}{D \la }, \]
where $D^{\diamond} = \defin{Q^{\triangleright} \rul \neg Q^{\diamond} \land R^{\triangleright}\\ R^{\triangleright} \rul \neg R^{\diamond}}$.
\end{example}

\quad \quad \quad \quad

For the intuition behind the non-total definition rule, we point the readers to \cite{DeneckerT08} and Section~\ref{sec:InformalSemBreaks} where the cause of the non-totality of a definition is explained.

We do not have an inference rule to prove totality of all definitions in the context of a certain set $\Gamma$ of PC(ID)-formulas. Such an inference rule would involve proving that each model of $\Gamma$ can be extended to a model of the definition. In fact, we cannot even formulate this condition as a sequent.

\subsubsection{Definition introduction rule for $D$.}
The three definitional inference rules introduced so far, introduce  a definition in the antecedent of the consequence. Hence, none of these rules can be used to infer that under certain conditions a definition holds. The {\em definition introduction rule} allows inferring the truth of a total definition from PC(ID)-formulas.

We introduce some notations. Let $D$ be a total definition. Denote by $P'$ a new defined atom for each $P \in \defp{D}$. Denote by $\voc'$ the vocabulary $\voc \cup \{ P' \mid P \in \defp{D} \}$. Denote by $D'$ the definition over the new vocabulary $\voc'$ obtained by replacing each occurrence of each defined symbol $P$ in $D$ by $P'$.
Let $\Gamma$ and $\Delta$ be sets of PC(ID)-formulas over the old vocabulary $\voc$.
The definition introduction rule for $D$ is given as follows, where $P_1, \ldots, P_n$ are all defined atoms of $D$.
$$\inferencerule{D', \Gamma \la \Delta, P'_1 \equiv P_1; \ldots ; D', \Gamma \la \Delta, P'_n \equiv P_n}{\Gamma \la \Delta, D}$$

We illustrate this inference rule with an example.
\begin{example} \label{ex:defintroduction}
Given a definition $D = \defin{P \rul O \\
                               Q \rul Q \land P}$, $\Gamma = \{ O, P, \neg Q \}$ and $\Delta$ an empty set. Then the
instance of the definition introduction rule for $D$ is
\[ \inferencerule{D', O, P, \neg Q \la P' \equiv P; \ \ D', O, P, \neg Q \la Q' \equiv Q}{O, P, \neg Q \la \D} ,\]
where $D' = \defin{P' \rul O\\
                   Q' \rul Q' \land P'}$.
\end{example}

The inference rule proposed here has a definition in the succedent of its premise and hence,
allows to infer the truth of a definition.
Unfortunately, this rule is only sound given that the inferred definition is total.
We will give an example to show that the definition introduction rule is not sound given that the inferred definition is non-total right after proving the soundness of this inference rule.

\subsubsection{Proofs of  PC(ID).}
We now come to the notion of an \emph{\ps-proof} for a sequent.
\begin{definition} \label{def:provable}
An \ps-{\em proof} for a sequent $S$, is a tree $T$ of sequents with root $S$. Moreover, each leaf of $T$ must be an axiom and for each interior node $S'$ there exists an instance of an inference rule such that $S'$ is the consequence of that instance while the children of $S'$ are precisely the premises of that instance. $T$ is often called a {\em proof tree} for $S$. A sequent $S$ is called {\em provable} in \ps, or \ps-{\em provable}, if there is an \ps-proof for it.
\end{definition}
\begin{example} \label{ex:aproof}
Given a definition
$D = \defin{P \rul O\\
         Q \rul Q \land P}$,
the following is an \ps-proof for $O, D \la P \land \neg Q$.
\begin{prooftree}
\AxiomC{$O \la O$}
\LeftLabel{right definition rule}
\UnaryInfC{$O, D \la P$}
				\AxiomC{$Q^{\triangleright} \la Q^{\triangleright}$}
				\RightLabel{left $\neg$}
				\UnaryInfC{$\neg Q^{\triangleright}, Q^{\triangleright} \la$}
				\RightLabel{left weakening}
				\UnaryInfC{$\neg Q^{\triangleright}, Q^{\triangleright}, P \la$}
				\RightLabel{left $\land$}
				\UnaryInfC{$\neg Q^{\triangleright}, Q^{\triangleright} \land P \la$}
				\RightLabel{right $\neg$}
				\UnaryInfC{$\neg Q^{\triangleright} \la \neg (Q^{\triangleright} \land P)$}
				\RightLabel{left definition rule}
				\UnaryInfC{$Q, D \la$}
				\RightLabel{right $\neg$}
				\UnaryInfC{$D \la \neg Q$}
				\RightLabel{left weakening}
				\UnaryInfC{$O, D \la \neg Q$}
				\RightLabel{right $\land$}
\BinaryInfC{$O, D \la P \land \neg Q$}
\end{prooftree}
\end{example}

\section{Main results}\label{sec:result}

In this section, we will prove that the deductive system \ps is sound and complete for a slightly restricted fragment of PC(ID), which can be viewed as main theoretical results of this paper.

\subsection{Soundness}

To prove the soundness of \ps, it is sufficient to prove that all axioms of \ps are valid and that every inference rule of \ps is sound, i.e. if all premises of an inference rule are valid then the consequence of that rule is valid. It is trivial to verify that the axioms are valid and that the structural and logical rules are sound (see e.g. \cite{Szabo69,Takeuti75}). Hence, only the soundness of the right definition rule, the left definition rule, the non-total definition rule and the definition introduction rule must be proved.

\begin{lemma} \label{lem:A}
Let $I$ be a model of $D$ and $P$ a defined atom of $D$. Then $I\models P$ if and only if $I \models \varphi_{P}$.
\end{lemma}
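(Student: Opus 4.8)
The plan is to unfold the definition of ``$I$ is a model of $D$'' and exploit the precision-monotonicity of the well-founded induction. By definition, $I$ being a model of $D$ means that $I$ is two-valued and $I = (\res{I}{\openp{\D}})^{\D}$; that is, $I$ is the two-valued limit of a terminal well-founded induction $(I^{n})_{n \geq 0}$ extending $I_O = \res{I}{\openp{\D}}$. Since this induction has increasing precision and $\defp{\D}$ is finite, each step turns at least one defined atom from $\Un$ to a classical value while no value ever reverts; hence the limit is reached at some finite stage $N$ with $I = I^{N}$, and every defined atom acquires its final value $\Tr$ or $\Fa$ at some stage. Because $I$ is two-valued, both $P^{I}$ and $\varphi_{P}^{I}$ lie in $\{\Tr, \Fa\}$, so it suffices to prove the single equality $P^{I} = \varphi_{P}^{I}$; the desired equivalence $I \models P$ iff $I \models \varphi_{P}$ then follows immediately.

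First I would treat the case $P^{I} = \Tr$. Defined atoms start at $\Un$, and the only way an atom is set to $\Tr$ is via derivation rule~\ref{itwell:1}: there is a stage $n$ with $I^{n+1} = I^{n}[P/\Tr]$, $P^{I^{n}} = \Un$ and $\varphi_{P}^{I^{n}} = \Tr$. Since $I^{n} \leq_{p} I^{N} = I$, Proposition~\ref{prop:propertyofprecisionorder} gives $\varphi_{P}^{I^{n}} \leq_{p} \varphi_{P}^{I}$; as $\Tr$ is maximal in the precision order, $\varphi_{P}^{I} = \Tr$.

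Symmetrically, for the case $P^{I} = \Fa$, the only way a defined atom is set to $\Fa$ is through derivation rule~\ref{itwell:2}: there is a stage $n$ and a non-empty set $U \ni P$ with $I^{n+1} = I^{n}[U/\Fa]$ such that, for the rule $P \rul \varphi_{P}$ (using our convention of one rule per defined atom), $\varphi_{P}^{I^{n+1}} = \Fa$. Again $I^{n+1} \leq_{p} I$, so Proposition~\ref{prop:propertyofprecisionorder} yields $\varphi_{P}^{I^{n+1}} \leq_{p} \varphi_{P}^{I}$, and since $\Fa$ is precision-maximal, $\varphi_{P}^{I} = \Fa$. Combining the two cases gives $P^{I} = \varphi_{P}^{I}$, which completes the argument.

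The calculations here are routine; the only point requiring care is the bookkeeping of the induction — namely that termination in finitely many steps and the increasing-precision property (both guaranteed for terminal well-founded inductions) let me locate a single finite stage at which $P$, or the unfounded set containing it, receives its value, and that this value already pins down $\varphi_{P}^{I}$ by precision monotonicity. I expect this stage-localization, rather than any genuinely hard estimate, to be the main thing to get right.
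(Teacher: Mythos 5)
Your proof is correct and uses essentially the same ingredients as the paper's: locating the stage of the terminal well-founded induction at which $P$ receives its value and propagating the truth value of $\varphi_P$ to the limit via precision monotonicity. The only cosmetic difference is that you argue by cases on $P^I$ (handling the falsity direction via derivation rule~\ref{itwell:2} directly), whereas the paper phrases the ``if'' direction contrapositively by observing that $\varphi_P^{I^n}$ can never be $\Fa$ at any stage when $\varphi_P^I = \Tr$; these are the same argument.
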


\begin{proof}
Because $I$ is a model of $D$, there exists a terminal well-founded induction $(I^n)_{n \leq \xi}$ for $D$ with the limit $I^\xi = I$.

(if part) Assume that $I \models \varphi_{P}$. The sequence $(I^{n})_{n \leq \xi}$ is strictly increasing in precision, hence there is no $n \leq \xi$ such that $\varphi_{P}^{I^n} = \Fa$. As such, for every $n \leq \xi$, $P^{I^n} \not = \Fa$. Therefore, $P^I \not = \Fa$ and because $I$ is two-valued, we can conclude $P^I = \Tr$.

(only if part) Assume that $I \models P$. Thus, for some $n < \xi$, $P^{I^n} = \Un$ and $P^{I^{n+1}} = \Tr$. Hence, $\varphi_{P}^{I^n} = \Tr$. Because the sequence $(I^n)_{n \leq \xi}$ is strictly increasing in precision, we have $\varphi_{P}^{I} = \Tr$.
\end{proof}

\begin{lemma}[Soundness of the right definition rule]\label{lem:disound}
Let $D$ be a definition and $P$ a defined atom of $D$. If $\models \Gamma \la \Delta, \varphi_P$, then $\models D, \Gamma \la \Delta, P$.
\end{lemma}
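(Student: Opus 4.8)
The plan is to unfold the definition of sequent validity and reduce everything to a single application of Lemma~\ref{lem:A}. Recall that $\models \Gamma \la \Delta, \varphi_P$ means that every model of $\bigwedge \Gamma$ satisfies $\bigvee \Delta \lor \varphi_P$, while the goal $\models D, \Gamma \la \Delta, P$ means that every model of $D \land \bigwedge \Gamma$ satisfies $\bigvee \Delta \lor P$. So I would start by fixing an arbitrary interpretation $I$ with $I \models D \land \bigwedge \Gamma$; by definition of the PC(ID) semantics such an $I$ is two-valued, and in particular $I$ is a model of $D$ and a model of $\bigwedge \Gamma$.

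Next I would feed $I$ into the hypothesis. Since $I \models \bigwedge \Gamma$, validity of the premise gives $I \models \bigvee \Delta \lor \varphi_P$, so either $I \models \bigvee \Delta$ or $I \models \varphi_P$. In the first case we are immediately done, because $I \models \bigvee \Delta$ already entails $I \models \bigvee \Delta \lor P$. In the second case I would invoke Lemma~\ref{lem:A}: because $I$ is a model of $D$ and $P$ is a defined atom of $D$, we have $I \models P$ iff $I \models \varphi_P$, so $I \models \varphi_P$ yields $I \models P$ and hence $I \models \bigvee \Delta \lor P$. Either way $I$ satisfies the succedent of the consequence, and since $I$ was arbitrary this establishes $\models D, \Gamma \la \Delta, P$.

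The entire argument is a short case analysis plus one citation, so there is no real computational obstacle; the only point that needs care is that the equivalence between a defined atom and its rule body holds only in a \emph{model} of $D$, which is exactly why it is essential that the antecedent of the consequence contains $D$ and that models are two-valued. This is precisely the content already packaged into Lemma~\ref{lem:A}, whose proof relies on the fact that a terminal well-founded induction is increasing in precision; the present lemma simply harvests that equivalence. Thus the main work of the soundness proof for the right definition rule sits in Lemma~\ref{lem:A}, and Lemma~\ref{lem:disound} itself is a routine propositional wrapper around it.
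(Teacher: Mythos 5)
Your proof is correct and essentially identical to the paper's: both reduce the claim to the equivalence $I \models P$ iff $I \models \varphi_P$ in a model of $D$, i.e.\ to Lemma~\ref{lem:A}. The only difference is presentational --- the paper argues by contradiction via a counter-model, while you argue directly with a case split on $\bigvee \Delta$ versus $\varphi_P$ --- which changes nothing of substance.
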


\begin{proof}
  Assume $\models \Gamma \la \Delta, \varphi_{P}$ but $\not\models
 D, \Gamma \la \Delta, P$. Then there exists a counter-model $I$ for
  $D, \Gamma \la \Delta, P$ which satisfies $D, \bigwedge \Gamma, \neg \bigvee \Delta$ and $ \neg
  P$.  It follows from the first assumption
  that $I \models \varphi_{P}$, and hence, by  Lemma \ref{lem:A}, $I \models P$, a contradiction.
\end{proof}

\begin{lemma}[Soundness of the left definition rule]\label{lem:desound}
   Let $D$ be a definition
  and $U$ be a non-empty subset of $\defp{D}$.  If for every $P\in U$, it holds that $\models \neg
  U^{\triangleright}, \Gamma \la \Delta, \neg \varphi_{P}^{\triangleright}$, then for all $P \in U$, it holds that
  $\models P, D, \Gamma \la \Delta$.
\end{lemma}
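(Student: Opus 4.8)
The plan is to argue soundness at the level of models, proving the contrapositive of what a counter-model would require. Concretely, I will show that under the hypotheses every two-valued model $I$ of $D \land \bigwedge \Gamma$ with $I \not\models \bigvee \Delta$ assigns $\Fa$ to \emph{every} atom of $U$. Once this is established, such an $I$ can never satisfy $P_i$ for any $P_i \in U$, so the sequent $P_i, D, \Gamma \la \Delta$ admits no counter-model and is therefore valid, which is exactly the conclusion.

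First I would translate each premise into a statement about a fixed such $I$. Let $J$ be the $\voc^{\triangleright}$-interpretation that agrees with $I$ on $\voc$ and maps every $P^{\triangleright}$ to $\Fa$; note $J$ is two-valued. Then $J \models \neg U^{\triangleright}$, and since $\Gamma$ and $\Delta$ are over $\voc$, also $J \models \bigwedge \Gamma$ and $J \not\models \bigvee \Delta$. Because each premise $\neg U^{\triangleright}, \Gamma \la \Delta, \neg \varphi_{P}^{\triangleright}$ is valid, the model $J$ of its antecedent must satisfy its succedent; as $J \not\models \bigvee \Delta$, this forces $J \models \neg \varphi_{P}^{\triangleright}$, i.e. $(\varphi_{P}^{\triangleright})^{J} = \Fa$ for every $P \in U$.

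The core step is to turn this into falsity of $U$ in $I$, exploiting that the premises encode that $U$ behaves like an unfounded set along the well-founded induction for $D$. Since $I \models D$, there is a terminal well-founded induction $(I^{n})_{n \leq \xi}$ extending $\res{I}{\openp{D}}$, increasing in precision, with limit $I^{\xi} = I$. Suppose toward a contradiction that some atom of $U$ is true in $I$, and let $P^{*}$ be the \emph{first} atom of $U$ set to $\Tr$, say at the step $I^{k+1} = I^{k}[P^{*}/\Tr]$, so that $\varphi_{P^{*}}^{I^{k}} = \Tr$ by derivation rule~(\ref{itwell:1}). Since $P^{*}$ is the first such atom, no atom of $U$ has value $\Tr$ at stage $k$; thus $I^{k}(Q) \in \{\Un, \Fa\}$ for every $Q \in U$. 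Now let $K$ be the $\voc^{\triangleright}$-interpretation agreeing with $I^{k}$ on $\voc$ and setting $K(Q^{\triangleright}) = I^{k}(Q)$ for each $Q \in U$. Since $Q^{\triangleright}$ then carries the same value as $Q$, the renaming is transparent and $(\varphi_{P^{*}}^{\triangleright})^{K} = \varphi_{P^{*}}^{I^{k}} = \Tr$. On $\voc$ we have $I^{k} \leq_{p} I$, and $J$ agrees with $I$ there; on $U^{\triangleright}$ we have $K(Q^{\triangleright}) = I^{k}(Q) \in \{\Un, \Fa\}$ while $J(Q^{\triangleright}) = \Fa$, and since $\Un \leq_{p} \Fa$ this gives $K(Q^{\triangleright}) \leq_{p} J(Q^{\triangleright})$. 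Hence $K \leq_{p} J$, so by Proposition~\ref{prop:propertyofprecisionorder} we obtain $\Tr = (\varphi_{P^{*}}^{\triangleright})^{K} \leq_{p} (\varphi_{P^{*}}^{\triangleright})^{J}$; as $\Tr$ is $\leq_{p}$-maximal, $(\varphi_{P^{*}}^{\triangleright})^{J} = \Tr$, contradicting the previous paragraph. Therefore every atom of $U$ is $\Fa$ in $I$.

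I expect the main obstacle to lie in the bookkeeping of this last step: isolating $P^{*}$ as the first $U$-atom derived true, so that all of $U$ is still $\{\Un, \Fa\}$-valued at stage $k$, and then verifying $K \leq_{p} J$ componentwise. The delicate point is that the renaming touches only \emph{positive} occurrences, yet choosing $K(Q^{\triangleright}) = I^{k}(Q)$ makes $(\varphi_{P^{*}}^{\triangleright})^{K}$ coincide exactly with $\varphi_{P^{*}}^{I^{k}}$, and it is the comparison $\Un \leq_{p} \Fa$ that supplies the crucial precision inequality. Running the argument through the \emph{precision} order, rather than the truth order, is precisely what lets the unfounded-set intuition behind derivation rule~(\ref{itwell:2}) go through cleanly.
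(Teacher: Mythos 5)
Your proposal is correct and follows essentially the same route as the paper's own proof: you evaluate the renamed bodies in $I[U^{\triangleright}/\Fa]$, pick the first $U$-atom derived true along a terminal well-founded induction, build the auxiliary interpretation assigning each $Q^{\triangleright}$ the stage-$k$ value of $Q$, and derive the contradiction $\Tr \leq_p \Fa$ via the precision-monotonicity of formula evaluation. The only difference is cosmetic (you argue the contrapositive, showing every model of the antecedent falsifies all of $U$, where the paper assumes a counter-model and refutes it).
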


\begin{proof}
  Assume $\models \neg U^{\triangleright}, \Gamma \la \Delta, \neg\varphi_{P}^{\triangleright}$ for every
  $P \in U$, but $\not \models P, D, \Gamma
  \la \Delta$ for some $P \in U$. Then there exists a model $I$ of
  $D$, $\bigwedge \Gamma$ and $\neg \bigvee\Delta$ satisfying at least
  one $P\in U$. Furthermore, by Lemma \ref{lem:A},
  it holds that $I \models \varphi_{P}$.
  We select this $P$
  in the following way. Let
  $(I^{n})_{n \leq \xi}$ be a terminal well-founded
  induction for $D$ with limit $I^{\xi} = I$. Let $n$ be the
  smallest $n \leq \xi$ such that for some $Q \in U$,
  $Q^{I^{n}} = \Un$ and $Q^{I^{n + 1}} = \Tr$. By selection
  of $n$, there is a unique $P\in U$ such
  that $P^{I^n}=\Un$, $I^n \models \varphi_{P}$ and $P^{I^{n
      + 1}} = \Tr$. Consider this $P$ and $\varphi_{P}$.

  On the one hand, it holds that $I\models \varphi_{P}$. On the other
  hand, consider the interpretation $I^{\triangleright} = I[U^{\triangleright}/ \Fa]$. It is clear that
  $I^{\triangleright}$ satisfies $\neg U^{\triangleright}$, $\bigwedge \Gamma$ and $\neg
  \bigvee\Delta$.  Hence, by the first assumption, it holds that $I^{\triangleright}
  \models \neg \varphi_{P}^{\triangleright}$.  We will derive a contradiction from this.

  Observe that by our choice of $n$, for each $Q \in U$,
  $Q^{I^{n}} = \Fa$ or $Q^{I^{n}} = \Un$. Denote by
  ${I^{n}}^{\triangleright}$ the interpretation that assigns $Q^{I^{n}}$ to $Q^{\triangleright}$
  for every $Q\in U$ and corresponds to $I^{n}$ on all other
  atoms. There are two simple observations that can be made about
  ${I^{n}}^{\triangleright}$:
\begin{itemize}
\item ${I^{n}}^{\triangleright} \leq_p I^{\triangleright}$: indeed, $I^n \leq_p I$ and for each $Q^{\triangleright}\in U^{\triangleright}$, ${Q^{\triangleright}}^{I^{\triangleright}} = \Fa \geq_p {Q^{\triangleright}}^{{I^{n}}^{\triangleright}} = Q^{I^{n}} = \Fa \mbox{ or } \Un$.
\item $(\varphi_{P}^{\triangleright})^{{I^{n}}^{\triangleright}} = \varphi_{P}^{I^{n}} = \Tr$: obvious from the construction of ${I^{n}}^{\triangleright}$ and $\varphi_{P}^{\triangleright}$.
\end{itemize}
Combining these results, we obtain $\Tr = (\varphi_{P}^{\triangleright})^{{I^{n}}^{\triangleright}} \leq_p (\varphi_{P}^{\triangleright})^{I^{\triangleright}} = \Fa$. This is the desired contradiction.
\end{proof}

Having the soundness of the left definition rule, we can explain the
introduction of renaming formulas in the left definition rule.  Consider the left definition rule of the following
form:
\begin{equation}\label{eq:wrong}
\inferencerule{\neg P_1, \ldots, \neg P_n, \Gamma \la \Delta, \neg \varphi_{P_1}; \ldots; \neg P_1, \ldots, \neg P_n,
\Gamma \la \Delta, \neg \varphi_{P_n}}{P_i, D, \Gamma \la \Delta} \end{equation}
where $\{ P_1, \ldots, P_n \} \subseteq \defp{D}$ and $P_i$ is an arbitrary defined atom in $\{ P_1, \ldots, P_n \}$.

Intuitively, the above form of the left definition rule is exactly related to the
second derivation rule~\ref{itwell:2} of the well-founded induction and it is easier to be understood.
However, such an inference rule is not sound. For an arbitrary definition $D$ and any defined atom $P$ of $D$,
$D \la \neg P$ can be inferred applying this rule. We illustrate this with the next example.
\begin{example} \label{ex:necessarityofrenaming} Consider the following definition:
\[D = \defin{P \rul \top}.\]
Let $\Gamma = \{P\}$ and $\Delta$ be an empty set. Since $\neg P, P \la \neg \top$,
we can prove $D \la \neg P$ by using the inference rule \eqref{eq:wrong},
the right $\neg$ rule and the right contraction rule. However, for the same definition $D$ and
empty sets $\Gamma$ and $\Delta$, it is obvious that $D \la P$ can be inferred by using the right definition rule,
which derives a contradiction. Hence, the inference rule \eqref{eq:wrong} is not sound.
\end{example}

From the viewpoint of semantics, since the left definition rule corresponds to the second derivation rule~\ref{itwell:2} of the well-founded
induction, we have to adopt the approach of renaming to represent that the defined atoms of $U$ are unknown in $I^n$
and false in $I^{n + 1}$.

\begin{lemma} \label{lem:apropertyofwellfoundedmodel} Let $D$ be a
  definition, $I$ a model of $D$ and $U$ a non-empty subset of $\defp{D}$.
  If for every $P \in U$, it holds that $\varphi_{P}^{I[U/ \Fa]} = \Fa$, then $P^{I} = \Fa$ for all $P\in U$.
\end{lemma}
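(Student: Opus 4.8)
The plan is to argue by contradiction, reusing the ``first stage of truth-derivation'' technique from the proof of Lemma~\ref{lem:desound}. Since $I$ is a model of $D$, it is two-valued and arises as the limit $I^{\xi}$ of a terminal well-founded induction $(I^n)_{n\leq\xi}$ for $D$; in particular every $P\in U$ satisfies $P^I\in\{\Tr,\Fa\}$. Assuming towards a contradiction that some $P\in U$ has $P^I=\Tr$, I would track when the atoms of $U$ first acquire the value $\Tr$ along this induction.

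Concretely, since $I^0$ maps every defined atom to $\Un$ and the sequence increases in precision, I would pick the least stage $n$ at which some $Q\in U$ passes from $\Un$ to $\Tr$. Such a step can only be an instance of derivation rule~\ref{itwell:1}, so $Q^{I^n}=\Un$ and $\varphi_Q^{I^n}=\Tr$. The argument then hinges on two precision comparisons: (i) by minimality of $n$, no atom of $U$ is yet mapped to $\Tr$ by $I^n$, so every $R\in U$ has $R^{I^n}\in\{\Un,\Fa\}$ and hence $I^n\leq_p I^n[U/\Fa]$; and (ii) from $I^n\leq_p I$ together with the fact that $I^n[U/\Fa]$ and $I[U/\Fa]$ both send $U$ to $\Fa$, we get $I^n[U/\Fa]\leq_p I[U/\Fa]$.

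Applying the precision-monotonicity of the truth function (Proposition~\ref{prop:propertyofprecisionorder}) twice then closes the argument: from (i), $\Tr=\varphi_Q^{I^n}\leq_p\varphi_Q^{I^n[U/\Fa]}$ forces $\varphi_Q^{I^n[U/\Fa]}=\Tr$; from (ii), $\varphi_Q^{I^n[U/\Fa]}\leq_p\varphi_Q^{I[U/\Fa]}$ forces $\varphi_Q^{I[U/\Fa]}=\Tr$. This contradicts the hypothesis $\varphi_Q^{I[U/\Fa]}=\Fa$ (recall $Q\in U$). Hence no atom of $U$ can ever turn $\Tr$ along the induction, and since $I=I^\xi$ is two-valued, every $P\in U$ must be $\Fa$.

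The main obstacle I anticipate is the careful bookkeeping of the three interpretations, in particular justifying $I^n\leq_p I^n[U/\Fa]$, which rests entirely on the claim that at the first stage $n$ where a $U$-atom turns true, no atom of $U$ is already $\Tr$. One must also confirm that $n$ is genuinely a successor stage where a derivation rule is applied, which holds because an atom can only acquire the value $\Tr$ at a successor step (at a limit the value is the precision-supremum of strictly less precise values, none of which is $\Tr$, so the supremum cannot be $\Tr$ either). Once these two points are pinned down, the two invocations of Proposition~\ref{prop:propertyofprecisionorder} are routine.
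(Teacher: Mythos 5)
Your proof is correct and follows essentially the same route as the paper's: both locate the first stage $n$ of a terminal well-founded induction at which an atom of $U$ turns true and derive $\Tr = \varphi_Q^{I^n} \leq_p \varphi_Q^{I[U/\Fa]} = \Fa$ via precision monotonicity. The only cosmetic differences are that you factor the comparison through the intermediate interpretation $I^n[U/\Fa]$ where the paper shows $I^n \leq_p I[U/\Fa]$ directly, and the paper explicitly partitions $U$ into its true and false parts, which your observation that no $U$-atom is $\Tr$ at stage $n$ renders unnecessary.
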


\begin{proof}
Assume that there exists a non-empty set $T$ satisfying that \begin{inlinenum}
  \item\label{(i)} $T \subseteq U$,
  \item\label{(ii)}$P^I = \Tr$ for each $P \in T$, and
  \item\label{(iii)} $P^I = \Fa$ for each $P \in U \setminus T$.
  \end{inlinenum}
Let $(I^n)_{n \leq \xi}$ be a terminal well-founded induction for $D$ with the limit $I^\xi = I$. Let $n$ be the smallest $n \leq \xi$ such that for some $Q \in T$, $Q^{I^n} = \Un$ and $Q^{I^{n+1}} = \Tr$. By selection of $n$, there is a unique $P \in T$ such that $P^{I^n} = \Un$, $\varphi_{P}^{I^n} = \Tr$ and $P^{I^{n+1}} = \Tr$. Consider this $P$ and $\varphi_{P}$.

Observe that by our choice of $n$, for each $Q \in T$, $Q^{I^n} = \Un$. Hence, for each $Q \in T$, $Q^{I^n} \leq_p Q^{I[U / \Fa]} = \Fa$. Because $I^n \leq_p I$, for each $Q \in \defp{D} \setminus T$, we have that $Q^{I[U / \Fa]} = Q^I \geq_p Q^{I^n}$. Combining these results, it is concluded that $I^{n} \leq_p I[U / \Fa]$. Therefore, we obtain that $\Tr = \varphi_{P}^{I^n} \leq_p \varphi_{P}^{I[U / \Fa]} = \Fa$, a contradiction. Hence, there is no $P \in U$ such that $P^I = \Tr$, which follows directly that $P^I = \Fa$ for all $P \in U$.
\end{proof}


\begin{lemma} [Soundness of the non-total definition rule]
\label{lem:nontotalsoundness}
If $\models V^{\diamond}, D^{\diamond}, \Gamma \la \Delta, \bigwedge \neg V^{\triangleright}$ and $\models
\neg V^{\diamond}, D^{\diamond}, \Gamma \la \Delta, \bigwedge V^{\triangleright}$, then $\models D, \Gamma \la
\Delta$.
\end{lemma}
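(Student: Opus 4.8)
The plan is to argue by contraposition, in the same counter\-model style as Lemmas~\ref{lem:disound}--\ref{lem:apropertyofwellfoundedmodel}. Assume both premises are valid but $\not\models D,\Gamma \la \Delta$, and fix a counter-model $I$: a two-valued interpretation with $I\models D$, $I\models\bigwedge\Gamma$ and $I\not\models\bigvee\Delta$. Since $I\models D$, the definition $D$ is total at $\res{I}{\openp{D}}$ and its well-founded model is exactly $I$; the whole goal is to contradict the \emph{two-valuedness} of this well-founded model on the non-empty set $V$. The first thing to record is that $D^{\diamond}$ is a \emph{positive} definition in its defined atoms $V^{\triangleright}$: in each body $\varphi_P^{\diamond}$ the symbols of $V^{\triangleright}$ come only from positive occurrences of $V$-atoms (so they occur only positively), while the symbols of $V^{\diamond}$ come only from negative occurrences (so they occur only negatively). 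Hence $D^{\diamond}$ is monotone, therefore total, and for every two-valued interpretation of its open atoms $\voc\cup V^{\diamond}$ it has a unique two-valued well-founded model.

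Next I would feed the two premises with the two extremal choices for $V^{\diamond}$. Let $J_1$ agree with $I$ on $\voc$, assign $\Tr$ to every atom of $V^{\diamond}$, and interpret $V^{\triangleright}$ by the well-founded model of $D^{\diamond}$; let $J_2$ be the same but assigning $\Fa$ to every atom of $V^{\diamond}$. By construction $J_1\models V^{\diamond},D^{\diamond}$, and since $\Gamma,\Delta$ are over $\voc$, where $J_1$ coincides with $I$, also $J_1\models\bigwedge\Gamma$ and $J_1\not\models\bigvee\Delta$; validity of the first premise then forces $J_1\models\bigwedge\neg V^{\triangleright}$, i.e.\ $(P^{\triangleright})^{J_1}=\Fa$ for every $P\in V$. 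Symmetrically, the second premise yields $(P^{\triangleright})^{J_2}=\Tr$ for every $P\in V$. Applying Lemma~\ref{lem:A} to the monotone definition $D^{\diamond}$ converts these into statements about bodies: $(\varphi_P^{\diamond})^{J_1}=\Fa$ and $(\varphi_P^{\diamond})^{J_2}=\Tr$ for all $P\in V$. In other words, the least-fixpoint map $f$ sending an interpretation of $V^{\diamond}$ to the induced interpretation of $V^{\triangleright}$ is antitone (Proposition~\ref{prop:propertyoftruthorder}, since $V^{\diamond}$ occurs only negatively), and $f$ sends the all-true interpretation of $V^{\diamond}$ to the all-false interpretation of $V^{\triangleright}$ and the all-false one to the all-true one.

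The heart of the argument, and the step I expect to be the main obstacle, is to recognise $f$ as the \emph{alternating} (reduct) operator whose iterates compute the well-founded model of $D$ relative to the context $\res{I}{\voc\setminus V}$: setting $V^{\diamond}$ to an assumed interpretation $x$ and taking the least fixpoint of $D^{\diamond}$ is precisely computing the least model of $D$ with the negative $V$-literals evaluated according to $x$. For such an antitone $f$ the two equalities above give $f^2(\text{all-false})=\text{all-false}$ and $f^2(\text{all-true})=\text{all-true}$, so the least and greatest fixpoints of $f^2$ are the all-false and all-true interpretations respectively. By the standard characterisation of the well-founded model via the alternating fixpoint (the true atoms being the least fixpoint of $f^2$ and the false atoms the complement of its greatest fixpoint), no atom of $V$ can be derived true and no non-empty unfounded subset of $V$ can be collected to be set false; concretely, the all-false value of $f^2$ blocks derivation rule~\ref{itwell:1} and its all-true value blocks the unfounded-set rule~\ref{itwell:2}. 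Hence every atom of $V$ remains $\Un$ in the well-founded model of $D$, contradicting the two-valuedness of $I$.

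Two technical points deserve to be isolated as sub-lemmas. First, the reading ``$(\varphi_P^{\diamond})^{J_1}$ is the least, and $(\varphi_P^{\diamond})^{J_2}$ the greatest, value of $\varphi_P$ over all interpretations of $V$ in the fixed context $I$'' rests entirely on the way the renaming \emph{decouples} the positive and negative occurrences of each $V$-atom, together with Proposition~\ref{prop:propertyoftruthorder}; note in particular that $J_1,J_2$ cannot be replaced by ordinary interpretations such as $I[V/\Fa]$, which is exactly why the naive rule~\eqref{eq:wrong}-style formulation fails and why the genuinely two-sided argument is needed. Second, passing from non-totality of the ``$V$-part'' (with the remaining defined atoms held at their $I$-values) to non-totality of $D$ at $\res{I}{\openp{D}}$ requires a modularity property of the well-founded semantics; for both this and the alternating-fixpoint characterisation I would appeal to the development in \cite{DeneckerT08,lpnmr/DeneckerV07,VanGelder91}. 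With these in place, the final contradiction closes the proof.
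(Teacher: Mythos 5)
Your first half coincides exactly with the paper's proof: the interpretations you call $J_1$ and $J_2$ are precisely the paper's $I_{V^{\diamond}} = (I[V^{\diamond}/\Tr])^{D^{\diamond}}$ and $I_{\neg V^{\diamond}} = (I[V^{\diamond}/\Fa])^{D^{\diamond}}$ (well-defined because $D^{\diamond}$ is positive, hence total), and the deduction that the two premises force $I_{V^{\diamond}}\models\bigwedge\neg V^{\triangleright}$ and $I_{\neg V^{\diamond}}\models\bigwedge V^{\triangleright}$ is the same. You diverge at the derivation of the contradiction. The paper stays inside a terminal well-founded induction $(I^n)_{n\leq\xi}$ for $D$ itself: it takes the least $n$ at which some atom of $V$ first receives a value, splits on whether this happens by derivation rule~\ref{itwell:1} or by rule~\ref{itwell:2}, and in each case transfers the relevant body value from $I^n$ (resp.\ $I^{n+1}$) to $I_{V^{\diamond}}$ (resp.\ to $I_{\neg V^{\diamond}}[U^{\triangleright}\cap V^{\triangleright}/\Fa]$) by $\leq_p$-monotonicity, then converts body values into head values via Lemma~\ref{lem:A} and Lemma~\ref{lem:apropertyofwellfoundedmodel}. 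You instead package $x\mapsto\mathrm{lfp}(D^{\diamond}\text{ at }x)$ as an antitone operator $f$ relative to the frozen context $\res{I}{\voc\setminus V}$, note that $f$ swaps the two extreme interpretations so that $f^2$ fixes both, and conclude via (a) the alternating-fixpoint characterisation of the well-founded model and (b) a modularity property transferring ``all of $V$ stays $\Un$'' from this relativised computation to the well-founded model of $D$.

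The caveat is that (b) is not a deferrable technicality: it is where essentially all of the mathematical content of the lemma lives. The operator $f$ you construct is the alternating operator of the \emph{subdefinition} $\{P\rul\varphi_P\mid P\in V\}$ with the remaining defined atoms of $D$ held at their final $I$-values, whereas the well-founded induction for $D$ starts all defined atoms at $\Un$ and interleaves derivations for $V$- and non-$V$-atoms. Bridging the two requires showing that every step of the genuine induction that assigns a value to a $V$-atom can be simulated against the more precise frozen context --- and the proof of that simulation is, step for step, the paper's two-case $\leq_p$-transfer argument. The references you cite give the alternating-fixpoint characterisation for the standard (unrelativised) setting, but the relativisation statement in the generality needed here (arbitrary PC bodies, an arbitrary subset $V$ of $\defp{D}$, occurrence-based renaming) is not something you can simply quote; note also that the paper's own definition of the well-founded semantics is via well-founded inductions, so even (a) is an equivalence that would need to be established within this framework. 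So your argument is viable and its skeleton is sound, but as written it relocates the crux into an unproven sub-lemma rather than resolving it; the paper's direct induction argument is both self-contained and is, in effect, the proof that sub-lemma would require.
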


\begin{proof}
Assume towards contradiction that
\begin{equation} \label{AssA}
\models V^{\diamond}, D^{\diamond}, \Gamma \la \Delta, \bigwedge \neg V^{\triangleright} \mbox{\ and \ }
\models \neg V^{\diamond}, D^{\diamond}, \Gamma  \la \Delta, \bigwedge V^{\triangleright} \mbox{\ but \ }
\not \models D, \Gamma \la \Delta.
\end{equation}
Then there exists a $\voc$-interpretation $I$ satisfying $D$, $\bigwedge \Gamma$ and $\neg \bigvee \Delta$. Consider the vocabulary $\tau^{\diamond} =
\tau\cup V^{\triangleright}\cup V^{\diamond}$.
$I$ can be expanded into two
$\tau^{\diamond}$-interpretations $I_{V^{\diamond}}$ and $I_{\neg V^{\diamond}}$ as follows:
$$I_{V^{\diamond}} = (I[V^{\diamond}/ \Tr])^{D^{\diamond}} \mbox{\ \ and \ \ } I_{\neg V^{\diamond}} =
(I[V^{\diamond}/ \Fa])^{D^{\diamond}}.$$
Since $D^{\diamond}$ is a positive definition, hence total definition with open symbols
$\tau\cup V^{\diamond}$, both interpretations are well-defined. Moreover they
obviously satisfy:
$$I_{V^{\diamond}} \models \bigwedge V^{\diamond}\land D^{\diamond} \land \bigwedge \Gamma \land \neg \bigvee \Delta
\mbox{\ \ \ and \ \ \ }I_{\neg V^{\diamond}} \models
\bigwedge \neg V^{\diamond}\land D^{\diamond} \land \bigwedge \Gamma \land \neg \bigvee \Delta.$$

By (\ref{AssA}), it follows that
\begin{equation} \label{PropC}
I_{V^{\diamond}}  \models \bigwedge \neg V^{\triangleright} \mbox{\ \ and\ \ }
I_{\neg V^{\diamond}} \models \bigwedge V^{\triangleright}.
\end{equation}

Let $(I^n)_{n \leq \xi}$ be a terminal well-founded
induction for $D$ with limit $I^\xi = I$. There exists a least ordinal
$n$ such that $P^{I^n} = \Un$ for every $P \in V$ and there
exists at least one $P \in V$ with $P^{I^{n+1}} \not = \Un$. We
distinguish between the case where $P$ is $\Tr$ in $I^{n+1}$ and
the case where $P \in U$ for some non-empty set $U \subseteq \defp{D}$ such that
all atoms of $U$ are $\Fa$ in $I^{n+1}$. We will prove in the
first case that $I_{V^{\diamond}} \models P^{\triangleright}$ and in the second case that
$I_{\neg V^{\diamond}} \models \neg P^{\triangleright}$ for every $P^{\triangleright} \in V^{\triangleright} \cap U^{\triangleright}$. This
contradicts (\ref{PropC}).

\begin{itemize}
\item Assume that $P^{I^n} = \Un$ and $P^{I^{n+1}} = \Tr$.
Then for the rule $P \rul \varphi_{P} \in D$, it holds that $\varphi_{P}^{I^n} = \Tr$.
Consider the corresponding rule $P^{\triangleright} \rul
\varphi_{P}^{\diamond} \in D^{\diamond}$. If we can show that $I_{V^{\diamond}} \models \varphi_{P}^{\diamond}$,
then Lemma \ref{lem:A} will yield that $I_{V^{\diamond}} \models P^{\triangleright}$ which is what
we must prove here.

Consider the $\tau^{\diamond}$-interpretation ${I^n}^{\diamond}$ which extends
$I^n$ by interpreting each symbol $Q^{\triangleright}$ and $Q^{\diamond}$ as
$Q^{I^n}$, i.e., as $\Un$. Clearly, $(\varphi_{P}^{\diamond})^{{I^n}^{\diamond}} =
\varphi_{P}^{I^n} = \Tr$, and it suffices to show that
${I^n}^{\diamond}\leq_p I_{V^{\diamond}}$ to obtain that $I_{V^{\diamond}} \models
\varphi_{P}^{\diamond}$. But this is straightforward since ${I^n}^{\diamond}|_\tau =
I^n \leq_p I = I_{V^{\diamond}}|_\tau$ and $(Q^{\triangleright})^{{I^n}^{\diamond}} =
(Q^{\diamond})^{{I^n}^{\diamond}} = \Un$ for each $Q\in V$.  Hence, it is indeed
the case that ${I^n}^{\diamond} \leq_p I_{V^{\diamond}}$ which leads to the
contradiction.\\ \

\item For the other case, assume that $I^{n+1} = I^n[U/ \Fa]$
where $P\in U$. For each $P\in U\cap V (\neq \emptyset)$ and its
rule $P \rul \varphi_{P} \in D$, it holds that $\varphi_{P}^{I^{n+1}} =
\Fa$. We will use this to show that for each rule $P^{\triangleright}\rul \varphi_{P}^{\diamond}\in
D^{\diamond}$ with $P^{\triangleright}\in U^{\triangleright}\cap V^{\triangleright}$, $\varphi_{P}^{\diamond}$ is false in the
interpretation ${I_{\neg V^{\diamond}}[U^{\triangleright}\cap V^{\triangleright}/ \Fa]}$. Then, since $I_{\neg
V^{\diamond}}$ satisfies $\D^{\diamond}$, we can apply Lemma
\ref{lem:apropertyofwellfoundedmodel} to obtain that each $P^{\triangleright}\in U^{\triangleright}\cap
V^{\triangleright}$ is false in $I_{\neg V^{\diamond}}$. This produces the contradiction with
(\ref{PropC}).

The key point is therefore to show that all these renamed rule bodies
$\varphi_{P}^{\diamond}$ are false in the interpretation ${I_{\neg V^{\diamond}}[U^{\triangleright}\cap
V^{\triangleright}/ \Fa]}$. To prove this, we use the same technique as in the previous
case, namely we construct an interpretation which is less precise than
${I_{\neg V^{\diamond}}[U^{\triangleright}\cap V^{\triangleright}/ \Fa]}$ and which falsifies all the concerned
rule bodies. We choose this interpretation as the
$\tau^{\diamond}$-interpretation $I^{\diamond}$ which extends $I^{n+1}$ by
interpreting each symbol $Q^{\triangleright}$ and $Q^{\diamond}$ as $Q^{I^{n+1}}$, i.e. as
$\Fa$ if $Q\in U\cap V$ and as $\Un$ if $Q\in V\setminus U$. Notice
that for all formulas $\psi$ over $\tau$, it holds that
$\psi^{I^{n+1}} = (\psi^{\diamond})^{I^{\diamond}}$.

Let us verify that $I^{\diamond} \leq_p I_{\neg V^{\diamond}}[U^{\triangleright}\cap
V^{\triangleright}/ \Fa]$. We have $I^{\diamond}|_\tau = I^{n+1} \leq_p I = I_{\neg
V^{\diamond}}[U^{\triangleright}\cap V^{\triangleright}/ \Fa]|_\tau$. The interpretation $I^{\diamond}$ interprets all
symbols $Q^{\diamond}$ as $\Un$ or $\Fa$ whereas $I_{\neg V^{\diamond}}[U^{\triangleright}\cap V^{\triangleright}/ \Fa]$
interprets them as $\Fa$, just like $I_{\neg V^{\diamond}}$. Symbols of $U^{\triangleright}\cap
V^{\triangleright}$ are interpreted as $\Fa$ in both interpretations, and finally, the
remaining symbols of $V^{\triangleright}\setminus U^{\triangleright}$ are interpreted as $\Un$ in
$I^{\diamond}$ which is certainly less precise than in the other
interpretation.

It follows that for every rule $P^{\triangleright}\rul\varphi_{P}^{\diamond}\in D^{\diamond}$ with $P\in
V\cap U$, $\Fa = \varphi_{P}^{I^{n+1}} = (\varphi_{P}^{\diamond})^{I^{\diamond}} \leq_p
(\varphi_{P}^{\diamond})^{I_{\neg V^{\diamond}}[U^{\triangleright}\cap V^{\triangleright}/ \Fa]} = \Fa$.  As explained
before, this leads to the desired contradiction.

\end{itemize}
\end{proof}


\begin{lemma} [Soundness of the definition introduction rule]\label{lem:newdefinitionsoundness}
Let $D$ be a total definition. If $\models D', \Gamma \la \Delta, P' \equiv P$ for every $P \in \defp{D}$, then $\models \Gamma \la \Delta, D$.
\end{lemma}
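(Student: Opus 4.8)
The plan is to prove the sequent $\Gamma \la \Delta, D$ valid directly. Take an arbitrary model $I$ of $\bigwedge\Gamma$ (so $I$ is a two-valued $\voc$-interpretation). If $I \models \bigvee\Delta$ there is nothing to show, so I assume $I \not\models \bigvee\Delta$ and aim to prove $I \models D$. Since $D$ is total, the well-founded partial interpretation $M := (\res{I}{\openp{D}})^{D}$ is two-valued, and it is the unique model of $D$ agreeing with $I$ on the open atoms $\openp{D}$. Because $M$ already coincides with $I$ on $\openp{D}$ by construction, proving $I \models D$ reduces to showing $I(P) = M(P)$ for every $P \in \defp{D}$, i.e.\ $I = M$.

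To exploit the premises I would move to the extended vocabulary $\voc'$ and build the $\voc'$-interpretation $J$ that coincides with $I$ on $\voc$ and sets $J(P') = M(P)$ for each $P \in \defp{D}$. The central claim is that $J \models D'$. This rests on the structure of $D'$: its defined atoms are exactly the primed atoms $P'$, so its open atoms are all of $\voc$; and since every occurrence of a defined atom $P$ in $D$ was renamed to $P'$, the original defined atoms occur in no body of $D'$ and are inert open symbols there. Consequently the well-founded induction of $D'$ over $\res{J}{\voc} = I$ runs in lockstep with the well-founded induction of $D$ over $\res{I}{\openp{D}}$, so the primed atoms converge to exactly the values $M(P)$. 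Hence $(\res{J}{\voc})^{D'}$ assigns $M(P)$ to each $P'$ and the values of $I$ to $\voc$; this interpretation is two-valued by totality of $D$ and equals $J$ by construction, which gives $J \models D'$. Moreover $J$ agrees with $I$ on $\voc$, so $J \models \bigwedge\Gamma$ and $J \not\models \bigvee\Delta$.

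Now I apply the hypotheses. For each $P \in \defp{D}$ the premise $\models D', \Gamma \la \Delta, P' \equiv P$, combined with $J \models D' \land \bigwedge\Gamma$ and $J \not\models \bigvee\Delta$, forces $J \models P' \equiv P$, i.e.\ $J(P') = J(P)$. Since $J(P') = M(P)$ and $J(P) = I(P)$, this yields $I(P) = M(P)$ for every defined atom. Together with agreement on $\openp{D}$ this gives $I = M$, hence $I \models D$, which completes the argument and establishes $\models \Gamma \la \Delta, D$.

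The main obstacle is the claim $J \models D'$, and specifically the lockstep argument that the primed defined atoms of $D'$ receive precisely the well-founded values of the corresponding atoms of $D$; making this rigorous amounts to a short induction matching the two well-founded constructions stage by stage, using that the inert open copies of the defined atoms never appear in the renamed rule bodies. It is exactly here that totality of $D$ is essential, since it is what guarantees $M$, and hence $J$, is two-valued; if $M$ failed to be two-valued then $J$ would not be a model at all, which is the source of the promised counterexample for non-total definitions.
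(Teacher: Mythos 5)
Your proof is correct and follows essentially the same route as the paper's: both extend the given counter-model candidate $I$ to the well-founded model of the renamed definition $D'$ over $\voc'$, observe that the primed atoms there take exactly the well-founded values of $D$ (the ``lockstep'' renaming argument, which the paper also asserts without a detailed stage-by-stage induction), and then use the premises $P' \equiv P$ to force $I$ to agree with the well-founded model of $D$, hence $I \models D$. The only difference is presentational — you argue directly while the paper argues by contradiction from a defined atom $Q$ with $Q^I \ne Q^J$ — which is an immaterial rearrangement of the same ideas.
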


\begin{proof}
Assume $\models D', \Gamma \la \Delta, P' \equiv P$ for every $P \in \defp{D}$ but $\not \models \Gamma \la \Delta, D$. Then there exists a two-valued $\voc$-interpretation $I$ such that $I \models \bigwedge \Gamma$ but $I \not \models \bigvee \Delta$, $I \not \models D$. Denote by $J$ the two-valued well-founded model of $D$ extending $\res{I}{\openp{D}}$. Because $I \not \models D$, there exists a defined atom $Q$ of $D$ such that $Q^I \not = Q^{J}$. Since $D$ is a total definition and $D'$ is obtained by replacing each occurrence of each defined atom $P$ in $D$ by $P'$, $D'$ is a total definition. Thus, there exists a two-valued $\voc'$-interpretation $I'$ such that $I'$ is the well-founded model of $D'$ extending $I$. Notice that for every $P \in \defp{D}$, $P^{I'} = P^I$. Because neither $\Gamma$ nor $\Delta$ contains an occurrence of an atom $P'$, it holds that $I' \models \bigwedge \Gamma$ and $I' \not \models \bigvee \Delta$. Therefore, by the first assumption, it is obtained that $I' \models P' \equiv P$ for every $P \in \defp{D}$. Also, because $D'$ is obtained by renaming all defined atoms and none of the open atoms, it holds that $P^{J} = (P')^{I'}$ for every $P \in \defp{D}$. Hence, $Q^I = Q^{I'} = (Q')^{I'} = Q^{J}$, a contradiction. Therefore, $\models \Gamma \la \Delta, D$.
\end{proof}

The definition introduction rule is not sound if the inferred definition $D$ is not total. We illustrate it with an example.
\begin{example} \label{ex:deintronontotal}
Consider the definition as follows:
\[ D = \defin{P \rul \neg P}. \]
Let $\Gamma$ and $\Delta$ be empty
sets. It is obvious that $D' = \defin{P' \rul \neg P'}$ is not total. Thus, $\models D' \la P' \equiv P$ but $\not \models \la D$, which shows that the definition introduction rule is not sound when the inferred definition $D$ is non-total.
\end{example}

Notice that all inference rules in \ps except the definition introduction rule are sound with respect to both total and non-total definitions.
By induction on the number of inference rules in a proof of a sequent, we can easily prove the soundness of \ps.

\begin{theorem}[Soundness]\label{theo:soundness}
If a sequent $\Gamma \la \Delta$ is provable in \ps without using the definition introduction rule, then $\models \Gamma \la \Delta$. If a sequent $\Gamma \la \Delta$ is provable in \ps and all definitions occurring in $\Gamma$ and $\Delta$ are total, then $\models \Gamma \la \Delta$.
\end{theorem}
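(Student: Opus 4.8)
The plan is to prove the soundness theorem by structural induction on the proof tree, using the soundness lemmas already established for each individual inference rule. The statement has two parts, and I would handle them together by tracking which inference rules appear in a given proof.

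First I would establish the base case: every leaf of an \ps-proof is an axiom, and as noted in the text, all axioms (sequents of the form $\Gamma, A \la A, \Delta$, or $\bot \la \Delta$, or $\Gamma \la \top$) are trivially valid. For the inductive step, consider an interior node $S'$ that is the consequence of some instance of an inference rule whose premises are the children of $S'$. By the induction hypothesis, each premise is valid. The key observation is that every inference rule of \ps has been shown to be sound: the structural and logical rules by the standard references \cite{Szabo69,Takeuti75}, the right definition rule by Lemma~\ref{lem:disound}, the left definition rule by Lemma~\ref{lem:desound}, the non-total definition rule by Lemma~\ref{lem:nontotalsoundness}, and the definition introduction rule by Lemma~\ref{lem:newdefinitionsoundness}. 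Applying the appropriate soundness lemma to the rule instance at $S'$, I conclude that $S'$ is valid. Hence the root sequent $\Gamma \la \Delta$ is valid.

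The subtlety that separates the two parts of the theorem is the \emph{proviso} on the definition introduction rule: Example~\ref{ex:deintronontotal} shows this rule is unsound when the inferred definition is non-total, so its soundness (Lemma~\ref{lem:newdefinitionsoundness}) requires the definition $D$ being introduced to be total. For the first claim, I restrict to proofs that never use the definition introduction rule, so every rule applied is one of the unconditionally sound rules, and the induction goes through without any totality hypothesis. For the second claim, I must argue that whenever the definition introduction rule is applied within a proof of $\Gamma \la \Delta$ in which all definitions occurring in $\Gamma$ and $\Delta$ are total, the definition $D$ introduced at that application is itself total, so that Lemma~\ref{lem:newdefinitionsoundness} applies.

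The main obstacle will therefore be justifying that totality propagates correctly through the proof tree so that every invocation of the definition introduction rule meets its precondition. I would argue this by tracing the definitions appearing in the sequents upward from the root: every definition that occurs in any sequent of the proof ultimately traces back to a definition occurring in $\Gamma$ or $\Delta$ (the inference rules, read bottom-up, do not manufacture genuinely new definitions in the relevant position — the primed and renamed definitions $D'$, $D^{\diamond}$ produced by the rules are total by construction, being positive renamings or faithful renamings of total definitions). Since the root's definitions are assumed total, each $D$ introduced via the definition introduction rule is total, Lemma~\ref{lem:newdefinitionsoundness} applies at every such node, and the induction closes. Once this totality-tracking is in place, the remainder of the argument is the routine induction described above.
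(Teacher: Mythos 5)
Your overall strategy---induction on the proof tree, closing each interior node with the soundness lemma for the rule applied there, and splitting the two claims according to whether the definition introduction rule occurs---is exactly the paper's proof, which the authors compress into a single sentence. The base case and the inductive step for every rule except the definition introduction rule are fine as you describe them.

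The one place where you go beyond the paper is the totality-propagation argument, and the claim it rests on is false: it is not true that every definition occurring in a sequent of the proof traces back to a definition occurring in $\Gamma$ or $\Delta$. The cut rule
\[ \inferencerule{\Gamma \rightarrow \Delta, A; \ \ A, \Gamma \rightarrow \Delta}{\Gamma \rightarrow \Delta} \]
has a cut formula $A$ that is an arbitrary PC(ID)-formula absent from the conclusion, so reading the tree from the root upward a cut can manufacture a sequent containing a definition that occurs nowhere in the root---for instance the non-total $D_0=\defin{P \rul \neg P}$. If the definition introduction rule were syntactically applicable to such a $D_0$, the system would actually be unsound under the theorem's stated hypothesis: $D_0' \la P'\equiv P$ is vacuously valid and provable (via Lemma~\ref{lem:nontotalcomp} and weakening), the rule would yield $\la D_0$, and cutting against the provable $D_0 \la$ would give the empty sequent. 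So no argument that only constrains the definitions visible in the root can guarantee the precondition of Lemma~\ref{lem:newdefinitionsoundness} at interior nodes. What actually closes the induction is that the definition introduction rule is \emph{stated} with the side condition ``let $D$ be a total definition'': every legal application already satisfies the hypothesis of Lemma~\ref{lem:newdefinitionsoundness}, and no tracing is needed. The gap is therefore localized and easily repaired---replace the tracing argument by an appeal to the rule's side condition---but as written your justification of the second claim relies on a premise that cut refutes.
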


\subsection{Completeness}
\ps is not complete in general. Intuitively, this is because the only inference rules that allow to introduce a positive occurrence of a definition in the succedent of a sequent are the axioms, the weakening rules and the definition introduction rule. As shown in the above subsection, the definition introduction rule is not sound with respect to non-total definitions. Thus, no other inference rule allows to derive a non-total definition from some propositional formulas. Therefore, one cannot synthesize non-total definitions with \ps, i.e., not all valid sequents of the form $\Gamma \la D$, where $D$ is a non-total definition, can be proved in this system.

We will however prove the completeness for a restricted class of sequents, namely the sequents $\Gamma \la \Delta$ such that every definition occurring negatively in $\Gamma$ or positively in $\Delta$ must be total.
The main difficulty in the completeness proof for \ps is to handle the definitions in the sequents (We already know that the propositional part of \ps is complete. See e.g. \cite{Szabo69,Takeuti75}).

First, we focus on the completeness of sequents of the form $D, \Gamma \la \Delta$, where $\Gamma$ and $\Delta$ are
sets of PC-formulas and $D$ is a definition. Notice that the definition $D$ appearing in the sequent may be non-total.

\begin{lemma}\label{lem:gencomp}
Let $D$ be a definition and $\Gamma$ a
set of open literals of $D$ such that for every $Q \in \openp{D}$ either $Q \in \Gamma$ or $\neg Q \in \Gamma$. Let $I_O$ be the unique two-valued $\openp{D}$-interpretation such that $I_O \models \bigwedge \Gamma$ and $I$ the well-founded model of $D$ extending $I_O$. If $L$ is a defined literal of $D$ such that $L^I = \Tr$, then $D, \Gamma \la L$ is provable in \ps.
\end{lemma}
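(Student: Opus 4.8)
The plan is to induct on a terminal well-founded induction $(I^n)_{n \leq \xi}$ whose limit is $I$. Since $D$ is finite, $\defp{D}$ is finite and the induction has finitely many steps, so every defined atom that is not $\Un$ in $I$ is \emph{decided} at some stage $n+1$, either turned to $\Tr$ by the derivation rule~\ref{itwell:1} or turned to $\Fa$ (inside some set $U$) by the derivation rule~\ref{itwell:2}. I would prove, by induction on the stage, the strengthening: for every defined atom $R$ decided at a stage $\leq n$, the sequent $D, \Gamma \la R$ is \ps-provable if $R^I = \Tr$, and $D, \Gamma \la \neg R$ is \ps-provable if $R^I = \Fa$. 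The lemma is the instance for the literal $L$. Throughout I write $\Theta$ for the set of defined literals decided strictly before the step under consideration, namely $\{R : R^{I^n} = \Tr\} \cup \{\neg R : R \text{ defined}, R^{I^n} = \Fa\}$; each such literal is true in $I$ and decided earlier, so by the induction hypothesis $D, \Gamma \la M$ is provable for every $M \in \Theta$.

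Consider an atom $P$ decided true at stage $n+1$, so $\varphi_P^{I^n} = \Tr$. I would apply the right definition rule with antecedent $\Gamma, \Theta$, reducing $D, \Gamma, \Theta \la P$ to the purely propositional sequent $\Gamma, \Theta \la \varphi_P$. This sequent is valid: any two-valued model $J$ of $\Gamma \cup \Theta$ satisfies $J \geq_p I^n$ (it agrees with $I^n$ on open atoms and on decided defined atoms, and $\Un \leq_p$ anything elsewhere), so $\varphi_P^J \geq_p \varphi_P^{I^n} = \Tr$ forces $\varphi_P^J = \Tr$; completeness of the propositional fragment then yields a \ps-proof. To remove $\Theta$, I would cut $D, \Gamma, \Theta \la P$ successively against the weakened induction-hypothesis proofs of $D, \Gamma \la M$ for $M \in \Theta$, peeling off the literals one by one to reach $D, \Gamma \la P$.

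The delicate case is an atom $P$ in a set $U$ turned false at stage $n+1$, so $I^{n+1} = I^n[U/\Fa]$ and $\varphi_{P_j}^{I^{n+1}} = \Fa$ for every $P_j \in U$. I would apply the left definition rule with antecedent $\Gamma, \Theta$ and empty succedent to obtain $P, D, \Gamma, \Theta \la$ from the premises $\neg U^{\triangleright}, \Gamma, \Theta \la \neg \varphi_{P_j}^{\triangleright}$, then use the right $\neg$ rule to get $D, \Gamma, \Theta \la \neg P$, and finally cut out $\Theta$ as before. The crux is the validity of each premise. Let $K'$ be the $\voc^{\triangleright}$-interpretation setting every $P^{\triangleright}$ to $\Fa$, agreeing with $I^n$ on $\voc$, and leaving $\Un$ the atoms undecided in $I^n$ (in particular all of $U$); let $K$ differ from $K'$ only in setting the atoms of $U$ to $\Fa$. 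In $\varphi_{P_j}^{\triangleright}$ the atoms of $U$ occur \emph{only negatively}, their positive occurrences having been renamed to $\triangleright$-atoms. Thus $(\varphi_{P_j}^{\triangleright})^{K} = \varphi_{P_j}^{I^{n+1}} = \Fa$, since in $K$ both the $\triangleright$-atoms and the $U$-atoms equal $\Fa$, matching $I^{n+1}$. As $K'$ assigns $\Un > \Fa$ to the negatively-occurring $U$-atoms, Proposition~\ref{prop:propertyoftruthorder} gives $(\varphi_{P_j}^{\triangleright})^{K'} \leq (\varphi_{P_j}^{\triangleright})^{K} = \Fa$, hence $(\varphi_{P_j}^{\triangleright})^{K'} = \Fa$. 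Finally any two-valued model $J$ of $\neg U^{\triangleright} \cup \Gamma \cup \Theta$ satisfies $K' \leq_p J$, so $\Fa = (\varphi_{P_j}^{\triangleright})^{K'} \leq_p (\varphi_{P_j}^{\triangleright})^J$ forces $\varphi_{P_j}^{\triangleright}$ false in $J$; the premise is valid, and being propositional it is \ps-provable.

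I expect this last validity argument to be the main obstacle: the renaming $\varphi_{P_j}^{\triangleright}$ is exactly what lets the forced-false $\triangleright$-atoms stand in for the positive occurrences of $U$ while the negative occurrences stay free, and the proof hinges on making the truth-order monotonicity (Proposition~\ref{prop:propertyoftruthorder}) interact correctly with the precision-order step $K' \leq_p J$. By contrast, choosing $\Theta$ so that the induction hypothesis applies and discharging it by repeated cuts are routine once the validity of the definition-rule premises is secured.
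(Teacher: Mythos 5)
Your proposal is correct and follows essentially the same route as the paper's proof: induct along a terminal well-founded induction, carry the set of already-decided defined literals (your $\Theta$ is the paper's $\Delta^n$), reduce the true case to the propositional sequent $\Gamma,\Theta\la\varphi_P$ via the right definition rule and the false case to the premises $\neg U^{\triangleright},\Gamma,\Theta\la\neg\varphi_{P}^{\triangleright}$ via the left definition rule, justify validity by a precision-order comparison with $I^n$ (resp.\ $I^n[U^{\triangleright}/\Fa]$), and discharge $\Theta$ by cuts. Your handling of the unfounded-set case is in fact slightly more careful than the paper's, which asserts $(\varphi_P^{\triangleright})^{I^n[U^{\triangleright}/\Fa]}=\varphi_P^{I^{n+1}}$ outright where you correctly route through the intermediate interpretation $K$ and the truth-order monotonicity of the negatively occurring $U$-atoms.
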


\begin{proof}
Let $(I^n)_{n \leq \xi}$ be a terminal well-founded induction for $D$ extending $I_O$ with the limit $I^\xi = I$. Denote by $\Delta^n$ a
set of all defined literals $L$ such that $L^{I^n} = \Tr$ in arbitrary order. We prove that $\Delta^n, D, \Gamma \la L$ is provable in \ps for all $L \in \Delta^{n +1} \setminus \Delta^n$. For each $L \in \Delta^{n+1} \setminus \Delta^{n}$, $L^{I^n} = \Un$ and $L^{I^{n+1}}= \Tr$. We distinguish between the case where $\Delta^{n+1} \setminus \Delta^{n}$ contains one positive literal and the case where it contains a set of negative literals.
\begin{itemize}
\item Assume that $\Delta^{n+1} \setminus \Delta^n$ consists of one defined atom $P$.
For every two-valued $\voc$-interpretation $J$ such that $J$ is a model of $\bigwedge \Gamma$ and $\bigwedge \Delta^n$, $I^n \leq_p J$. Indeed, $L^{I^n} = L^J = \Tr$ for every $L \in \Gamma$, $L^{I^n} = L^J = \Tr$ for every $L \in \Delta^n$ and for every other atom $Q \in \voc$, $Q^{I^n} = \Un \leq_p Q^J$. $P^{I^{n+1}} = \Tr$, hence $\varphi_{P}^{I^n} = \Tr$.
It follows that $\varphi_{P}^{J} = \Tr$. Thus, $\models \Delta^n, \Gamma \la \varphi_{P}$. Therefore, by the completeness of the propositional part of \ps, the sequent $\Delta^n, \Gamma \la \varphi_{P}$ is provable in \ps. Hence, by
the right definition rule,
$\Delta^{n}, D, \Gamma \la P$ is provable in \ps.
\item For the other case, assume that $\Delta^{n+1} \setminus \Delta^n$ is a
set of negative literals. Denote the set $\{ P \mid \neg P \in \Delta^{n+1} \setminus \Delta^n \}$ by $U$.
Recall that $I^{n+1} = I^n[U/ \Fa]$. $P^{I^{n+1}} = \Fa$ for each $P \in U$, hence $\varphi_{P}^{I^{n+1}} = \Fa$. Consider the interpretation ${I^{n+1}}^{\triangleright} = I^n[U^{\triangleright} / \Fa]$. There are two simple observations that can be made about ${I^{n+1}}^{\triangleright}$ and each $\varphi_{P}^{\triangleright}$:
\begin{itemize}
\item ${I^{n+1}}^{\triangleright} \leq_p J'$ for every two-valued $\voc \cup U^{\triangleright}$-interpretation $J'$ such that $J'$ satisfies $\bigwedge \Gamma$, $\bigwedge \Delta^n$ and $\bigwedge \neg U^{\triangleright}$: indeed, $L^{{I^{n+1}}^{\triangleright}} = L^{J'} = \Tr$ for every $L \in \Gamma$, $L^{{I^{n+1}}^{\triangleright}} = L^{J'} = \Tr$ for every $L \in \Delta^n$, $(P^{\triangleright})^{{I^{n+1}}^{\triangleright}} = {P^{\triangleright}}^{J'} = \Fa$ for every $P^{\triangleright} \in U^{\triangleright}$ and
$Q^{{I^{n+1}}^{\triangleright}} = \Un \leq_p Q^{J'}$ for every other atom $Q \in \voc \cup U^{\triangleright}$.
\item $(\varphi_{P}^{\triangleright})^{{I^{n+1}}^{\triangleright}} = \varphi_{P}^{I^{n+1}} = \Fa$: obvious from the construction of ${I^{n+1}}^{\triangleright}$ and $\varphi_{P}^{\triangleright}$.
\end{itemize}
Combining these results, we obtain $(\varphi_{P}^{\triangleright})^{J'} = \Fa$ for every two-valued interpretation $J'$ satisfying $\bigwedge \Gamma$, $\bigwedge \Delta^n$ and $\bigwedge \neg U^{\triangleright}$. It follows that $\models \neg U^{\triangleright}, \Delta^n, \Gamma \la \neg \varphi_{P}^{\triangleright}$ for every $P \in U$. By the completeness of the propositional part of \ps, the left
   definition rule and the right $\neg$ rule 
   the sequent $\Delta^{n}, D, \Gamma \la \neg P$ is provable in \ps for every
   $P \in U$.
\end{itemize}
Since $(I^n)_{n \leq \xi}$ is a terminal well-founded induction for $D$ with the limit $I = I^\xi$, it is obvious that the set of defined literals $L$ for which $L^I = \Tr$ is exactly the set of all defined literals in $\Delta^\xi$.  Thus, by using the cut rule, it is easy to show by induction on $n$ that if $L$ is a defined literal of $D$ such that $L^I = \Tr$, the sequent $D, \Gamma \la L$ is provable in \ps.
\end{proof}

Notice that in the above lemma, we do not require the totality of the definition. So the definition $D$ can be non-total and the well-founded model of $D$ may be a three-valued interpretation.

\begin{lemma}\label{lem:defcomp}
Let $D$ be a total definition and let $\Gamma$ be a
set of open literals of $D$, such that for every atom $Q \in \openp{D}$ either $Q \in \Gamma$ or $\neg Q \in \Gamma$. Let $L$ be a defined literal of $D$. If $\models D, \Gamma \la L$, then $D, \Gamma \la L$ is provable in \ps.
\end{lemma}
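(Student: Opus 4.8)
The plan is to reduce everything to Lemma~\ref{lem:gencomp} by exploiting totality. First I would observe that $\Gamma$, being a complete set of open literals (for every $Q \in \openp{D}$ either $Q \in \Gamma$ or $\neg Q \in \Gamma$), pins down a unique two-valued $\openp{D}$-interpretation $I_O$ with $I_O \models \bigwedge \Gamma$. Let $I = I_O^D$ be the well-founded model of $D$ extending $I_O$. Since $D$ is total, $I$ is two-valued, and by construction $I$ is a model of $D$ that moreover satisfies every literal of $\Gamma$, since these involve only open atoms of $D$.

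Next I would split on the truth value of $L$ in $I$, which is legitimate precisely because totality forces $L^{I} \in \{\Tr, \Fa\}$. In the case $L^{I} = \Tr$, the claim follows immediately: $L$ is a defined literal of $D$ that is true in the well-founded model $I$, so Lemma~\ref{lem:gencomp} (which does not require totality) yields an \ps-proof of $D, \Gamma \la L$.

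The remaining case, $L^{I} = \Fa$, I would dispose of by showing it cannot arise under the hypothesis $\models D, \Gamma \la L$. Indeed, if $L^{I} = \Fa$, then $I$ is a two-valued model of $D$ and of $\bigwedge \Gamma$ with $I \not\models L$; that is, $I$ is a counter-model for the sequent $D, \Gamma \la L$, contradicting its validity. Hence this case is vacuous, only the first case occurs, and provability follows.

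The argument is light because the heavy lifting is already carried out in Lemma~\ref{lem:gencomp}. The one step that genuinely consumes the extra hypothesis is totality, so I expect the only subtlety to be making explicit why totality is exactly what bridges validity and provability for a single defined literal. Without it, $I = I_O^D$ could be three-valued and $L^{I}$ could equal $\Un$: then the true-case appeal to Lemma~\ref{lem:gencomp} would be unavailable, and the false-case counter-model argument would collapse because a three-valued $I$ is never a model of $D$. Thus I would emphasize that the two-valuedness of the well-founded model is precisely what lets the case split be exhaustive and lets each branch close.
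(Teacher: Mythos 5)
Your argument is correct and matches the paper's proof essentially step for step: both construct the unique $I_O$ from the complete set of open literals, use totality to obtain a two-valued well-founded model $I$, deduce $L^{I} = \Tr$ from the validity of the sequent (your explicit case split on $L^{I}=\Fa$ is just a spelled-out version of this step), and then invoke Lemma~\ref{lem:gencomp}. No differences of substance.
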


\begin{proof}
Assume that $\models D, \Gamma \la L$. Let $I_O$ be the unique two-valued $\openp{D}$-interpretation such that $I_O \models \bigwedge \Gamma$. Because $D$ is total, $I_O$ can be extended to a two-valued well-founded model $I$ of $D$ such that $I \models \bigwedge \Gamma$ and $I \models D$. Then since $\models D, \Gamma \la L$, it holds that $L^I = \Tr$. Thus, by Lemma~\ref{lem:gencomp}, $D, \Gamma \la L$ is provable in \ps.
\end{proof}

\begin{lemma}\label{lem:defarcomp}
Let $D$ be a total definition and $\Gamma$ an arbitrary consistent 
set of literals. If $L$ is a defined literal of $D$ such that $\models D, \Gamma \la L$, then $D, \Gamma \la L$ is provable in \ps.
\end{lemma}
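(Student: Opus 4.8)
The plan is to bootstrap from Lemma~\ref{lem:defcomp}, which already settles the case where the antecedent pins down every open atom of $D$, to an arbitrary consistent literal set $\Gamma$. First I would split $\Gamma$ into three disjoint parts: the open literals $\Gamma_o$ of $D$, the defined literals $\Gamma_d$ of $D$, and the remaining literals $\Gamma_e$ built from atoms outside $\voc$. The literals in $\Gamma_e$ are harmless: since their atoms occur neither in $D$ nor in $L$, any model of $D,\Gamma_o,\Gamma_d$ can be adjusted on those atoms (using consistency of $\Gamma$) to also satisfy $\Gamma_e$, so $\models D,\Gamma\la L$ already gives $\models D,\Gamma_o,\Gamma_d\la L$. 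Hence it suffices to prove $D,\Gamma_o,\Gamma_d\la L$ and then reinstate $\Gamma_e$ by left weakening.

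Next I would perform a case analysis over the open atoms of $D$ that $\Gamma_o$ leaves undetermined, say $R_1,\dots,R_m$. For each of the $2^m$ assignments I obtain a complete set $\Gamma_o^\ast\supseteq\Gamma_o$ of open literals, and the goal becomes to show each leaf sequent $D,\Gamma_o^\ast,\Gamma_d\la L$ provable. Totality gives a unique two-valued well-founded model $I^\ast$ extending $\Gamma_o^\ast$. If $I^\ast\models\Gamma_d$, then $I^\ast$ models $D,\Gamma_o,\Gamma_d$, so by the validity hypothesis $I^\ast\models L$; since $I^\ast$ is the only model of $D,\Gamma_o^\ast$ we get $\models D,\Gamma_o^\ast\la L$, Lemma~\ref{lem:defcomp} makes $D,\Gamma_o^\ast\la L$ provable, and left weakening adds $\Gamma_d$. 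If instead $I^\ast\not\models\Gamma_d$, there is a literal $M\in\Gamma_d$ whose complement $\overline{M}$ (again a defined literal) is true in $I^\ast$; then $\models D,\Gamma_o^\ast\la\overline{M}$, so Lemma~\ref{lem:defcomp} proves $D,\Gamma_o^\ast\la\overline{M}$, and cutting this against the assumption $M$ (together with the provable contradiction $M,\overline{M}\la$) closes the branch with $L$ in the succedent.

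Finally I would glue the $2^m$ leaf proofs together with a derived case-splitting step: from proofs of $R,\Sigma\la\Lambda$ and $\neg R,\Sigma\la\Lambda$, apply the right $\neg$ rule to the former to obtain $\Sigma\la\Lambda,\neg R$ and then cut on $\neg R$ to obtain $\Sigma\la\Lambda$. Inducting on the number of undetermined atoms, this assembles the leaves into a proof of $D,\Gamma_o,\Gamma_d\la L$, which by the first step yields $D,\Gamma\la L$.

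I expect the combination step together with the conflicting-defined-literal branches to be the crux. The delicate points are (i) verifying that validity genuinely descends to each completed branch $D,\Gamma_o^\ast,\Gamma_d\la L$, handled above by noting that $I^\ast$ is the unique model of $D,\Gamma_o^\ast$, and (ii) proving the branches where $\Gamma_d$ contradicts the well-founded model, which is where Lemma~\ref{lem:defcomp} must be used to \emph{derive} the complement $\overline{M}$ rather than merely to derive $L$, so that a cut can close an otherwise unsatisfiable antecedent. Everything else is routine propositional bookkeeping with weakening and cut.
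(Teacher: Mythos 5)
Your proposal is correct and follows essentially the same route as the paper's proof: extend $\Gamma$ to decide all open atoms of $D$, invoke Lemma~\ref{lem:defcomp} on each completed branch (using a contradicting defined literal of $\Gamma$ plus cut when the unique well-founded model falsifies some literal of $\Gamma_d$), and reassemble the branches with the right $\neg$ rule and cut. The only cosmetic differences are your explicit (and harmless) treatment of literals outside $D$'s vocabulary and phrasing the case split in terms of whether $I^\ast$ satisfies $\Gamma_d$ rather than whether $\models D,\Gamma''\la L$.
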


\begin{proof}
Let $\Gamma'$ be an arbitrary extension of $\Gamma$ such that for every open
atom $Q$ of $D$, either $Q \in \Gamma'$ or $\neg Q \in \Gamma'$.
First, we want to show that $D, \Gamma' \la L$ is provable in \ps. It holds that $\models D, \Gamma' \la L$ because
$\models D, \Gamma \la L$.
Consider the
set $\Gamma''$ of
all open literals of $D$ in $\Gamma'$. If $\models D, \Gamma'' \la L$, then by the
previous lemma, $D, \Gamma'' \la L$ is provable in \ps, and by the left
weakening rule,
so is $D, \Gamma' \la L$.  If $\not \models D, \Gamma'' \la
L$, then by totality of $D$, $ \models D, \Gamma'' \la \neg L$ and
hence, $ \models D, \Gamma' \la \neg L$. This means that $D \land \bigwedge \Gamma'$ is unsatisfiable, which implies that for some defined literal $L'$
in $\Gamma'$, $\models D, \Gamma'' \la \neg L'$. By the previous lemma and
the left weakening rule,
$D, \Gamma' \la \neg L'$ is provable in \ps. It is obvious that
$D, \Gamma' \la L'$ is an axiom because $L'$ is a literal in $\Gamma'$. Then
we can use the left $\neg$ rule, the cut rule and the right weakening rule
to show that $D, \Gamma' \la L$ is provable in \ps.

Given that the sequents $D, \Gamma' \la L$ are provable in \ps for all extensions $\Gamma'$ of $\Gamma$, by using
the right $\neg$ rule and
the cut
rule on all $D, \Gamma' \la L$, an \ps-proof for $D, \Gamma
\la L$ can be constructed.
\end{proof}

\begin{lemma} \label{lem:nontotalcomp}
Let $D$ be a definition and $\Gamma$ a 
set of open literals of $D$, such that for every atom $Q \in \openp{D}$ either $Q \in \Gamma$ or $\neg Q \in \Gamma$. If $\models D, \Gamma \la \bot$, then $D, \Gamma \la \bot$ is provable in \ps.
\end{lemma}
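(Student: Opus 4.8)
The plan is to reduce the statement to the non-totality of $D$ under the interpretation fixed by $\Gamma$ and then to let the \emph{non-total definition rule} do the work. Since $\Gamma$ decides every open atom, let $I_O$ be the unique two-valued $\openp{D}$-interpretation with $I_O\models\bigwedge\Gamma$ and let $I=I_O^{\D}$ be the well-founded partial interpretation. A two-valued model of $D$ extending $I_O$ exists iff $I$ is two-valued, so the hypothesis $\models D,\Gamma\la\bot$ is equivalent to saying that $I$ is \emph{not} two-valued. Hence the set $V=\{P\in\defp{D}\mid P^{I}=\Un\}$ is non-empty, and I take this $V$ for the rule. Write $W=\defp{D}\setminus V$ for the defined atoms that are two-valued in $I$, and let $\Gamma^{+}$ be $\Gamma$ together with the defined literal $L_P$ (namely $P$ if $P^I=\Tr$ and $\neg P$ if $P^I=\Fa$) for each $P\in W$. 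The overall construction is: apply the non-total definition rule with this $V$ and antecedent $\Gamma^{+}$ to obtain $D,\Gamma^{+}\la\bot$, and then cut the extra $W$-literals back out to reach $D,\Gamma\la\bot$.

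The heart of the argument is to show that the two premises are valid, namely $\models V^{\diamond},D^{\diamond},\Gamma^{+}\la\bigwedge\neg V^{\triangleright}$ and $\models \neg V^{\diamond},D^{\diamond},\Gamma^{+}\la\bigwedge V^{\triangleright}$. Recall that $D^{\diamond}$ is positive, hence total (as already noted in the proof of Lemma~\ref{lem:nontotalsoundness}), so given $I_O$, the $W$-literals and the fixed value of $V^{\diamond}$, the values of $V^{\triangleright}$ are the least-fixpoint values and do not depend on the original $V$-atoms (which occur nowhere in $D^{\diamond}$). For the first premise, suppose some $P^{\triangleright}$ is set to $\Tr$ and look at the \emph{first} such atom in the least-fixpoint construction: at that stage every $Q^{\triangleright}$ is still $\Fa$ and every $Q^{\diamond}$ is $\Tr$. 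Matching $Q^{\triangleright}\mapsto\Fa$ and $Q^{\diamond}\mapsto\Tr$ against the value $\Un$ that $I$ gives each $Q\in V$, and using that $V^{\triangleright}$ occurs only positively and $V^{\diamond}$ only negatively in $\varphi_P^{\diamond}$, the truth-order monotonicity of Proposition~\ref{prop:propertyoftruthorder} yields $\varphi_P^{I}=\Tr$; but then derivation rule~\ref{itwell:1} could extend the terminal well-founded induction reaching $I$, contradicting $P^{I}=\Un$. For the second premise, let $U^{*}=\{P\in V\mid (P^{\triangleright})^{K}=\Fa\}$ in the least-fixpoint model $K$ (with $V^{\diamond}$ false); by Lemma~\ref{lem:A} for $D^{\diamond}$ each such $\varphi_P^{\diamond}$ is $\Fa$, and the same matching (now comparing against $I[U^{*}/\Fa]$) gives $\varphi_P^{I[U^{*}/\Fa]}=\Fa$ for every $P\in U^{*}$. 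Thus $U^{*}$ would be a non-empty unfounded set of $D$ at $I$, and derivation rule~\ref{itwell:2} could extend the terminal induction, again a contradiction; so $U^{*}=\emptyset$ and all $V^{\triangleright}$ are $\Tr$.

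Once validity is in hand, provability of the premises is routine. Splitting each succedent conjunction with the right $\land$ rule reduces premise~1 to the sequents $V^{\diamond},D^{\diamond},\Gamma^{+}\la\neg P^{\triangleright}$ and premise~2 to $\neg V^{\diamond},D^{\diamond},\Gamma^{+}\la P^{\triangleright}$, each involving the total definition $D^{\diamond}$ and a single defined literal on the right, while the antecedent is a consistent set of literals. These are exactly of the form handled by Lemma~\ref{lem:defarcomp}, which does not require the open atoms to be fully decided, so the undecided original $V$-atoms cause no trouble; hence each sequent is \ps-provable, and reassembling the conjunctions gives both premises. Applying the non-total definition rule (with ``$\Gamma$'' instantiated to $\Gamma^{+}$ and $\Delta$ empty) then yields $D,\Gamma^{+}\la\bot$.

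It remains to eliminate the $W$-literals. Each $L_P$ with $P\in W$ is a defined literal of $D$ that is true in $I$, so by Lemma~\ref{lem:gencomp} the sequent $D,\Gamma\la L_P$ is \ps-provable. Starting from $D,\Gamma^{+}\la\bot$ and using left weakening on these derivations together with the cut rule, I remove the literals $L_P$ one at a time, finally obtaining $D,\Gamma\la\bot$. I expect the main obstacle to be the validity of the two premises described in the second paragraph: the first-derived-atom analysis for premise~1 and the unfounded-set identification for premise~2 must be carried out carefully through the renaming, since these are precisely the converse of the reasoning used in the soundness proof (Lemma~\ref{lem:nontotalsoundness}); everything else is bookkeeping with the propositional rules, Lemma~\ref{lem:defarcomp} and Lemma~\ref{lem:gencomp}.
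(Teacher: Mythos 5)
Your proposal is correct and follows essentially the same route as the paper's proof: the same choice of $V$ as the unknown defined atoms, the same augmentation of $\Gamma$ with the decided defined literals (your $\Gamma^{+}$ is the paper's $K,\Gamma$), the same two valid premises discharged via Lemma~\ref{lem:defarcomp} and the right $\land$ rule, the non-total definition rule, and the final cuts against Lemma~\ref{lem:gencomp}. The only cosmetic difference is that for the first premise the paper shows the renamed bodies are false under $I_{V^{\diamond}}[V^{\triangleright}/\Fa]$ and invokes Lemma~\ref{lem:apropertyofwellfoundedmodel}, whereas you argue the contrapositive via a first-derived-atom analysis and terminality of the induction for $D$; both rest on the same truth-order comparison.
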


\begin{proof}
Let $I_O$ be the unique two-valued $\openp{D}$-interpretation such that $I_O \models \bigwedge \Gamma$ and $(I^n)_{n \leq \xi}$ a terminal well-founded induction for $D$ extending $I_O$ with limit $I^\xi = I$. Because $\models D, \Gamma \la \bot$, there is no two-valued well-founded model for $D$ extending $I_O$. Hence $I$ is a three-valued $\voc$-interpretation. Denote by $E$ the set of all defined atoms of $D$ which are not unknown in $I$ and $V$ the set $\defp{D} \setminus E$.
For each $P \in E$, we define a literal $L_P$ as follows:
\[ L_P = \left\{
\begin{array}{ll}
P & \mbox{\quad if $P^I = \Tr$} \\
\neg P & \mbox{\quad if $P^I = \Fa$}\\
\end{array}  \right. .\]
Denote by $K$ the set $\{ L_P \mid P \in E \}$ of literals. We first want to show that
\begin{equation} \label{CooC}
\models D^{\diamond}, V^{\diamond}, K, \Gamma \la \bigwedge \neg V^{\triangleright} \mbox{\ and \ }
\models D^{\diamond}, \neg V^{\diamond}, K, \Gamma \la \bigwedge V^{\triangleright}.
\end{equation}
Consider the vocabulary $\tau^{\diamond} =
\tau\cup V^{\triangleright}\cup V^{\diamond}$. $I$ can be expanded into two
$\tau^{\diamond}$-interpretations $I_{V^{\diamond}}$ and $I_{\neg V^{\diamond}}$ as follows:
$$I_{V^{\diamond}} = (I[V^{\diamond}/ \Tr])^{D^{\diamond}} \mbox{\ \ and \ \ } I_{\neg V^{\diamond}} =
(I[V^{\diamond}/ \Fa])^{D^{\diamond}}.$$
Since $D^{\diamond}$ is a positive definition, hence total definition with open symbols
$\tau\cup V^{\diamond}$, both interpretations are well-defined. Moreover it is obvious that $I_{V^{\diamond}}$, respectively $I_{\neg V^{\diamond}}$, is the only interpretation satisfying:
$$I_{V^{\diamond}} \models D^{\diamond} \land \bigwedge V^{\diamond} \land \bigwedge K \land \bigwedge \Gamma,
\mbox{\ respectively \ }I_{\neg V^{\diamond}} \models D^{\diamond} \land
\bigwedge \neg V^{\diamond} \land \bigwedge K \land \bigwedge \Gamma.$$

In order to prove (\ref{CooC}), it suffices to show that
\begin{equation} \label{PropB}
I_{V^{\diamond}}  \models \bigwedge \neg V^{\triangleright} \mbox{\ \ and\ \ }
I_{\neg V^{\diamond}} \models \bigwedge V^{\triangleright}.
\end{equation}
\begin{itemize}
\item We want to prove that $I_{V^{\diamond}} \models \bigwedge \neg V^{\triangleright}$. For any $P \in V$ with its rule $P \rul \varphi_{P} \in D$, $P^{\triangleright} \rul \varphi_{P}^{\diamond}$ is the corresponding rule for $P^{\triangleright}$ in $D^{\diamond}$. If we can show that $I_{V^{\diamond}}[V^{\triangleright}/ \Fa] \models \neg \varphi_{P}^{\diamond}$ for each $P^{\triangleright} \in V^{\triangleright}$ with its rule $P^{\triangleright} \rul \varphi_{P}^{\diamond}$, then since $I_{V^{\diamond}}$ satisfies $D^{\diamond}$, we can apply Lemma~\ref{lem:apropertyofwellfoundedmodel} to obtain that each $P^{\triangleright} \in V^{\triangleright}$ is false in $I_{V^{\diamond}}$, which is what we must prove here.

Consider the $\voc^{\diamond}$-interpretation $I^{\diamond}$ which extends $I$ by interpreting each symbol $Q^{\triangleright}$ and $Q^{\diamond}$ as $Q^I$ for each $Q \in V$, i.e., as $\Un$. Clearly, for every $P \in V$ with its rule $P \rul \varphi_{P} \in D$, $(\varphi_{P}^{\diamond})^{I^{\diamond}} = \varphi_{P}^I = \Un$, and since $(\varphi_{P}^{\diamond})^{I_{V^{\diamond}}[V^{\triangleright}/ \Fa]} \not = \Un$, it is sufficient to show that $(\varphi_{P}^{\diamond})^{I_{V^{\diamond}}[V^{\triangleright}/ \Fa]} \leq (\varphi_{P}^{\diamond})^{I^{\diamond}}$ to obtain that $I_{V^{\diamond}}[V^{\triangleright}/ \Fa] \models \neg \varphi_{P}^{\diamond}$ for every $P^{\triangleright} \in V^{\triangleright}$ with its rule $P^{\triangleright} \rul \varphi_{P}^{\diamond} \in D^{\diamond}$. This can be verified by the following observations.
\begin{itemize}
\item $\res{I_{V^{\diamond}}[V^{\triangleright} / \Fa]}{\voc} = \res{I^{\diamond}}{\voc}$.
\item For every $Q \in V$, every occurrence of $Q^{\triangleright}$ in $\varphi_{P}^{\diamond}$ is positive and $(Q^{\triangleright})^{I_{V^{\diamond}}[V^{\triangleright} / \Fa]} = \Fa \leq (Q^{\triangleright})^{I^{\diamond}} = \Un$.
\item For every $Q \in V$, every occurrence of $Q^{\diamond}$ in $\varphi_{P}^{\diamond}$ is negative and $(Q^{\diamond})^{I_{V^{\diamond}}[V^{\triangleright} / \Fa]} = \Tr \geq (Q^{\diamond})^{I^{\diamond}} = \Un$
\end{itemize}
Hence, it is indeed the case that $(\varphi_{P}^{\diamond})^{I_{V^{\diamond}}[V^{\triangleright}/ \Fa]} \leq (\varphi_{P}^{\diamond})^{I^{\diamond}}$, as desired.
\item We want to prove that $I_{\neg V^{\diamond}} \models \bigwedge V^{\triangleright}$. Assume toward contradiction that there exists a non-empty set $F^{\triangleright} \subseteq V^{\triangleright}$
such that $I_{\neg V^{\diamond}} \models \bigwedge \neg F^{\triangleright}$ and for the set $T^{\triangleright} = V^{\triangleright} \setminus F^{\triangleright}$, $I_{\neg V^{\diamond}} \models \bigwedge T^{\triangleright}$.
Consider the $\voc$-interpretation $I^1 = I[F/ \Fa]$. If we can show that $\varphi_{P}^{I^1} = \Fa$ for every $P \in F$ with its rule $P \rul \varphi_{P} \in D$, then since for each $P \in F$ and its rule $P \rul \varphi_{P} \in D$, $P^I = \Un$ and $\varphi_{P}^{I^1} = \Fa$,
$I$ can be extended to $I^1$ in the well-founded induction $(I^n)_{n \leq \xi}$ for $D$.
This produces the contradiction to that $I$ is the limit of $(I^n)_{n \leq \xi}$. To prove that $\varphi_{P}^{I^1} = \Fa$ for every $P \in F$ with the rule $P \rul \varphi_{P} \in D$, we first choose a $\voc^{\diamond}$-interpretation $I^{\diamond}$ which extends $I^1$ by interpreting each symbol $Q^{\triangleright}$ and $Q^{\diamond}$ as $Q^{I^1}$, i.e., as $\Fa$ if $Q \in F$ and as $\Un$ if $Q \in T$. Notice that for all formulas $\psi$ over $\voc$, it holds that $\psi^{I^1} = (\psi^{\diamond})^{I^{\diamond}}$. Thus, it is sufficient to show that $(\varphi_{P}^{\diamond})^{I^{\diamond}} = \Fa$ for every $P^{\triangleright} \in F^{\triangleright}$ with the rule $P^{\triangleright} \rul \varphi_{P}^{\diamond} \in D^{\diamond}$. Since $I_{\neg V^{\diamond}} \models \neg P^{\triangleright}$ for each $P^{\triangleright} \in F^{\triangleright}$ and $I_{\neg V^{\diamond}}$ is a model of $D^{\diamond}$, by Lemma~\ref{lem:A}, we have that $(\varphi_{P}^{\diamond})^{I_{\neg V^{\diamond}}} = \Fa$ for every $P^{\triangleright} \in F^{\triangleright}$ with the rule $P^{\triangleright} \rul \varphi_{P}^{\diamond} \in D^{\diamond}$. If we can have that $(\varphi_{P}^{\diamond})^{I^{\diamond}} \leq (\varphi_{P}^{\diamond})^{I_{\neg V^{\diamond}}} = \Fa$, it holds that $(\varphi_{P}^{\diamond})^{I^{\diamond}} = \Fa$, which is exactly what we need.

We can verify that $(\varphi_{P}^{\diamond})^{I^{\diamond}} \leq (\varphi_{P}^{\diamond})^{I_{\neg V^{\diamond}}}$ by the following facts.
\begin{itemize}
\item $\res{I_{\neg V^{\diamond}}}{\voc} = \res{I^{\diamond}}{\voc}$.
\item Every occurrence of $Q^{\triangleright}$ in $\varphi_{P}^{\diamond}$ is positive and $(Q^{\triangleright})^{I_{\neg V^{\diamond}}} = (Q^{\triangleright})^{I^{\diamond}} = \Fa$ for each $Q^{\triangleright} \in F^{\triangleright}$ while $(Q^{\triangleright})^{I^{\diamond}} = \Un \leq (Q^{\triangleright})^{I_{\neg V^{\diamond}}} = \Tr$ for each $Q^{\triangleright} \in V^{\triangleright} \setminus F^{\triangleright}$.
\item Every occurrence of $Q^{\diamond}$ in $\varphi_{P}^{\diamond}$ is negative and $(Q^{\diamond})^{I_{\neg V^{\diamond}}} = (Q^{\diamond})^{I^{\diamond}} = \Fa$ for each $Q^{\diamond} \in F^{\diamond}$ while $(Q^{\diamond})^{I^{\diamond}} = \Un \geq (Q^{\diamond})^{I_{\neg V^{\diamond}}} = \Fa$ for each $Q^{\diamond} \in V^{\diamond} \setminus F^{\diamond}$.
\end{itemize}
Hence, it is the case that $(\varphi_{P}^{\diamond})^{I^{\diamond}} \leq (\varphi_{P}^{\diamond})^{I_{\neg V^{\diamond}}} = \Fa$, as desired.
\end{itemize}

Therefore, it is obtained that $\models D^{\diamond}, V^{\diamond}, K, \Gamma \la \bigwedge \neg V^{\triangleright}$ and $\models D^{\diamond}, \neg V^{\diamond}, K, \Gamma \la \bigwedge V^{\triangleright}$. $D^{\diamond}$ is a total definition, hence by using Lemma~\ref{lem:defarcomp} 
and the right $\land$ rule, both $V^{\diamond}, D^{\diamond}, K, \Gamma \la \bigwedge \neg V^{\triangleright}$ and $\neg V^{\diamond}, D^{\diamond}, K, \Gamma \la \bigwedge V^{\triangleright}$ are provable in \ps. It follows from the non-total definition rule that 
$K, D, \Gamma \la \bot$ is provable in \ps. 
Since $I$ is a well-founded model of $D$ extending $I_O$ and $L^I = \Tr$ for each $L \in K$, using Lemma~\ref{lem:gencomp}, it holds that for each $L \in K$, $D, \Gamma \la L$ is provable in \ps. Consequently, by the multiple use of the cut rule on $K, D, \Gamma \la \bot$ and $D, \Gamma \la L$ for each $L \in K$,  $D, \Gamma \la \bot$ is provable in \ps.
\end{proof}

\begin{lemma} \label{lem:defliteralcomp}
Let $D$ be a definition and $\Gamma$ a 
set of open literals of $D$ such that for every atom $Q \in \openp{D}$, either $Q \in \Gamma$ or $\neg Q \in \Gamma$. Let $L$ be a defined literal of $D$. If $\models D, \Gamma \la L$, then $D, \Gamma \la L$ is provable in \ps.
\end{lemma}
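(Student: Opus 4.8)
The plan is to reduce the statement to the two completeness lemmas already established for fully-determined open parts --- Lemma~\ref{lem:gencomp}, which importantly does \emph{not} assume totality, and Lemma~\ref{lem:nontotalcomp} --- by a single case distinction on the well-founded model of $D$. First I would let $I_O$ be the unique two-valued $\openp{D}$-interpretation with $I_O \models \bigwedge \Gamma$; this is well-defined and unique precisely because $\Gamma$ contains, for each open atom $Q$, exactly one of $Q$ and $\neg Q$. Let $I$ be the well-founded model of $D$ extending $I_O$, noting that $I$ may be three-valued since $D$ is not assumed total. Everything turns on whether $I$ is two-valued.

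If $I$ is two-valued, then $I$ is the unique model of $\{ D \} \cup \Gamma$: it satisfies $D$ by construction and satisfies $\bigwedge \Gamma$ because it extends $I_O$, while any model of $\{ D \} \cup \Gamma$ must agree with $I_O$ on the open atoms (forced by $\Gamma$) and hence coincide with the well-founded model $I$. Since $\models D, \Gamma \la L$ holds by hypothesis and $I$ satisfies the antecedent, we obtain $I \models L$, i.e.\ $L^I = \Tr$. As $L$ is a defined literal, Lemma~\ref{lem:gencomp} applies directly and delivers an \ps-proof of $D, \Gamma \la L$.

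If $I$ is three-valued, then $D$ is non-total in $I_O$, so no two-valued model of $D$ extends $I_O$; since $\Gamma$ pins down the open atoms to $I_O$, the set $\{ D \} \cup \Gamma$ is unsatisfiable, i.e.\ $\models D, \Gamma \la \bot$. Lemma~\ref{lem:nontotalcomp} then yields an \ps-proof of $D, \Gamma \la \bot$ (equivalently $D, \Gamma \la$), from which a single application of the right weakening rule produces $D, \Gamma \la L$.

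Since no new inductive construction is needed, I do not expect a genuine obstacle here; the subtle points are merely bookkeeping. Specifically, the hypothesis $\models D, \Gamma \la L$ is used only in the two-valued case --- in the three-valued case the sequent is vacuously valid because its antecedent is unsatisfiable --- and the dichotomy ``$I$ two-valued'' versus ``$I$ three-valued'' coincides exactly with totality of $D$ in $I_O$, which is what guarantees that Lemma~\ref{lem:gencomp} and Lemma~\ref{lem:nontotalcomp} are each invoked under precisely their stated hypotheses.
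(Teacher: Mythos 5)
Your proof is correct and is essentially the paper's own argument: the paper splits on whether $\models D, \Gamma \la \bot$ holds, which (as you note) coincides exactly with your dichotomy on whether the well-founded model extending $I_O$ is two-valued, and both proofs then invoke Lemma~\ref{lem:gencomp} in the satisfiable case and Lemma~\ref{lem:nontotalcomp} plus right weakening in the unsatisfiable case.
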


\begin{proof}
Assume $\models D, \Gamma \la L$. Let $I_O$ be the unique two-valued $\openp{D}$-interpretation such that $I_O \models \bigwedge \Gamma$.
If $\not \models D, \Gamma \la \bot$, then $I_O$ can be extended to the two-valued well-founded model $I$ of $D$ such that $I \models \bigwedge \Gamma$ and $I \models D$. Since $\models D, \Gamma \la L$, it holds that $I \models L$. Thus, by Lemma~\ref{lem:gencomp}, $D, \Gamma \la L$ is provable in \ps. If $\models D, \Gamma \la \bot$, then by Lemma~\ref{lem:nontotalcomp}, $D, \Gamma \la \bot$ is provable in \ps. Hence, by the right weakening rule, $D, \Gamma \la L$ is provable in \ps.
\end{proof}

\begin{lemma} \label{lem:falsecomp}
Let $D$ be a definition and $\Gamma$ an arbitrary consistent 
set of literals. If $\models D, \Gamma \la \bot$, then $D, \Gamma \la \bot$ is provable in \ps.
\end{lemma}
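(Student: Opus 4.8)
The plan is to mirror the proof of Lemma~\ref{lem:defarcomp}, but with $\bot$ in the succedent and with its appeals to the defined-literal lemmas replaced by the corresponding $\bot$-lemmas, which hold for arbitrary (possibly non-total) definitions. First I would reduce to the case in which $\Gamma$ already decides every open atom of $D$, and then invoke Lemma~\ref{lem:nontotalcomp} together with Lemma~\ref{lem:defliteralcomp}. For the reduction, fix an arbitrary extension $\Gamma'$ of $\Gamma$ assigning a truth value to every open atom of $D$ left undetermined by $\Gamma$, i.e.\ for every $Q \in \openp{D}$ either $Q \in \Gamma'$ or $\neg Q \in \Gamma'$. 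Since $\Gamma \subseteq \Gamma'$ and $\models D, \Gamma \la \bot$, also $\models D, \Gamma' \la \bot$. Let $\Gamma''$ be the set of open literals of $D$ occurring in $\Gamma'$; this is a complete set of open literals, every literal of $\Gamma' \setminus \Gamma''$ is a defined literal, and I let $I_O$ be the unique two-valued $\openp{D}$-interpretation with $I_O \models \bigwedge \Gamma''$.

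I would then split into two cases to prove each $D, \Gamma' \la \bot$ is \ps-provable. If $\models D, \Gamma'' \la \bot$, then Lemma~\ref{lem:nontotalcomp} yields an \ps-proof of $D, \Gamma'' \la \bot$, and left weakening gives $D, \Gamma' \la \bot$. Otherwise $D \land \bigwedge \Gamma''$ is satisfiable, so $D$ is total in $I_O$ and the well-founded model $I$ of $D$ extending $I_O$ is the unique two-valued model with $I \models D \land \bigwedge\Gamma''$. Because $\models D, \Gamma' \la \bot$, this $I$ must falsify $\bigwedge\Gamma'$, and since $I \models \Gamma''$ the violated literal is some defined literal $L' \in \Gamma' \setminus \Gamma''$. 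Its complement $\overline{L'}$ is again a defined literal with $I \models \overline{L'}$, hence $\models D, \Gamma'' \la \overline{L'}$; Lemma~\ref{lem:defliteralcomp} makes $D, \Gamma'' \la \overline{L'}$ provable and left weakening gives $D, \Gamma' \la \overline{L'}$. As $L' \in \Gamma'$, the sequent $D, \Gamma' \la L'$ is an axiom, so the left $\neg$ rule and the cut rule close the branch to $D, \Gamma' \la \bot$.

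Finally I would combine the completed extensions. The undetermined open atoms $Q_1, \ldots, Q_m$ are finite in number, so I eliminate them one at a time: from the two provable sequents $D, \Sigma, Q \la \bot$ and $D, \Sigma, \neg Q \la \bot$, the right $\neg$ rule turns the first into $D, \Sigma \la \neg Q$, and a cut on $\neg Q$ against the second yields $D, \Sigma \la \bot$. Iterating this over $Q_1, \ldots, Q_m$ produces the desired $D, \Gamma \la \bot$.

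The main obstacle is the non-total (``otherwise'') branch: one must argue that the literal of $\Gamma'$ falsified by $I$ is necessarily a \emph{defined} literal — which is exactly where completeness of $\Gamma''$ over the open atoms is used, since it forces $I$ to agree with $\Gamma''$ on all open symbols — and that its complement is again a literal so that Lemma~\ref{lem:defliteralcomp} applies. Everything else is the same bookkeeping as in Lemma~\ref{lem:defarcomp}, now carried out with $\bot$ in the succedent and without any totality hypothesis on $D$, the non-totality being absorbed into the use of Lemma~\ref{lem:nontotalcomp}.
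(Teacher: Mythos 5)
Your proposal is correct and is exactly the route the paper intends: the paper omits the proof of this lemma, saying only that it uses ``the same technique as in the proof of Lemma~\ref{lem:defarcomp}'', and your write-up is a faithful instantiation of that technique, with Lemma~\ref{lem:nontotalcomp} correctly substituted in the branch where $D \land \bigwedge \Gamma''$ is unsatisfiable and Lemma~\ref{lem:defliteralcomp} (which needs no totality hypothesis) in the other. The one point worth making explicit -- that any literal of $\Gamma'$ falsified by the unique model of $D \land \bigwedge \Gamma''$ must be a \emph{defined} literal, and that one should pass to its complementary literal rather than its syntactic negation before invoking Lemma~\ref{lem:defliteralcomp} -- you have already identified and handled.
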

To prove this, we use the same technique as in the proof of Lemma~\ref{lem:defarcomp}. We omit the details of the proof here.

\begin{lemma} \label{lem:arbitraryliteralcomp}
Let $D$ be a definition, $\Gamma$ an arbitrary consistent set of literals and $L$ a defined literal of $D$. If $\models D, \Gamma \la L$, then $D, \Gamma \la L$ is provable in \ps.
\end{lemma}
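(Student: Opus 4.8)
The plan is to follow the template of the proof of Lemma~\ref{lem:defarcomp}, the analogous statement for \emph{total} definitions, but to discharge the totality hypothesis by appealing instead to Lemma~\ref{lem:defliteralcomp} (which already covers possibly non-total definitions, provided the antecedent fixes every open atom) together with the freshly established Lemma~\ref{lem:falsecomp} for the unsatisfiable case. First I would reduce to antecedents that decide all open atoms. Let $\Gamma'$ range over all consistent extensions of $\Gamma$ in which, for each open atom $Q$ of $D$, exactly one of $Q$, $\neg Q$ occurs; such extensions exist and are consistent since $\Gamma$ is. Because enlarging the antecedent preserves validity, $\models D, \Gamma' \la L$ for every such $\Gamma'$. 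It then suffices to prove $D, \Gamma' \la L$ for each $\Gamma'$ and afterwards to merge these derivations into a single one for $D, \Gamma \la L$.

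Next, fix one extension $\Gamma'$ and let $\Gamma''$ be the set of its open literals, so that $\Gamma''$ pins down the unique two-valued $\openp{D}$-interpretation $I_O$. I would split on whether $\models D, \Gamma'' \la L$. If it does, Lemma~\ref{lem:defliteralcomp} makes $D, \Gamma'' \la L$ provable, and the left weakening rule restores the remaining (defined) literals of $\Gamma'$ to yield $D, \Gamma' \la L$. If instead $\not\models D, \Gamma'' \la L$, then $D \land \bigwedge \Gamma''$ has a (two-valued) counter-model; hence $D$ is total in $I_O$, its unique well-founded model $I$ satisfies $L^I = \Fa$, and therefore $\models D, \Gamma'' \la \neg L$, so a fortiori $\models D, \Gamma' \la \neg L$. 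Combined with the standing $\models D, \Gamma' \la L$, this forces $D \land \bigwedge \Gamma'$ to be unsatisfiable, i.e. $\models D, \Gamma' \la \bot$. Since $\Gamma'$ is a consistent set of literals, Lemma~\ref{lem:falsecomp} supplies an \ps-proof of $D, \Gamma' \la \bot$, and the right weakening rule then produces $D, \Gamma' \la L$.

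Finally I would merge the family $\{D, \Gamma' \la L\}$. Enumerating the open atoms $Q$ of $D$ not already decided by $\Gamma$, I eliminate them one at a time: given two sibling sequents $Q, \Xi \la L$ and $\neg Q, \Xi \la L$, apply the right $\neg$ rule to the first to obtain $\Xi \la L, \neg Q$, then cut on $\neg Q$ against the second to obtain $\Xi \la L$. Iterating over all such $Q$ collapses every branch down to $D, \Gamma \la L$, which completes the argument. I expect the main obstacle to be the second case above: without totality one cannot simply invoke $\models D, \Gamma'' \la \neg L$, and the decisive observation is that the very failure $\not\models D, \Gamma'' \la L$ already exhibits a two-valued model and hence totality in that particular $I_O$. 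This is exactly the point at which Lemma~\ref{lem:defliteralcomp} (for the provable branch) and Lemma~\ref{lem:falsecomp} (for the unsatisfiable branch) take over the role that totality played in Lemma~\ref{lem:defarcomp}.
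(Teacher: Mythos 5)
Your proof is correct, but it takes a noticeably longer route than the paper's. The paper disposes of this lemma in two lines: if the complementary literal of $L$ already occurs in $\Gamma$, the sequent $D, \Gamma \la L$ is an axiom; otherwise $\Gamma \cup \{\neg L\}$ is a consistent set of literals with $\models D, \Gamma, \neg L \la \bot$, so Lemma~\ref{lem:falsecomp} applies directly, and the $\neg$ rules plus the cut rule turn $D, \Gamma, \neg L \la \bot$ into $D, \Gamma \la L$. You instead replay the architecture of Lemma~\ref{lem:defarcomp}: saturate $\Gamma$ to extensions $\Gamma'$ deciding every open atom, split on whether $\models D, \Gamma'' \la L$ (invoking Lemma~\ref{lem:defliteralcomp} in the positive branch, and deriving unsatisfiability of $D \land \bigwedge \Gamma'$ and hence appealing to Lemma~\ref{lem:falsecomp} in the negative branch), and then merge the family of derivations with the right $\neg$ rule and cut. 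Every step checks out --- in particular your observation that a counter-model for $D, \Gamma'' \la L$ automatically witnesses totality of $D$ in $I_O$ and forces $L^I = \Fa$ is exactly the right replacement for the totality hypothesis of Lemma~\ref{lem:defarcomp}, and Lemma~\ref{lem:falsecomp} is legitimately applicable to $\Gamma'$ since it makes no totality assumption. What the paper's route buys is economy: moving $L$ across the arrow reduces the whole statement to the empty-succedent case in one stroke, so the extension-and-merge machinery (which already lives inside the proof of Lemma~\ref{lem:falsecomp}) is not duplicated. What your route buys is transparency about where totality would have been needed and why it can be dispensed with.
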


\begin{proof}
If $\Gamma \cup \{ \neg L \}$ is an inconsistent set of literals, we have that $D, \Gamma \la L$ is an axiom and thus, $D, \Gamma \la L$ is provable in \ps. If $\Gamma \cup \{ \neg L \}$ is consistent, because $\models D, \Gamma, \neg L \la \bot$, by the previous lemma, it is obtained that $D, \Gamma, \neg L \la \bot$ is provable in \ps. Then by the $\neg$ rules and the cut rule, we can conclude that $D, \Gamma \la L$ is provable in \ps.
\end{proof}

The remainder of the completeness proof for the class of sequents, namely the sequents $\Gamma, D \la \Delta$ where $\Gamma$ and $\Delta$ are sets of PC-formulas and $D$ is a definition, will use a standard technique: we construct the so called {\em reduction tree} for a sequent $\Gamma \la \Delta$. We follow the approach from~\cite{Takeuti75}. First, we introduce some terminology.

\begin{definition} \label{def:reductiontree}
A {\em reduction tree} for a sequent $S = \Gamma \la \Delta$ is a tree $T_S$ of sequents. The root of $T_S$ is $S$. Moreover, $T_S$ is constructed by applying one of the following reductions on each non-leaf $\Pi \la \Sigma$.
\begin{itemize}
\item [--] (left $\neg$ reduction) $\Pi$ contains a sequent formula $\neg A$, then write down $\Pi \setminus \{ \neg A \} \la \Sigma, A$ as the unique child of $\Pi \la \Sigma$.
\item [--] (right $\neg$ reduction) $\Sigma$ contains a sequent formula $\neg A$, then write down $A, \Pi \la \Sigma \setminus \{\neg A\}$ as the unique child of $\Pi \la \Sigma$.
\item [--] (left $\land$ reduction) $\Pi$ contains a sequent formula $A \land B$, then write down $A, B, \Pi \setminus \{ A \land B \} \la \Sigma$ as the unique child of $\Pi \la \Sigma$.
\item [--] (right $\land$ reduction) $\Sigma$ contains a sequent formula $A \land B$, then write down $\Pi \la \Sigma \setminus \{A \land B \}, A$ and $\Pi \la \Sigma \setminus \{A \land B \}, B$ as two children of $\Pi \la \Sigma$.
\item [--] (left $\lor$ reduction) $\Pi$ contains a sequent formula $A \lor B$, then write down $A, \Pi \setminus \{ A \lor B \} \la \Sigma$ and $B, \Pi \setminus \{A \lor B \} \la \Sigma$ as two children of $\Pi \la \Sigma$.
\item [--] (right $\lor$ reduction) $\Sigma$ contains a sequent formula $A \lor B$, then write down $\Pi \la \Sigma \setminus \{ A \lor B \}, A, B$ as the unique child of $\Pi \la \Sigma$.
\item [--] (definition introduction reduction) $\Sigma$ contains a sequent formula $D$, which is a total definition with $\defp{D} = \{ P_1, \ldots, P_n \}$, then write down $D', \Pi \la \Sigma \setminus \{ D \}, P'_{i} \equiv P_i$ for each $i \in [1,n]$ as $n$ children of $\Pi \la \Sigma$.
\end{itemize}
In addition, each leaf of $T_S$ is either an axiom, or none of the above reductions is possible.
\end{definition}

Observe that the definition introduction reduction corresponds to the definition introduction rule while each other reduction respectively corresponds to a logical inference rule. Each leaf node of a reduction tree is either an axiom or a sequent of the form $D_1, \ldots, D_n, \Gamma \la \Delta$ where $\Gamma$ and $\Delta$ are sets of atoms with $\Gamma \cap \Delta = \emptyset$ and $D_1, \ldots, D_n$ are definitions.

\begin{definition}\label{def:counterpre}
An inference rule {\em preserves counter-model} if for each instance of the inference rule, a counter-model for one of the premises of the instance is the same as a counter-model for the conclusion of the instance.
\end{definition}
The following property can easily be verified.
\begin{proposition} \label{prop:logicalcounterpre}
All the logical inference rules preserve counter-models.
\end{proposition}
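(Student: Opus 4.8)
The plan is to verify the defining condition of Definition~\ref{def:counterpre} separately for each of the six logical inference rules, working directly from the truth-functional semantics of $\neg$, $\land$ and $\lor$. Recall that $I$ is a counter-model for $\Gamma \la \Delta$ precisely when $I \models \bigwedge \Gamma$ and $I \not\models \bigvee \Delta$, that is, when $I$ makes every formula in $\Gamma$ true and every formula in $\Delta$ false. For each rule instance I would show that the counter-models of the conclusion are exactly the counter-models of its premise (for the single-premise rules, namely left $\neg$, right $\neg$, left $\land$ and right $\lor$) or exactly the counter-models of its two premises taken together (for the two-premise rules right $\land$ and left $\lor$); either way this is what ``preserves counter-model'' requires.

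For a single-premise rule the argument is a one-line equivalence. For the left $\neg$ rule
\[\inferencerule{\Gamma \la \Delta, A}{\neg A, \Gamma \la \Delta},\]
$I$ is a counter-model of the conclusion iff $I \models \neg A$, $I \models \bigwedge \Gamma$ and $I \not\models \bigvee \Delta$, i.e.\ iff $I \not\models A$, $I \models \bigwedge \Gamma$ and $I \not\models \bigvee \Delta$, which is exactly the condition for $I$ to be a counter-model of the premise. The right $\neg$, left $\land$ and right $\lor$ rules are dispatched the same way, each collapsing to a truth-functional restatement of a single antecedent or succedent formula.

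For the two-premise rules I would use that $\bigvee(\Delta \cup \{A \land B\})$ is false under $I$ iff $\bigvee\Delta$ is false and at least one of $A, B$ is false, together with the dual fact for $\bigwedge(\Gamma \cup \{A \lor B\})$. Hence for the right $\land$ rule
\[\inferencerule{\Gamma \la \Delta, A; \ \Gamma \la \Delta, B}{\Gamma \la \Delta, A \land B},\]
$I$ is a counter-model of the conclusion iff $I \models \bigwedge\Gamma$, $I \not\models \bigvee\Delta$ and ($I \not\models A$ or $I \not\models B$), i.e.\ iff $I$ is a counter-model of the first or of the second premise; the left $\lor$ rule is entirely dual. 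This exhausts the six rules.

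There is no real obstacle here, since the claim is purely truth-functional; the only care needed is to keep both directions of each equivalence straight, in particular the direction asserting that every counter-model of a premise is again a counter-model of the conclusion, which is the direction later used to propagate a counter-model up a reduction tree. One small point worth noting is that, although $A$, $B$ and the formulas in $\Gamma, \Delta$ may themselves contain definitions, a counter-model is by definition two-valued, and on a two-valued $I$ the clauses $\tf{(\neg A)}{I} = (\tf{A}{I})^{-1}$, $\tf{(A \land B)}{I} = min_\leq(\{\tf{A}{I}, \tf{B}{I}\})$ and $\tf{(A \lor B)}{I} = max_\leq(\{\tf{A}{I}, \tf{B}{I}\})$ reduce to the ordinary propositional truth tables, so the argument goes through uniformly regardless of whether definitions occur inside the sequent formulas.
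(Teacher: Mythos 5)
Your proof is correct; the paper gives no proof at all for this proposition (it merely notes the property ``can easily be verified''), and your case-by-case truth-functional check of the six rules, including the observation that counter-models are two-valued so embedded definitions cause no trouble, is exactly the routine verification the paper has in mind.
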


\begin{lemma} \label{lem:newdefinitioncounterpre}
The definition introduction rule preserves counter-model.
\end{lemma}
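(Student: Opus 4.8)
The plan is to show that if $I$ is a counter-model for any one of the premises $D', \Gamma \la \Delta, P'_i \equiv P_i$, then the restriction $\res{I}{\voc}$ is a counter-model for the conclusion $\Gamma \la \Delta, D$; since $\res{I}{\voc}$ agrees with $I$ on every symbol of $\voc$, this is exactly the ``same'' counter-model required by Definition~\ref{def:counterpre}. First I would unpack the hypothesis: being a counter-model of the $i$-th premise means $I$ is a two-valued $\voc'$-interpretation with $I \models D'$, $I \models \bigwedge \Gamma$, $I \not\models \bigvee \Delta$, and $I \not\models P'_i \equiv P_i$, the last of which, as $I$ is two-valued, gives $(P'_i)^I \neq P_i^I$.

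Because $\Gamma$ and $\Delta$ contain no primed atom, the conditions $\res{I}{\voc} \models \bigwedge \Gamma$ and $\res{I}{\voc} \not\models \bigvee \Delta$ are immediate. The whole content of the lemma therefore lies in verifying $\res{I}{\voc} \not\models D$, i.e.\ $D^{\res{I}{\voc}} = \Fa$. To this end I would introduce $J$, the well-founded model of $D$ extending $\res{I}{\openp{D}}$; since $D$ is total, $J$ is two-valued. The key observation, exactly the renaming invariance already exploited in Lemma~\ref{lem:newdefinitionsoundness}, is that $D'$ is obtained from $D$ by renaming each defined atom $P_j$ to $P'_j$ while leaving the open atoms (and hence the interpretation $\res{I}{\openp{D}}$) untouched; consequently the well-founded inductions for $D$ and $D'$ run in lock-step and yield $(P'_j)^I = P_j^J$ for every defined atom $P_j$.

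Combining these facts for the index $i$, I obtain $P_i^J = (P'_i)^I \neq P_i^I = P_i^{\res{I}{\voc}}$, so $\res{I}{\voc}$ and $J$ are two distinct two-valued interpretations both extending $\res{I}{\openp{D}}$. Since a two-valued interpretation is a model of $D$ precisely when it coincides with the unique well-founded model extending its open part, and $\res{I}{\voc} \neq J$, it follows that $\res{I}{\voc}$ is not a model of $D$, i.e.\ $D^{\res{I}{\voc}} = \Fa$. Hence $\res{I}{\voc}$ is the desired counter-model of $\Gamma \la \Delta, D$, completing the argument.

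I expect the main obstacle to be stating the renaming invariance $(P'_j)^I = P_j^J$ rigorously rather than merely using it: one must argue that interpreting $\voc \cup \{P'_j \mid j\}$ via $I$ and running the well-founded induction for $D'$ produces exactly the values that the well-founded induction for $D$ produces in $J$, which rests on the two definitions being syntactically identical up to the bijective renaming of defined atoms and sharing the same open interpretation $\res{I}{\openp{D}}$. Once this is in place the proof is a short mirror image of the soundness proof (Lemma~\ref{lem:newdefinitionsoundness}), with totality of $D$ entering only to guarantee that $J$ is two-valued, so that the disequality at $i$ genuinely forces $\res{I}{\voc} \neq J$.
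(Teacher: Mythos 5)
Your proof is correct and takes essentially the same route as the paper's: both arguments come down to the fact that the well-founded model is invariant under the bijective renaming of the defined atoms, so that $I \models D'$ forces $(P'_j)^I = P_j^{J}$ for the well-founded model $J$ of $D$ extending $\res{I}{\openp{D}}$. The paper states this as a proof by contradiction (if $I$ satisfied both $D$ and $D'$ it would satisfy $P' \equiv P$ for every $P \in \defp{D}$, contradicting the premise), while you argue the contrapositive directly via $J$ and the restriction $\res{I}{\voc}$, which merely makes explicit the step the paper labels ``obvious from the construction of $D'$.''
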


\begin{proof}
Let $D$ be a total definition. Then $D'$ is a total definition because of its construction. Assume that $I$ is a counter-model of $D', \Gamma \la \Delta, P' \equiv P$ for
some $P \in \defp{D}$, but $I$ is not a counter-model of $\Gamma \la \Delta, D$. Since $D$ and $D'$ are total,
$I$ is a two-valued interpretation satisfying $D'$, $\bigwedge \Gamma$, $ \neg \bigvee \Delta$ and $\neg (P' \equiv P)$.
Because $I$ is not a counter-model for $\Gamma \la \Delta, D$, it holds that $I \models D$. Obviously from the construction of $D'$ and the
fact that $I$ satisfies both $D$ and $D'$, we conclude that $I \models P' \equiv P$ for every $P \in \defp{D}$, a contradiction.
\end{proof}

Then we obtain the property of reduction trees as follows.

\begin{proposition}\label{prop:reductiontree}
For each sequent $S =
  \Gamma \la \Delta$, \begin{inlinenum}
  \item\label{exred} there exists a reduction tree $T_S$,
  \item\label{prleaf} if
  all leaf nodes of a reduction tree $T_S$ are provable in \ps,
  then the root sequent is provable in \ps, and
  \item\label{countleaf}, there exists a leaf node of $T_S$ such that a counter-model for this leaf node is a counter-model for the root.
  \end{inlinenum}
\end{proposition}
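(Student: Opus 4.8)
The plan is to dispatch the three claims by three separate inductions; the only delicate point is choosing a terminating measure for part~\ref{exred}.

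For existence (part~\ref{exred}) I would argue that the reduction process terminates, so that applying reductions until none is possible produces a finite tree. Assign to each PC(ID)-formula $F$ a complexity $c(F)$ with $c(P)=1$ for atoms, $c(\neg F)=1+c(F)$, $c(F_1\land F_2)=c(F_1\lor F_2)=1+c(F_1)+c(F_2)$, and $c(D)=c_0$ for every definition $D$, where $c_0$ is a fixed constant chosen strictly larger than $c(P'\equiv P)$. The role of the last clause is that the definition introduction reduction, when it opens a total definition $D$ in a succedent, only ever produces the bounded formulas $P'_i\equiv P_i$ (each of complexity $<c_0$) together with the renamed definition $D'$, which it deposits in the antecedent. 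Measure a sequent $\Pi\la\Sigma$ by the finite multiset $\mu(\Pi\la\Sigma)$ collecting $c(F)$ for every non-atomic $F\in\Sigma$ and for every $F\in\Pi$ that is neither an atom nor a (bare) definition; thus a definition appearing in an antecedent is ``frozen'' and never counted, which is legitimate because no reduction ever decomposes such a definition. A direct case analysis shows that each of the six logical reductions replaces its active formula by strictly $c$-smaller subformulas, and the definition introduction reduction replaces the succedent occurrence of $D$ (value $c_0$) by occurrences of $P'_i\equiv P_i$ of strictly smaller value, leaving only the uncounted $D'$ behind. In every case $\mu$ strictly decreases in the well-founded multiset order over $\mathbb{N}$. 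As the tree is finitely branching (every reduction has finitely many children) and has no infinite branch, K\"onig's lemma makes it finite, giving~\ref{exred}.

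For part~\ref{prleaf} I would induct on the height of $T_S$. Leaves are provable by hypothesis. For an interior node $N=\Pi\la\Sigma$, the reduction applied at $N$ is literally an instance of a \ps inference rule read bottom-up: the six logical reductions are the corresponding logical rules, and the definition introduction reduction is the definition introduction rule, with the reduced formula as principal formula and $\Pi,\Sigma$ furnishing the side sets $\Gamma,\Delta$ (the set-based bookkeeping makes the conclusion of the rule instance coincide with $N$). The children of $N$ are exactly the premises of that instance, and they are provable by the induction hypothesis, so one application of the rule proves $N$. Specializing to $N=S$ yields~\ref{prleaf}.

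For part~\ref{countleaf} I would propagate counter-models up the tree. By Proposition~\ref{prop:logicalcounterpre} each logical reduction, and by Lemma~\ref{lem:newdefinitioncounterpre} the definition introduction reduction, preserves counter-models in the sense of Definition~\ref{def:counterpre}, i.e.\ any counter-model of a premise (a child) is also a counter-model of the conclusion (its parent). Inducting on the length of the path from a leaf up to the root, every counter-model of any leaf of $T_S$ is therefore a counter-model of the root $S$; since the finite tree $T_S$ has at least one leaf, this establishes~\ref{countleaf}. I expect the main obstacle to lie in part~\ref{exred}: one must verify that the definition introduction reduction does not inflate the measure --- it turns a definition into the fixed formulas $P'_i\equiv P_i$ rather than into its (possibly large) rule bodies --- and that definitions pushed into antecedents are truly inert, which is precisely what the frozen, uncounted treatment in $\mu$ encodes.
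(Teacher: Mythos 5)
Your proposal is correct and follows essentially the same route as the paper: build the tree by exhaustive reduction, prove part~\ref{prleaf} by observing that each reduction is an instance of the corresponding \ps rule read bottom-up, and prove part~\ref{countleaf} by propagating counter-models upward via Proposition~\ref{prop:logicalcounterpre} and Lemma~\ref{lem:newdefinitioncounterpre}. The only difference is that you supply an explicit termination measure for part~\ref{exred} (correctly handling the fact that the definition introduction reduction emits the bounded formulas $P'_i \equiv P_i$ and deposits an inert $D'$ in the antecedent), whereas the paper simply asserts that the non-deterministic reduction process yields a tree.
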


\begin{proof}
  Clearly, a reduction tree exists because it can be constructed by a
  non-deterministic reduction process.
  Because each reduction in a reduction tree corresponds to either the definition introduction rule or a logical inference rule,
  by using the corresponding inference rule, it is easy to prove
  that if the children of a node in a reduction tree are provable in \ps,
  then the node itself is provable in \ps. Therefore, the root sequent is provable in \ps if all leaf nodes of the reduction tree are provable in \ps.

  A counter-model for a leaf is a counter-model for the root because all
  the logical inference rules 
  and the definition introduction rule preserve counter-models
  by Proposition~\ref{prop:logicalcounterpre} and Lemma~\ref{lem:newdefinitioncounterpre} and each non-leaf node can be proved
  from its children using only those inference rules.
\end{proof}

We are now ready to prove the completeness theorem of the sequents of the form $D, \Gamma \la \Delta$, where $\Gamma$ and $\Delta$ are sets of PC-formulas and $D$ is a definition.

\begin{theorem}[Completeness for one definition in the antecedent] \label{theo:completenessforonedefinition}
Let $\Gamma$ and $\Delta$ be
sets of PC-formulas and $D$ a definition. If $\models D, \Gamma \la \Delta$, then $D, \Gamma \la \Delta$ is provable in \ps.
\end{theorem}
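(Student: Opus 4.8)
The plan is to attack the sequent $D,\Gamma\la\Delta$ with the reduction-tree machinery of Definition~\ref{def:reductiontree} and Proposition~\ref{prop:reductiontree}, so that the general PC-formulas in $\Gamma$ and $\Delta$ get peeled apart into atoms while the single definition $D$ is carried along untouched. First I would fix a reduction tree $T_S$ for $S = D,\Gamma\la\Delta$, which exists by Proposition~\ref{prop:reductiontree}(\ref{exred}). The crucial observation is that the \emph{definition introduction reduction} never fires here: it can only be applied when the succedent contains a definition, but $\Delta$ consists of PC-formulas and the logical reductions for $\neg,\land,\lor$ never manufacture a definition. Consequently the only definition ever occurring in any node is the original $D$, sitting in the antecedent, and every leaf of $T_S$ is either an axiom or a sequent of the form $D,\Pi\la\Sigma$ where $\Pi$ and $\Sigma$ are disjoint sets of atoms.

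Next I would show that every leaf is valid. Each reduction step corresponds to a logical inference rule, and by Proposition~\ref{prop:logicalcounterpre} all logical rules preserve counter-models: a counter-model of a premise is a counter-model of the conclusion. Reading this contrapositively, validity of a node forces validity of each of its children, so validity propagates from the (valid, by hypothesis) root down every branch; hence each leaf is valid.

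It remains to prove each leaf provable, after which Proposition~\ref{prop:reductiontree}(\ref{prleaf}) delivers provability of the root $D,\Gamma\la\Delta$. An axiom leaf is provable by definition. For a leaf $D,\Pi\la\Sigma$ with $\Pi,\Sigma$ disjoint sets of atoms, I would pass to $\Gamma' = \Pi\cup\neg\Sigma$: since $\Pi\cap\Sigma=\emptyset$ this is a consistent set of literals, and $\models D,\Pi\la\Sigma$ is equivalent to $\models D,\Gamma'\la\bot$. Lemma~\ref{lem:falsecomp} then gives an \ps-proof of $D,\Gamma'\la\bot$, i.e.\ of $D,\Pi,\neg\Sigma\la\bot$, and a routine reshuffling using the $\neg$ rules, the cut rule and weakening moves each $\neg q$ with $q\in\Sigma$ from the antecedent back to the succedent as $q$, producing a proof of $D,\Pi\la\Sigma$.

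The genuine work of this theorem has already been discharged in the earlier lemmas — in particular Lemma~\ref{lem:falsecomp} (hence Lemma~\ref{lem:nontotalcomp} and the non-total definition rule) handles the unsatisfiable, possibly non-total case at the leaves. The only points demanding care here are structural: verifying that the definition introduction reduction cannot trigger (so that $D$ survives intact and the leaves have exactly the shape Lemma~\ref{lem:falsecomp} expects), and the bookkeeping that turns a proof of $D,\Pi,\neg\Sigma\la\bot$ into one of $D,\Pi\la\Sigma$. Neither is deep, so I expect the main obstacle to be merely stating this leaf-level reduction cleanly rather than supplying any new semantic argument.
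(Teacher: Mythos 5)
Your proposal is correct and follows the paper's own strategy: build a reduction tree for $D, \Gamma \la \Delta$, observe that the leaves are axioms or sequents $D, \Pi \la \Sigma$ with $\Pi, \Sigma$ disjoint sets of atoms, transfer validity from the root to the leaves via counter-model preservation, discharge the leaves with the earlier lemmas, and reassemble via Proposition~\ref{prop:reductiontree}. The one place you diverge is at the leaves: the paper splits into the case $\Sigma = \emptyset$ (handled by Lemma~\ref{lem:falsecomp}) and the case $\Sigma \neq \emptyset$ (handled by Lemma~\ref{lem:arbitraryliteralcomp} applied to a defined literal in $\Pi \cup \Sigma$, plus $\neg$ and weakening rules), whereas you uniformly replace $D, \Pi \la \Sigma$ by $D, \Pi, \neg\Sigma \la \bot$ and invoke Lemma~\ref{lem:falsecomp} alone, undoing the negations afterwards with the $\neg$ rules, cut and weakening. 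Both routes work; yours is slightly more economical in that it needs only one leaf lemma and avoids having to argue that a non-empty $\Sigma$ forces a defined atom to appear in $\Pi$ or $\Sigma$ (a claim the paper asserts for its leaves but which your reduction simply never needs). It is also exactly the manoeuvre the paper itself performs one step later in Lemma~\ref{lem:comofmulti} for the multi-definition case, so nothing new is being smuggled in. Your side remarks — that the definition introduction reduction can never fire because no definition ever reaches a succedent, and that the bookkeeping from $D, \Pi, \neg\Sigma \la \bot$ back to $D, \Pi \la \Sigma$ is routine — are both accurate.
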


\begin{proof}
First, a reduction tree is constructed from the root $D, \Gamma \la \Delta$. Every leaf of the reduction tree must be an axiom or a sequent of the form $D, \Pi \la \Sigma$, where $\Pi$ and $\Sigma$ are (possibly empty) sets of propositional atoms satisfying that \begin{inlinenum}
  \item\label{(a-i)} $\Pi$ and $\Sigma$ have no atom in common, and
  \item\label{(a-ii)} when $\Sigma$ is not empty, $\Pi$ or $\Sigma$ contains at least one defined atom of $D$.
  \end{inlinenum}
By \ref{countleaf} of Proposition~\ref{prop:reductiontree}, if $\models D, \Gamma \la \Delta$, then $\models D, \Pi \la \Sigma$. Hence, if $\Sigma$ is empty, by Lemma~\ref{lem:falsecomp}, it is obtained that $D, \Pi \la \Sigma$ is provable in \ps. If $\Sigma$ is not empty, by Lemma~\ref{lem:arbitraryliteralcomp}, the $\neg$ rules and the weakening rules, $D, \Pi \la \Sigma$ is provable in \ps. Extending for every leaf $D, \Pi \la \Sigma$ the branch that ends in that leaf with the prooftree for that leaf, yields an \ps-proof for $D, \Gamma \la \Delta$.
\end{proof}

\ps remains complete for sequents of the form $D_1, \ldots, D_n, \Gamma \la \Delta$, where $\Gamma$ and $\Delta$ are
sets of PC-formulas and multiple definitions are allowed in the antecedent.

\begin{lemma}
Let $D_1, \ldots, D_n$ be definitions and $\Gamma$ an arbitrary consistent 
set of literals. If $\models D_1, \ldots, D_n, \Gamma \la \bot$, then $D_1, \ldots, D_n, \Gamma \la \bot$ is provable in \ps.
\end{lemma}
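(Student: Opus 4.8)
The plan is to reduce the multi-definition sequent to a finite family of single-definition instances, each dischargeable by the completeness lemmas already established, and then to stitch these together with the cut rule exactly as in the proof of Lemma~\ref{lem:defarcomp}.

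First I would reduce to the case in which $\Gamma$ is complete, not merely over the open atoms of one definition but over the \emph{entire} (finite) vocabulary occurring in $D_1, \ldots, D_n, \Gamma$: for every atom $P$ either $P \in \Gamma$ or $\neg P \in \Gamma$. Completing over all atoms, including the defined atoms of every $D_i$, is the crucial move, since in general the definitions may depend on one another cyclically through their rule bodies (an atom of $\defp{D_j}$ may lie in $\openp{D_i}$), and only a full completion decouples them. For each consistent complete extension $\Gamma'$ of $\Gamma$ validity is inherited, that is $\models D_1,\ldots,D_n,\Gamma' \la \bot$, because adding literals only strengthens the antecedent. If every such $D_1,\ldots,D_n,\Gamma'\la\bot$ is \ps-provable, then completing one atom at a time and applying the right $\neg$ rule and the cut rule on the two extensions $\Gamma\cup\{P\}$ and $\Gamma\cup\{\neg P\}$ yields an \ps-proof of $D_1,\ldots,D_n,\Gamma\la\bot$, just as in Lemma~\ref{lem:defarcomp} and Lemma~\ref{lem:falsecomp}.

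Next I would fix a complete consistent $\Gamma'$ and let $M$ be the unique two-valued interpretation with $M\models\bigwedge\Gamma'$. Since $\models D_1,\ldots,D_n,\Gamma'\la\bot$ and $M\models\bigwedge\Gamma'$, there must be an $i$ with $M\not\models D_i$. Writing $\Gamma'_i$ for the set of literals of $\Gamma'$ over $\openp{D_i}$, this is a complete set of open literals of $D_i$, so $\res{M}{\openp{D_i}}$ is the unique open interpretation it describes and the single-definition lemmas apply. Because $M$ violates $D_i$ yet agrees with every model of $D_i$ on the open atoms, the well-founded model of $D_i$ extending $\res{M}{\openp{D_i}}$ either is three-valued, in which case $\models D_i,\Gamma'_i\la\bot$ and so $\models D_i,\Gamma'_i\la \bar L$ for every literal $L$, or is two-valued but disagrees with $M$ on some defined atom $P$ of $D_i$, in which case $\models D_i,\Gamma'_i\la \bar L$ for the literal $L\in\Gamma'$ asserted about $P$ (here $\bar L$ denotes the complementary literal). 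Either way there is a defined literal $\bar L$ of $D_i$ with $L\in\Gamma'$ and $\models D_i,\Gamma'_i\la \bar L$; by Lemma~\ref{lem:defliteralcomp} the sequent $D_i,\Gamma'_i\la \bar L$ is \ps-provable. Weakening restores the remaining definitions and literals, and since $L\in\Gamma'$ the sequent $\bar L, D_1,\ldots,D_n,\Gamma'\la$ is provable from an axiom by a single $\neg$ step, so a cut on $\bar L$ gives $D_1,\ldots,D_n,\Gamma'\la\bot$ as in Lemma~\ref{lem:arbitraryliteralcomp}. In the three-valued subcase I may instead invoke Lemma~\ref{lem:nontotalcomp} directly on $D_i,\Gamma'_i\la\bot$ and weaken.

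I expect the main obstacle to be conceptual rather than computational: recognizing that one must complete $\Gamma$ over the whole vocabulary rather than over the open atoms of a single definition, because the mutual dependencies among $D_1,\ldots,D_n$ otherwise leave the defined atoms underdetermined and none of the single-definition lemmas applies. Once every atom is fixed, each definition reduces to an independent local test, the earlier machinery discharges a chosen failing definition, and the surrounding cut bookkeeping is routine.
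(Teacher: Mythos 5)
Your proof is correct, and it takes a recognizably different route from the paper's at the one place where the argument has real content. Both proofs share the same outer skeleton: extend $\Gamma$ to a family of completed sets $\Gamma'$, discharge each $D_1,\ldots,D_n,\Gamma'\la\bot$ separately, and recombine with the right $\neg$ rule and cuts. The difference is in what ``completed'' means. The paper completes $\Gamma$ only over $\bigcup_i \openp{D_i}$; as a result a completed $\Gamma'$ does not pin down a unique interpretation, and the paper must run a three-way case analysis on the restriction $\Gamma''$ to open literals: either some single $D_i$ is already unsatisfiable with $\Gamma''$ (Lemma~\ref{lem:falsecomp}), or two definitions $D_i,D_j$ each have a two-valued well-founded model over $\Gamma''$ but these disagree on a shared defined literal $L$ (two applications of Lemma~\ref{lem:arbitraryliteralcomp} plus a cut on $L$), or the conjunction over $\Gamma''$ is satisfiable and the contradiction comes from a defined literal $L'$ that was already in $\Gamma'$. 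You instead complete over the \emph{entire} vocabulary, so each $\Gamma'$ has a unique two-valued model $M$, a single failing $D_i$ can be blamed, and the discharge reduces cleanly to Lemma~\ref{lem:nontotalcomp} or Lemma~\ref{lem:defliteralcomp} applied to $D_i$ and $\res{\Gamma'}{\openp{D_i}}$, followed by weakening, an axiom, and one cut. This buys you a simpler case analysis --- in particular you never need the paper's subcase in which two definitions conflict on a common defined atom --- at the cost of exponentially more branches in the final cut-stitching (over all undetermined atoms rather than only the undetermined open ones), which is immaterial for a completeness argument. Your diagnosis that full completion is what ``decouples'' mutually dependent definitions is exactly the point the paper's proof handles by other means.
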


\begin{proof}
Let $\Gamma'$ be an arbitrary extension of $\Gamma$ such that for every $D_i \in \{D_1, \ldots, D_n \}$ and every open atom $Q$ of $D_i$, either $Q \in \Gamma'$ or $\neg Q \in \Gamma'$. First, we want to show that $D_1, \ldots, D_n, \Gamma' \la \bot$ is provable in \ps. It holds that $\models D_1, \ldots, D_n, \Gamma' \la \bot$ because $\models D_1, \ldots, D_n, \Gamma \la \bot$. Consider the
set $\Gamma''$ of all open literals of all definitions $D_1, \ldots, D_n$ in $\Gamma'$. We distinguish between the case where $\models D_1, \ldots, D_n, \Gamma'' \la \bot$ and the case where $\not \models D_1, \ldots, D_n, \Gamma'' \la \bot$.
\begin{itemize}
\item In the first case where $\models D_1, \ldots, D_n, \Gamma'' \la \bot$, we distinguish between the subcase where there exists at least one $D_i \in \{D_1, \ldots, D_n \}$ such that $\models D_i, \Gamma'' \la \bot$ and the subcase where for every $D_i \in \{D_1, \ldots, D_n \}$ it holds that $\not \models D_i, \Gamma'' \la \bot$.
\begin{itemize}
\item In the first subcase, $\models D_i, \Gamma'' \la \bot$, hence by Lemma~\ref{lem:falsecomp}, $D_i, \Gamma'' \la \bot$ is provable in \ps. Then by using the left weakening rule,
we conclude that $D_1, \ldots, D_n, \Gamma' \la \bot$ is provable in \ps.

\item In the other subcase, it holds that $\not \models D_i, \Gamma'' \la \bot$ for every $D_i \in \{D_1, \ldots, D_n \}$. Thus, for every $D_i \in \{D_1, \ldots, D_n \}$, there exists a unique two-valued well-founded model $I_i$ of $D_i$ such that $I_i \models D_i$ and $I_i \models \bigwedge \Gamma''$. Because $D_1 \land \ldots \land D_n \land \bigwedge \Gamma''$ is unsatisfiable, for some $I_i$ and $I_j$ such that $i \not = j$
and for some defined literal $L$, it can be implied that $I_i \models L$ and $I_j \models \neg L$. Thus, we have that $\models D_i, \Gamma'' \la L$ and $\models D_j, \Gamma'' \la \neg L$. Therefore, by Lemma~\ref{lem:arbitraryliteralcomp}, it is concluded that both $D_i, \Gamma'' \la L$ and $D_j, \Gamma'' \la \neg L$ are provable in \ps. Then we can use the left weakening rule, the left $\neg$ rule and the cut rule
to show that $D_1, \ldots, D_n, \Gamma' \la \bot$ is provable in \ps.
\end{itemize}
\item In the other case where $\not \models D_1, \ldots, D_n, \Gamma'' \la \bot$, hence there exists a unique two-valued interpretation $I$ such that $I \models D_1 \land \ldots \land D_n \land \bigwedge \Gamma''$. Because $\not \models D_1, \ldots, D_n, \Gamma'' \la \bot$, for each $D_i \in \{D_1, \ldots, D_n \}$, it holds that $\not \models D_i, \Gamma'' \la \bot$ and hence, there exists a unique two-valued well-founded model $I_i$ of $D_i$ such that $I_i \models D_i$ and $I_i \models \bigwedge \Gamma''$. Therefore, for each $D_i$ and each defined atom $P \in \defp{D_i}$, $P^{I_i} = P^I$. Since $D_1 \land \ldots \land D_n \land \bigwedge \Gamma''$ is satisfiable but $D_1 \land \ldots \land D_n \land \bigwedge \Gamma'$ is unsatisfiable, it can be implied that for some defined literal $L'$ in $\Gamma'$, $\models D_1, \ldots, D_n, \Gamma'' \la \neg L'$. Assume that $L'$ is a defined literal of $D_i$. Because ${L'}^{I_i} = {L'}^I = \Fa$, we have that $\models D_i, \Gamma'' \la \neg L'$. By Lemma~\ref{lem:arbitraryliteralcomp} and the left weakening rule,
$D_i, \Gamma' \la \neg L'$ is provable in \ps. It is obvious that $D_i, \Gamma' \la L'$ is an axiom because $L'$ is a literal in $\Gamma'$. Then we can use the left weakening rule, the left $\neg$ rule and the cut rule
to show that $ D_1, \ldots, D_n, \Gamma' \la \bot$ is provable in \ps.
\end{itemize}
Given that the sequents $D_1, \ldots, D_n, \Gamma' \la \bot$ are provable in \ps for all extensions $\Gamma'$ of $\Gamma$, by using
the right $\neg$ rule and the cut rule on all $D_1, \ldots, D_n, \Gamma' \la \bot$, we can construct an \ps-proof for $D_1, \ldots, D_n, \Gamma \la \bot$.
\end{proof}

\begin{lemma}\label{lem:comofmulti}
Let $D_1, \ldots, D_n$ be definitions and let $\Gamma$ and $\Delta$ be
sets of atoms. If $\models D_1, \ldots, D_n, \Gamma \la \Delta$, then $D_1, \ldots, D_n, \Gamma \la \Delta$ is provable in \ps.
\end{lemma}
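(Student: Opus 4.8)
The plan is to reduce this statement to the preceding lemma, which already establishes \ps-provability of sequents of the form $D_1, \ldots, D_n, \Gamma \la \bot$ whenever $\Gamma$ is a \emph{consistent} set of literals and the sequent is valid. Since $\Gamma$ and $\Delta$ here consist only of atoms, there are no propositional connectives left to decompose, so the entire task is to transport the atoms of $\Delta$ from the succedent into the antecedent (as negative literals), invoke the preceding lemma, and then restore them on the right.

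First I would rewrite the hypothesis: $\models D_1, \ldots, D_n, \Gamma \la \Delta$ is equivalent to $\models D_1, \ldots, D_n, \Gamma, \neg \Delta \la \bot$, because $\neg \bigvee \Delta$ is logically $\bigwedge \neg \Delta$. Before applying the preceding lemma, however, I must check that $\Gamma \cup \neg \Delta$ is a consistent set of literals, and this forces a case split. If $\Gamma \cap \Delta \neq \emptyset$, say $A \in \Gamma \cap \Delta$, then $D_1, \ldots, D_n, \Gamma \la \Delta$ already has $A$ in both antecedent and succedent, so it is an axiom and is trivially provable. Otherwise $\Gamma \cap \Delta = \emptyset$; as $\Gamma$ contains only atoms (positive literals) and $\neg \Delta$ only negated atoms, no atom occurs with both polarities, so $\Gamma \cup \neg \Delta$ is consistent and the preceding lemma applies to $\models D_1, \ldots, D_n, \Gamma, \neg \Delta \la \bot$, yielding an \ps-proof of $D_1, \ldots, D_n, \Gamma, \neg \Delta \la \bot$.

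In this consistent case I would write $\Sigma = \{D_1, \ldots, D_n\} \cup \Gamma$ and $\Delta = \{A_1, \ldots, A_m\}$, and reconstruct $\Sigma \la \Delta$ as follows. Repeated right weakening turns the provable sequent $\neg \Delta, \Sigma \la \bot$ into $\neg \Delta, \Sigma \la \Delta$. Then I eliminate the literals $\neg A_i$ from the antecedent one at a time by the cut rule: to remove $\neg A_1$ from $\neg A_1, \neg A_2, \ldots, \neg A_m, \Sigma \la \Delta$, observe that $A_1, \neg A_2, \ldots, \neg A_m, \Sigma \la \Delta$ is an axiom (since $A_1 \in \Delta$), so the right $\neg$ rule gives $\neg A_2, \ldots, \neg A_m, \Sigma \la \Delta, \neg A_1$; cutting this against the weakened sequent on the formula $\neg A_1$ deletes $\neg A_1$ from the antecedent. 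Iterating over $A_2, \ldots, A_m$ leaves exactly $\Sigma \la \Delta$, the desired sequent.

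Once the preceding lemma is available, the argument is essentially bookkeeping. The one point demanding care, and the main obstacle, is the consistency of $\Gamma \cup \neg \Delta$: this is exactly why the case $\Gamma \cap \Delta \neq \emptyset$ must be isolated, for there the preceding lemma is inapplicable but the sequent is rescued by being an axiom. The cut-based restoration of the succedent is routine, but it must be performed in the right order so that each sequent used as the left premise of a cut is genuinely an axiom, namely that it still carries the atom $A_i$ on both sides at the moment $\neg A_i$ is eliminated.
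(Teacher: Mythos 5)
Your proposal is correct and follows exactly the paper's own argument: split off the axiom case $\Gamma \cap \Delta \neq \emptyset$, observe that otherwise $\Gamma \cup \neg\Delta$ is a consistent set of literals so the preceding lemma yields a proof of $D_1, \ldots, D_n, \Gamma, \neg\Delta \la \bot$, and then recover $D_1, \ldots, D_n, \Gamma \la \Delta$ via the $\neg$ rules and cut. Your explicit bookkeeping of the weakening and the per-atom cuts merely fills in details the paper leaves implicit.
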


\begin{proof}
The proof is trivial if $D_1, \ldots, D_n, \Gamma \la \Delta$ is an axiom, hence we assume that $D_1, \ldots, D_n, \Gamma \la \Delta$ is not an axiom,
i.e. $\Gamma \cap \Delta = \emptyset$. Because $\Gamma, \neg \Delta$ is a consistent
set of literals and $\models D_1, \ldots, D_n, \Gamma, \neg \Delta \la \bot$, by the previous lemma, we have that $D_1, \ldots, D_n, \Gamma, \neg \Delta \la \bot$ is provable in \ps. Then by
the $\neg$ rules and the cut rule, we can conclude that $ D_1, \ldots, D_n, \Gamma \la \Delta$ is provable in \ps.
\end{proof}

The following completeness theorem of the sequents with multiple definitions in the antecedent is an immediate consequence of Lemma~\ref{lem:comofmulti} and the reduction tree for sequents.

\begin{theorem}[Completeness for multiple definitions in the antecedent] \label{theo:completenessformultidefinitions}
Let $\Gamma$ and $\Delta$ be 
sets of PC-formulas and $D_1, \ldots, D_n$ definitions. If $\models D_1, \ldots, D_n, \Gamma \la \Delta$, then $D_1, \ldots, D_n, \Gamma \la \Delta$ is provable in \ps.
\end{theorem}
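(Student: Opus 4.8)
The plan is to mirror the structure of the single-definition completeness result (Theorem~\ref{theo:completenessforonedefinition}), using the reduction-tree machinery to strip the arbitrary PC-formulas down to atoms and then invoking Lemma~\ref{lem:comofmulti} as the base case. First I would construct a reduction tree $T_S$ for the root sequent $S = D_1, \ldots, D_n, \Gamma \la \Delta$, which exists by \ref{exred} of Proposition~\ref{prop:reductiontree}. The crucial structural observation is that, since $\Gamma$ and $\Delta$ consist solely of PC-formulas and the definitions $D_1, \ldots, D_n$ all sit in the antecedent, none of the reductions disturbs the $D_i$: the logical reductions only decompose the PC-formulas of $\Gamma$ and $\Delta$, while the definition introduction reduction fires exclusively on a definition occurring in the \emph{succedent}, of which there are none. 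Hence every leaf of $T_S$ is either an axiom or a sequent of the form $D_1, \ldots, D_n, \Pi \la \Sigma$, where $\Pi$ and $\Sigma$ are sets of atoms with $\Pi \cap \Sigma = \emptyset$.

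Next I would show that each leaf is provable in \ps. For leaves that are axioms this is immediate. For a non-axiom leaf $D_1, \ldots, D_n, \Pi \la \Sigma$ with $\Pi, \Sigma$ sets of atoms, I would appeal to \ref{countleaf} of Proposition~\ref{prop:reductiontree}: because any counter-model for a leaf is also a counter-model for the root, validity of $\models D_1, \ldots, D_n, \Gamma \la \Delta$ forces $\models D_1, \ldots, D_n, \Pi \la \Sigma$. This is precisely the hypothesis of Lemma~\ref{lem:comofmulti}, whose conclusion then supplies an \ps-proof of the leaf.

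Finally, having established that every leaf of $T_S$ is \ps-provable, I would apply \ref{prleaf} of Proposition~\ref{prop:reductiontree} to lift provability from the leaves to the root, concretely by grafting each leaf's proof tree onto the branch ending in that leaf. This yields an \ps-proof of $D_1, \ldots, D_n, \Gamma \la \Delta$, completing the argument.

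The genuinely difficult work is already discharged in the preceding lemmas---above all the non-total handling of Lemma~\ref{lem:nontotalcomp} (built on the non-total definition rule) and its consolidation through Lemma~\ref{lem:comofmulti}---so at this stage there is no substantial obstacle left. The only point that demands real care is the structural claim that the reduction process leaves the antecedent definitions untouched and terminates with leaves whose non-definitional part is purely atomic; once that is verified, the theorem follows as a routine assembly of Proposition~\ref{prop:reductiontree} and Lemma~\ref{lem:comofmulti}.
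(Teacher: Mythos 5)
Your proposal is correct and follows exactly the route the paper intends: the paper states this theorem as ``an immediate consequence of Lemma~\ref{lem:comofmulti} and the reduction tree for sequents,'' i.e., build the reduction tree, note that every leaf is an axiom or has the form $D_1,\ldots,D_n,\Pi\la\Sigma$ with $\Pi,\Sigma$ sets of atoms, transfer validity to the leaves via \ref{countleaf} of Proposition~\ref{prop:reductiontree}, discharge them with Lemma~\ref{lem:comofmulti}, and lift provability back to the root via \ref{prleaf}. Your added observation that the definition introduction reduction never fires (since no definition occurs in the succedent) is a correct and worthwhile detail that the paper leaves implicit.
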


Then we have the following main completeness theorem.
\begin{theorem}[Completeness]\label{th:comp}
If $\models \Gamma \la \Delta$ and all definitions occurring either negatively in $\Gamma$ or positively in $\Delta$ are total, then $\Gamma \la \Delta$ is provable in \ps.
\end{theorem}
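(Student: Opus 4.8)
The plan is to reduce the general case to the completeness result already established for sequents whose only definitions sit in the antecedent, namely Lemma~\ref{lem:comofmulti} (equivalently Theorem~\ref{theo:completenessformultidefinitions}), by means of the reduction-tree machinery of Definition~\ref{def:reductiontree} and Proposition~\ref{prop:reductiontree}. Concretely, starting from the root $\Gamma \la \Delta$, I would build a reduction tree $T_S$: the logical reductions peel off the connectives $\neg, \land, \lor$ of the PC(ID)-formulas in $\Gamma$ and $\Delta$, while the definition introduction reduction is applied whenever a definition surfaces as a top-level formula in a succedent. Since definitions cannot be nested inside definitions (rule bodies are PC-formulas), this process is finite and terminates in leaves that are either axioms or sequents of the form $D_1, \ldots, D_n, \Pi \la \Sigma$, with $\Pi$ and $\Sigma$ sets of atoms and all surviving definitions in the antecedent.

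The key point, and where the totality hypothesis is used, is to verify that the reduction process never gets stuck. Each negation reduction flips a formula from one side of the sequent to the other, so a definition occurrence carries its polarity with it: a definition occurring positively in $\Delta$ or negatively in $\Gamma$ is exactly one that, after the logical reductions, surfaces as a top-level formula in a succedent, whereas one occurring positively in $\Gamma$ or negatively in $\Delta$ surfaces in an antecedent. By hypothesis every definition of the first kind is total, so the definition introduction reduction of Definition~\ref{def:reductiontree} --- which is only licensed for total definitions in the succedent --- is always applicable when needed. Its children replace the succedent definition $D$ by a renamed copy $D'$ in the antecedent together with the PC-formulas $P'_i \equiv P_i$ in the succedent; these equivalences contain no definitions and decompose to atoms, and $D'$ then remains in the antecedent. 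Hence no non-total definition is ever demanded in a succedent, and every definition left at a leaf lies in the antecedent, as claimed.

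It remains to discharge the leaves. Because each reduction corresponds either to a logical inference rule or to the definition introduction rule, and all of these preserve counter-models (Proposition~\ref{prop:logicalcounterpre} and Lemma~\ref{lem:newdefinitioncounterpre}), any counter-model of a leaf propagates downward to a counter-model of the root. From $\models \Gamma \la \Delta$ it therefore follows that no leaf admits a counter-model, i.e.\ every leaf is valid. An axiom leaf is provable outright, and a valid leaf $D_1, \ldots, D_n, \Pi \la \Sigma$ with $\Pi, \Sigma$ sets of atoms is provable in \ps by Lemma~\ref{lem:comofmulti}. Finally, Proposition~\ref{prop:reductiontree} lets me propagate provability from the leaves back up the tree, so that provability of all leaves yields provability of the root $\Gamma \la \Delta$ in \ps.

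The main obstacle is the polarity bookkeeping of the second paragraph: one must be sure that the totality assumption --- phrased in terms of positive and negative occurrences in $\Gamma$ and $\Delta$ --- lines up precisely with the side on which each definition is eventually presented to the definition rules, so that the definition introduction reduction (the only rule unsound for non-total definitions) is invoked only on total definitions while every remaining definition can be handed to Lemma~\ref{lem:comofmulti}. Once this invariant is secured, the remainder is a routine assembly of the reduction-tree properties already proved.
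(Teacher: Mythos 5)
Your proposal is correct and follows essentially the same route as the paper's proof: build the reduction tree, use counter-model preservation to conclude all leaves are valid, discharge the leaves via Lemma~\ref{lem:comofmulti} (Theorem~\ref{theo:completenessformultidefinitions}), and propagate provability back to the root via Proposition~\ref{prop:reductiontree}. Your second paragraph on polarity bookkeeping actually makes explicit a point the paper leaves implicit, namely why the totality hypothesis guarantees that every definition surfacing in a succedent is one for which the definition introduction reduction is licensed.
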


\begin{proof}
Let $\Gamma \la \Delta$ be a valid sequent such that any definition which occurs either negatively in $\Gamma$ or positively in $\Delta$ is total and let $T_S$ be a reduction tree with root $\Gamma \la \Delta$. Then by \ref{countleaf} of Proposition~\ref{prop:reductiontree}, all leaves of $T_S$ are valid. Since all leaves are of the form $D_1, \ldots, D_n, \Pi \la \Sigma$ where $\Pi$ and $\Sigma$ are
sets of atoms and $D_1, \ldots, D_n$ are definitions, it follows from Theorem~\ref{theo:completenessformultidefinitions} that they are provable in \ps. Hence, by \ref{prleaf} of Proposition~\ref{prop:reductiontree}, $\Gamma \la \Delta$ is provable in \ps.
\end{proof}

\section{Complexity results} \label{sec:complexity}
In this section, we provide some complexity results for PC(ID), which may give some helpful insight into the reasoning problems in PC(ID).

\begin{proposition} \label{prop:SAT(ID)}
Satisfiability problem in PC(ID) is NP-complete.
\end{proposition}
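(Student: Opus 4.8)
The plan is to prove the two directions of NP-completeness separately: NP-hardness by a trivial reduction and membership in NP by exhibiting a polynomially checkable certificate.

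For NP-hardness, I would observe that every PC-formula is a PC(ID)-formula and that, restricted to formulas without definitions, the PC(ID) truth function coincides with the standard two-valued semantics of propositional logic. Hence a PC-formula $\varphi$ is satisfiable in the ordinary sense exactly when it is satisfiable in PC(ID). Since propositional SAT is NP-complete, the identity map is a polynomial-time reduction from SAT to PC(ID)-satisfiability, which gives NP-hardness.

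For membership in NP, the certificate for a satisfiable PC(ID)-formula $\varphi$ over a vocabulary $\voc$ is a two-valued $\voc$-interpretation $I$, whose size is linear in $|\voc|$. It then remains to verify $I \models \varphi$, i.e. $\varphi^I = \Tr$, in polynomial time. Following the inductive definition of the truth function, the connective cases $\neg,\land,\lor$ are evaluated bottom-up over the polynomially many subformulas in the usual way. The only nonstandard base case is a definition $\D$ occurring in $\varphi$: here one must decide whether $\D^I = \Tr$, which by definition means checking that $I$ is two-valued (immediate, since $I$ is guessed two-valued) and that $I = (\res{I}{\openp{\D}})^{\D}$.

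The crux is therefore to compute the well-founded partial interpretation $(\res{I}{\openp{\D}})^{\D}$ in polynomial time and to compare it with $I$. I would argue as follows. Fix $I_O = \res{I}{\openp{\D}}$ and run a well-founded induction from $I^0$, which assigns $I_O$ to the open atoms and $\Un$ to every atom of $\defp{\D}$. Each application of rule~\ref{itwell:1} or rule~\ref{itwell:2} makes at least one previously underived defined atom definite, and no atom ever loses precision; hence a terminal induction is reached after at most $|\defp{\D}|$ steps. Each step is polynomial: since definitions cannot be nested, every body $\varphi_P$ is an ordinary PC-formula whose three-valued value under the current interpretation is computable in linear time, so rule~\ref{itwell:1} only requires evaluating the polynomially many rule bodies, while the false-deriving rule~\ref{itwell:2} requires computing a largest unfounded set, obtainable in polynomial time by the standard (alternating-fixpoint) computation of the well-founded model of a propositional definition. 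Thus $(\res{I}{\openp{\D}})^{\D}$ is computable in time polynomial in $|\D|$, and deciding $\D^I = \Tr$ reduces to a polynomial-time equality test against $I$. Performing this for every definition occurring in $\varphi$ and combining it with the connective evaluation yields a polynomial-time verification of $I \models \varphi$, establishing membership in NP. I expect the main obstacle to be exactly the polynomial-time computation of the well-founded model under rule~\ref{itwell:2}, which I would justify by appeal to the well-known polynomial-time computability of the well-founded semantics of propositional programs.
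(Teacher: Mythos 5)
Your proposal is correct and follows essentially the same route as the paper: hardness via the trivial embedding of propositional SAT, and membership by guessing a two-valued interpretation and verifying it using the polynomial-time computability of propositional well-founded models (Van Gelder's algorithm). You merely spell out in more detail the verification step that the paper leaves implicit.
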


\begin{proof}
(Membership) Propositional well-founded models can be computed in polynomial time, e.g. using the algorithm of Van Gelder in~\cite{VanGelder91}. It is easy to define an algorithm that uses this well-founded semantics algorithm and finds models that satisfy PC(ID) theories in polynomial time on a non-deterministic turing machine.

(Hardness) Any satisfiability problem for propositional logic is trivially also a satisfiability problem for PC(ID).
\end{proof}

Recall Definition~\ref{def:totality} of totality of a definition $D$ with respect to a theory $T$: for each $I \models \bigwedge T$, the well-founded model of $D$ extending $\res{I}{\openp{D}}$ must be two-valued. Deciding totality is an interesting problem, not least because we cannot even formulate an inference rule to prove totality of a propositional inductive definition in the context of a certain set of PC(ID)-formulas.

\begin{proposition} \label{prop:totalitycomplexity}
Deciding whether a given propositional inductive definition is total with respect to a given propositional theory is co-NP-complete problem.
\end{proposition}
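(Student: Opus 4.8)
The plan is to establish the two directions of co-NP-completeness separately: membership in co-NP and co-NP-hardness. For membership, I would argue that the complementary problem---deciding \emph{non}-totality of $\D$ with respect to $T$---lies in NP. By Definition~\ref{def:totality}, $\D$ is non-total in the context of $T$ precisely when there is some model $M$ of $T$ for which the well-founded partial interpretation $(\res{M}{\openp{\D}})^{\D}$ is not two-valued. The plan is to guess such a witness: nondeterministically choose a two-valued $\voc$-interpretation $M$, verify in polynomial time that $M \models T$, then compute $(\res{M}{\openp{\D}})^{\D}$ in polynomial time using Van Gelder's algorithm~\cite{VanGelder91}, and finally check whether some defined atom receives the value $\Un$. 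If all checks succeed, accept. This is a polynomial-time nondeterministic procedure, so non-totality is in NP and hence totality is in co-NP.

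For hardness I would reduce SAT to non-totality; since non-totality is the complement of totality and SAT is NP-complete, this shows totality is co-NP-hard. Given a propositional formula $\phi$ over atoms $x_1, \ldots, x_k$, I would build the single-rule definition
\[ \D_\phi = \defin{P \rul \neg P \land \phi} \]
over the vocabulary $\{P, x_1, \ldots, x_k\}$ with $\defp{\D_\phi} = \{P\}$ and $\openp{\D_\phi} = \{x_1, \ldots, x_k\}$, and take $T = \emptyset$. Since every two-valued interpretation is a model of $T$, totality of $\D_\phi$ reduces to examining, for every two-valued interpretation $I_O$ of the open atoms, the well-founded model of $\D_\phi$ extending $I_O$. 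The crux of the argument is the behaviour of this one rule: when $\phi^{I_O} = \Fa$ the body $\neg P \land \phi$ is already false at the stage $P^{I^0} = \Un$, so derivation rule~\ref{itwell:3} fires and forces $P = \Fa$, giving a two-valued limit; when $\phi^{I_O} = \Tr$ the body collapses to $\neg P$, and the recursion over negation blocks both derivation rule~\ref{itwell:1} (the body never reaches $\Tr$) and derivation rule~\ref{itwell:2} (the singleton $\{P\}$ is not unfounded, since $(\neg P)^{I^0[\{P\}/\Fa]} = \Tr \neq \Fa$), leaving $P = \Un$. Hence $\D_\phi$ is non-total with respect to $T$ if and only if $\phi$ is satisfiable, and the reduction is clearly polynomial.

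The main obstacle I anticipate is the verification of the hardness gadget: one must check carefully, against the precise derivation rules~\ref{itwell:1}--\ref{itwell:3} of a well-founded induction, that the guarded recursion-over-negation $P \rul \neg P \land \phi$ yields an unknown value exactly on the satisfying assignments of $\phi$ and a two-valued value elsewhere. A secondary point to get right is that totality is quantified over models $M$ of $T$ restricted to $\openp{\D}$; with $T = \emptyset$ these restrictions range over all two-valued interpretations of the open atoms, which is exactly what the gadget analysis needs. Both the membership and hardness arguments otherwise rely only on the polynomial-time computability of propositional well-founded models, already invoked in Proposition~\ref{prop:SAT(ID)}.
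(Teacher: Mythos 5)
Your proposal is correct and follows essentially the same route as the paper: the same certificate-based co-NP membership argument (guess a model of $T$ and compute the well-founded model in polynomial time) and the same hardness gadget $\defin{P \rul \neg P \land \phi}$ with the empty theory, reducing satisfiability of $\phi$ to non-totality. The only difference is that you spell out the verification of the gadget against the derivation rules, which the paper leaves implicit.
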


\begin{proof}
(Membership) Let $D$ be a propositional inductive definition, $T$ a propositional theory. Any interpretation $I$ such that $I \models \bigwedge T$ and the well-founded model of $D$ extending $\res{I}{\openp{D}}$ is not two-valued, is a certificate for the non-totality of $D$ with respect to $T$. Both checking whether $I \models \bigwedge T$ and whether the well-founded model of $D$ extending $\res{I}{\openp{D}}$ is two-valued can be done in polynomial time.

(Hardness) Consider the definition $D = \defin{ P \rul \neg P \land T}$. $D$ is total with respect to the empty theory if and only if $T$ is unsatisfiable. Thus we have found an instance of our decision problem that is equivalent to a co-NP-hard decision problem, namely unsatisfiability problem for propositional logic.
\end{proof}

\section{Conclusions, related and further work}

We presented a deductive system for the propositional fragment of FO(ID) which extends the sequent calculus for propositional logic. The main technical results are the soundness and completeness theorems of \ps. We also provide some complexity results for PC(ID).

Related work is provided by Hagiya and Sakurai in \cite{Hagiya84}. They proposed to interpret a (stratified) logic program as iterated inductive definitions of Martin-L\"of \cite{MartinLoef71} and developed a proof theory which is sound with respect to the perfect model, and hence, the well-founded semantics of logic programming. A formal proof system based on tableau methods for analyzing computation for Answer Set Programming (ASP) was given as well by Gebser and Schaub~\cite{Gebser06}. As shown in ~\cite{jelia/MarienGD04}, ASP is closely related to FO(ID). The approach presented in~\cite{Gebser06} furnishes declarative and fine-grained instruments for characterizing operations
as well as strategies of ASP-solvers and provides a uniform proof-theoretic framework for analyzing
and comparing different algorithms, which is the first of its kind for ASP.

The first topic for future work, as  mentioned in Section \ref{sec:intro}, is the development and implementation of a proof checker for \minisatid. This would require more study on resolution-based inference rules since \minisatid is basically an adaption of the DPLL-algorithm for SAT~\cite{DavisP60,DavisLL62}.

On the theoretical level, we plan to develop proof systems and decidable fragments of FO(ID). As mentioned in Section~\ref{sec:intro}, FO(ID) is not even semi-decidable and thus, a sound and complete proof system for FO(ID) does not exist. Therefore, we hope to build useful proof systems for FO(ID) that can solve a broad class of problems and investigate subclasses of FO(ID) for which they are decidable.

\section{Funding}
This work was supported by FWO-Vlaanderen and GOA/2003/08.
Johan Wittocx is a Research Assistant of the Fund for Scientific Research-Flanders (Belgium) (FWO-Vlaanderen).

\bibliographystyle{plain}
\bibliography{krr1}
\end{document}